\newlength{\dinwidth}
\newlength{\dinmargin}
\numberwithin{equation}{section}
\renewcommand{\emptyset}{\varnothing}
\newcommand{\fA}{\mathfrak{A}}\newcommand{\cA}{\mathcal{A}}
\newcommand{\fB}{\mathfrak{B}}\newcommand{\cB}{\mathcal{B}}
\newcommand{\cD}{\mathcal{D}}
\newcommand{\cE}{\mathcal{E}}
\newcommand{\fF}{\mathfrak{F}}\newcommand{\sF}{\mathscr{F}}
\newcommand{\fh}{\mathfrak{h}}
\newcommand{\cH}{\mathcal{H}}
\newcommand{\cL}{\mathcal{L}}
\newcommand{\cO}{\mathcal{O}}
\newcommand{\dR}{\mathds{R}}
\newcommand{\fS}{\mathfrak{S}}\newcommand{\cS}{\mathcal{S}}
\newcommand{\cU}{\mathcal{U}}
\newcommand{\cW}{\mathcal{W}}
\newcommand{\CC}{\mathbb{C}}
\newcommand{\ZZ}{\mathbb{Z}}
\newcommand{\TT}{\mathbb{T}}
\newcommand{\RR}{\mathbb{R}}
\newcommand{\NN}{\mathbb{N}}
\newcommand{\vep}{\varepsilon}
\newcommand{\ltwo}{\ell^{2}}
\DeclareMathOperator{\Ad}{Ad}
\DeclareMathOperator{\id}{id}
\DeclareMathOperator{\Tr}{Tr}
\DeclareMathOperator{\vol}{vol}
\DeclareMathOperator*{\sotlim}{SOT-lim}
\DeclareMathOperator{\supp}{supp}
\DeclareMathOperator{\CAR}{CAR}
\def\Undertilde#1{\mathord{\vtop{\ialign{##\crcr
$\hfil\displaystyle{#1}\hfil$\crcr\noalign{\kern1.5pt\nointerlineskip}
$\hfil\widetilde{}\hfil$\crcr\noalign{\kern1.5pt}}}}}
\def\S2{S^{1(2)}}
\def\sl2{{{\rm SL}(2,\RR)}}
\def\psl2{{{\rm PSL}(2,\RR)}}
\def\u1{{{\rm V}(1)}}
\def\su2{{{\rm SV}(2)}}
\def\so3{{{\rm SO}(3)}}
\newtheorem{theorem}{Theorem } [section]
\newtheorem{proposition}[theorem]{Proposition}
\newtheorem{lemma}[theorem]{Lemma}
\newtheorem{definition}[theorem]{Definition}
\newtheorem{corollary}[theorem]{Corollary}
\newtheorem{remark}[theorem]{Remark}
\newtheorem{example}[theorem]{Example}
\newtheorem{criterion}[theorem]{Criterion}
\newtheorem{conjecture}{Conjecture}
\newtheorem{assumption}{Assumption}
\newcommand{\bea}{\begin{assumption}}
	\newcommand{\eea}{\end{assumption}}
\newcommand{\beco}{\begin{conjecture} }
	\newcommand{\eeco}{\end{conjecture} }
\newcommand{\beq}{\begin{equation}}
	\newcommand{\eeq}{\end{equation}}
\newcommand{\beqa}{\begin{eqnarray}}
	\newcommand{\eeqa}{\end{eqnarray}}
\newcommand{\ben}{\begin{arabicenumerate}}
	\newcommand{\een}{\end{arabicenumerate}}
\newcommand{\bex}{\begin{example}}
	\newcommand{\eex}{\end{example}}
\newcommand{\ber}{\begin{remark}}
	\newcommand{\eer}{\end{remark}}
\newcommand{\bec}{\begin{corollary}}
	\newcommand{\eec}{\end{corollary}}
\newcommand{\bep}{\begin{proposition}}
	\newcommand{\eep}{\end{proposition}}
\newcommand{\becr}{\begin{criterion}}
	\newcommand{\eecr}{\end{criterion}}
\begin{document}

\tikzset{->-/.style={decoration={
  markings,
  mark=at position #1 with {\arrow{latex}}},postaction={decorate}}}

\title{
Scaling limits of lattice quantum fields by wavelets
}
\author{Vincenzo Morinelli${}^{1,}$\footnote{\tt \, morinelli@math.fau.de; ${}^\bigtriangleup$ morsella@mat.uniroma2.it; \newline ${}^\circ$ alexander.stottmeister@itp.uni-hannover.de; ${}^\bullet$ hoyt@mat.uniroma2.it}, Gerardo Morsella$^{{}2,\bigtriangleup}$, Alexander Stottmeister${}^{3, \circ}$, Yoh Tanimoto${}^{2, \bullet}$}
\date{
\small{${}^{1}$ Department of Mathematics, FAU Erlangen-N\"urnberg,\\ Cauerstra\ss e 11, 91058 Erlangen, Germany\smallskip\\
${}^{2}$ Dipartimento di Matematica, Universit\`a di Roma Tor Vergata\\
Via della Ricerca Scientifica 1, I-00133 Roma, Italy\smallskip \\
${}^3$ Institut f\"ur Theoretische Physik, Leibniz Universit\"at Hannover
\\Appelstra\ss e 2, 30167 Hannover, Germany} \\[0.5cm]
\today}

\maketitle

\begin{abstract}
We present a rigorous renormalization group scheme for lattice quantum field theories in terms of operator algebras.
The renormalization group is considered as an inductive system of scaling maps between lattice field algebras.
We construct scaling maps for scalar lattice fields using Daubechies' wavelets, and show that
the inductive limit of free lattice ground states exists and the limit state extends to the familiar massive continuum free field,
with the continuum action of spacetime translations.
In particular, lattice fields are identified with the continuum field smeared with Daubechies' scaling functions.
We compare our scaling maps with other renormalization schemes and their features, such as the momentum shell method or block-spin transformations.
\end{abstract}

\section{Introduction}
\label{sec:intro}

The Wilson-Kadanoff renormalization group \cite{KadanoffScalingLawsFor, Wilson-71-Renormalization1, Wilson-71-Renormalization2} is a cornerstone in the understanding of classical and quantum many-body system. It provides a conceptual framework that unifies the theory of critical phenomena and universality in (quantum) statistical mechanics with
quantum field theory via the existence of infrared fixed points under scale-changing operations.
Moreover, the construction of continuum models from lattice approximations via scaling limits has been developed to a wide extent in a mathematically rigorous form in constructive quantum field theory, especially in the classical probabilist framework of the Euclidean approach \cite{GlimmQuantumFieldTheory, GlimmQuantumPhysicsA, FernandezRandomWalksCritical}.
While in these works interacting models have been rigorously constructed, the field operators and observables of the lattice and continuum theories are related only indirectly in terms of correlation functions and an analytic continuation of the latter together with a Streater-Wightman reconstruction is necessary to access the operator-algebraic content of the quantum field theory. 
In the present paper, we start from a lattice approximation to quantum field theories closer in spirit to the earlier constructive works by Friedrichs, Glimm-Jaffe and others \cite{GlimmQuantumFieldTheory, FriedrichsPerturbationOfSpectra} and the setting of quantum statistical mechanics. We explicitly implement an operator-algebraic approach \cite{BrothierConstructionsOfConformal, BrothierAnOperatorAlgebraic} to the Wilson-Kadanoff renormalization group for scalar lattice quantum fields in any dimension
(see also \cite{StottmeisterOperatorAlgebraicRenormalization} for a short overview).
Specifically, using results of wavelet theory~\cite{DaubechiesTenLecturesOn, MeyerWaveletsAndOperators}, we build a real-space (as opposed to momentum-space) renormalization group by identifying the lattice fields with the continuum fields smeared with suitable functions which generate a wavelet basis.
This results in a sequence of homomorphisms between lattice field algebras at various scales. We supplement the lattice algebras with a sequence of states that approaches a critical point in an appropriate way (the massless free field for the massive free field), which assures the continuity of asymptotic symmetry groups avoiding certain obstacles encountered in \cite{JonesANoGo, KlieschContinuumLimitsOf, OsborneQuantumFieldsFor}, while keeping track of local algebras,
cf.\!  \cite{ZiniConformalFieldTheories}. Therefore, we succeed to construct a continuum field theory in the Haag-Kastler sense~\cite{HaagLocalQuantumPhysics} in the real-space framework of the Wilson-Kadanoff renormalization group.

Our implementation is inspired by the strong connection between renormalization of classical lattice systems and wavelet theory \cite{BattleWaveletsAndRenormalization}, which prevails in the quantum setting, cp.~\cite{BrennenMultiscaleQuantumSimulation, EvenblyEntanglementRenormalizationIn, MatsuedaAnalyticOptimizationOf, FriesRenormalizationOfLattice, WitteveenQuantumCircuitApproximations, WitteveenWaveletConstructionOf}.  As an obvious application, we show rigorously and in detail that the renormalization group flow of the ground states of free lattice fields in the vicinity of the unstable, massless fixed point allows us to reconstruct the massive continuum free field as a scaling limit. By using compactly supported wavelets (or finite low-pass filters), this construction directly yields local time-zero algebras of the continuum theory which allows us to pass to the infinite-volume limit in a simple way.
From a mathematical point of view, operator algebras naturally accommodate the quantum version of the Wilson-Kadanoff approach as the former are generally understood to incorporate non-commutative measure theory, see \cite{LangHamiltonianRenormalizationI, ThiemannCQGRenormalisation} for another approach reversing the classical Euclidean approach to obtain a quantum version. Furthermore, in the specific formulation we use, the renormalization group is realized as an inductive system of
$*$-morphisms in our examples, and the scaling limit is identified with a state on the inductive-limit algebra \cite{KadisonFundamentalsOfThe2}.
To some extent, this reflects a weak form of universality \cite[Chapter 3]{ZinnJustinPhaseTransitionsAnd} in mathematical sense: The scaling limit is independent of the specific details of the inductive system at finite scales, e.g.~modifications of the dispersion relations of the free lattice ground states are irrelevant as long as the asymptotic behavior is unchanged. As both, the infrared limits of the renormalization group and the inductive limits of operators algebras, are only concerned with asymptotic properties of their building blocks (with a certain notion of coherence), this suggests a link at a conceptual level. This should not be confused with the concept of universality classes which are associated with different fixed points of the renormalization group when taking into account additional interactions. In the latter sense, we are only concerned with the universality class of the massless free field in this work.
As can be seen from the free field example, criticality of the lattice models (i.e.~divergence of correlation lengths) is intimately linked to continuity properties of the operator-algebraic limits.

Our formulation of the renormalization group is also linked to other developments in quantum theoretical descriptions of low-dimensional many-body systems, e.g.~the density matrix renormalization group (DMRG) \cite{WhiteRealSpaceQuantum1, WhiteDensityMatrixFormulation, WhiteDensityMatrixAlgorithms, SchollwoeckTheDensityMatrix}, which is a real-space formulation that has turned out to be especially efficient in one dimension.
As the Wilson-Kadanoff approach fits naturally into an information geometrical setup \cite{BenyInformationGeometricApproach}, real-space renormalization schemes have also drawn increasing attention in the theory of quantum information, specifically concerning tensor networks \cite{CiracRenormalizationAndTensor} and the multi-scale entanglement renormalization ansatz (MERA) \cite{VidalAClassOf, EvenblyAlgorithmsForEntanglement, EvenblyEntanglementRenormalizationIn, PfeiferEntanglementRenormalizationScale}.

\subsection*{Summary of the formulation and results}
We formulate a renormalization group for scalar lattice field theories using a scaling function $\phi$ of wavelet theory. Such a function satisfies a self-similarity equation,
\begin{align*}
\phi(x) & = \sum_{n\in\ZZ^{d}}h_{n}\!\ 2^{\frac{d}{2}}\phi(2x-n),
\end{align*}
which encodes the decomposition of $\phi$ at coarse scale in terms of itself at a fine scale (in the following $d$ will always denote the spatial dimension). Clearly, this equation is analogous in spirit to Kadanoff's block-spin transformation, which considers a spin on a coarse lattice to be an average of spins on a fine lattice. But, while the block-spin transformation typically averages uniformly over adjacent sites, the scaling equation allows for a weighted average over sites. The coefficients $h_{n}$ are commonly known as a low-pass filter as they describe the coarse-scale features captured by $\phi$. Wavelets enter the picture as a parametrization of the orthogonal complement of $\phi$ after increasing the resolution leading to a complementary high-pass filter tracking the fine-scale details. Thus, it is tempting to utilize the scaling equation to define the relation among lattice quantum fields at successive scales $\vep_{N}$, $\vep_{N+1} = \vep_N/2$  (and similarly for momenta):
\begin{align*}
\Phi_{N}(x) & = 2^{-\frac{d}{2}}\sum_{n\in\ZZ^{d}}h_{n}\!\ \Phi_{N+1}(x+\vep_{N+1}n).
\end{align*}
In terms of the algebras of observables $\fA_{N}$ at scales $\vep_{N}$ this precisely defines an inductive system,
\begin{align*}
\alpha^{N}_{N'} : \fA_{N} & \longrightarrow \fA_{N'}, & N & < N,
\end{align*}
with the algebraic and analytic properties of the $\alpha$'s encoded into the coefficients $h_{n}$. Specifically, by using a Daubechies' scaling function which is compactly supported and orthogonal (integer translates of $\phi$ are orthogonal at a fixed scale), it is possible to realize the $\alpha$'s as unital $*$-morphisms that equip the limit algebra $\fA_{\infty} = \varinjlim_{N}\fA_{N}$ with a natural quasi-local structure because the low-pass filter is finite in this case and, therefore, limits the spatial support of a localized finite-scale quantum field $\Phi_{N}(x)$ in the limit. Indeed, the image of the latter in $\fA_{\infty}$ can be identified with the continuum quantum field $\Phi$ smeared with $\phi$ at scale $\vep_{N}$,
\begin{align*}
\Phi_{N}(x) & = \vep_{N}^{-d}\int\!\ dy\!\ \Phi(y) \phi(\vep_{N}^{-1}(y-x)).
\end{align*}
Now, the $\alpha$'s are what we will call the renormalization group or scaling maps, and the more common picture of the renormalization group transformations linking states of a physical system among different scales arises by their dual action on the states spaces $\fS_{N+1}\rightarrow\fS_{N}$ (generalized density matrices). Following Wilson \cite{WilsonTheRenormalizationGroupKondo}, we consider the renormalization group flow of initial states $\omega^{(N)}_{0}$ of the finite scale systems $\fA_{N}$,
\begin{align*}
\omega^{(N)}_{M} & = \omega^{(N+M)}_{0}\circ\alpha^{N}_{N+M},
\end{align*}
to construct a scaling limit $\omega^{(\infty)}_{\infty} = \varprojlim_{N}\lim_{M}\omega^{(N)}_{M}$ on $\fA_{\infty}$. To make these abstract considerations maximally explicit and to test the construction, we provide a rather complete treatment of the important case of harmonic (free) lattice fields on
the torus $\mathbb{T}^d_L=[-L,L)^d$,
\begin{align*}
H^{(N)}_{0} & = \tfrac{1}{2}\vep_{N}^{d}\bigg(\sum_{x\in\Lambda_{N}}\Pi_{N|x}^{2}+\mu^{2}_{N}\vep_{N}^{-2}\Phi_{N|x}^{2}-2\sum_{\langle x,y\rangle\subset\Lambda_{N}}\vep_{N}^{-2}\Phi_{N|x}\Phi_{N|y}\bigg),
\end{align*}
and their ground states $\omega^{(N)}_{L}$, which are expected to admit sensible scaling limits in the vicinity of the Gaussian fixed point $\mu^2_{N}=2d$. More precisely, we show that:
\begin{itemize}
	\item[(1)] the scaling limit $\omega^{(\infty)}_{L,\infty}$ exists in finite volume $(2L)^d$ and is given by the continuum free field of mass $m$ assuming the renormalization condition $\lim_{N\rightarrow\infty}\vep_{N}^{-2}(\mu_{N}^{2}-2d)= m^{2}$ holds,
	\item[(2)] the finite scale dynamics converges to the free dynamics in the Gelfand-Naimark-Segal representation of $\omega^{(\infty)}_{L, \infty}$ (a similar statement holds for spatial translations),
	\item[(3)] the infinite volume (or thermodynamical) limit $L\to \infty$ of $\omega^{(\infty)}_{L,\infty}$ exists and the local time-zero and spacetime algebras in finite volume are unitarily equivalent to those in infinite volume,
\end{itemize}
if the renormalization group is realized in terms of Daubechies' wavelets of sufficiently high regularity. The latter observation also fits nicely with the idea of universality as the details of a valid approximation should not matter, i.e.~the scaling limit is independent of the precise regularity of wavelets as long as it is sufficient. Thus, our operator-algebraic implementation of Wilson-Kadanoff renormalization enables a direct construction of the local net of observables of the continuum free field (in the sense of the Haag-Kastler axioms \cite{HaagLocalQuantumPhysics}) from free lattice fields. In addition, we observe that our approach yields a rigorous way to deduce spacetime locality 
in the continuum from Lieb-Robinson bounds \cite{LiebTheFiniteGroup, CramerLocalityOfDynamics, OsborneContinuumLimitsOf, NachtergaeleLRBoundsHarmonic, NachtergaeleQuasiLocalityBounds} albeit asymptotically optimal estimates of the Lieb-Robinson velocity are required to conclude that $c=1$ in natural units.
Moreover, we also illustrate how our wavelet renormalization group yields an analytic version of the multi-scale entanglement renormalization ansatz (MERA) for scalar fields in arbitrary dimensions, cp.~\cite{EvenblyEntanglementRenormalizationAnd, EvenblyRepresentationAndDesign, HaegemanRigorousFreeFermion, WitteveenWaveletConstructionOf}.
Finally, we also comment on the approach to the renormalization group via scaling algebras \cite{BuchholzScalingAlgebrasAnd1, BuchholzScalingAlgebrasAnd2, DAntoniScalingAlgebras, BostelmannScalingAlgebras}, where the principal difference to our work lies in the consideration of the ultraviolet limit of theories already defined in the continuum rather than of the infrared limit of lattice theories.

The article is structured as follows.
In Section \ref{sec:pre}, we recall the operator-algebraic formulation of Wilson-Kadanoff renormalization, and we introduce the required objects for the construction of scaling limits such as scaling maps, limit algebras and scaling limits of states and dynamics.
We also introduce a notion of spatial local algebras for lattice models and give a tentative definition of a continuum theory in terms of local observables in finite and infinite volume that reasonable scaling limit should fit into.
In Section \ref{sec:waveletscalar}
we invoke the framework of second quantization for the description of scalar fields at a fixed scale, and wavelet scaling maps are introduced as a generalization of the block-spin transformation, which we complement by essential ingredients from the theory of wavelets.
In Section \ref{sec:continuumlimit},
we construct the scaling limit of free, massive lattice scalar fields and show that the limit state gives the massive free state while
the lattice fields are represented as the continuum field smeared with the scaling function.
Section \ref{sec:otherscaling}, we discuss the block-spin transformation from the perspective of the wavelet scaling map, and we explain how the former leads to a continuum limit albeit in a more singular sense. In addition, we illustrate two other commonly used renormalization groups, point-like localizations and sharp momentum cutoffs (momentum shells),
and we show that the first leads to singular limits similar to the block-spin case, while the second is well-adapted to dynamics but intrinsically non-local. Furthermore, we explain the relation of our approach to the scaling algebras of Buchholz-Verch and to the MERA.
Section \ref{sec:out} outlines various consequences of the previous sections thereby giving an outlook on topics for future research.

\subsection*{Acknowledgements}
AS would like to thank Tobias J. Osborne and Reinhard F. Werner for useful discussions and comments. VM thanks Domenico Marinucci and Claudio Durastanti for comments and discussions. AS was in part supported by Alexander-von-Humboldt Foundation through a Feodor Lynen Return Fellowship.
The European Research Council Advanced Grant 669240 QUEST supported VM  and partially GM. VM was supported by Indam from March 2019 to February 2020.
Until February 2020 YT was supported by the Programma per giovani ricercatori, anno 2014 ``Rita Levi Montalcini'' of the Italian Ministry of Education, University and Research.
VM, GM and YT also acknowledge the MIUR Excellence Department Project awarded to
the Department of Mathematics, University of Rome ``Tor Vergata'', CUP E83C18000100006 and the University of
Rome ``Tor Vergata'' funding scheme ``Beyond Borders'', CUP E84I19002200005.

\section{Preliminaries}
\label{sec:pre}

\subsection{An operator-algebraic renormalization group scheme}
\label{sec:oaren}
\paragraph{The Wilson-Kadanoff renormalization group.}
A widely used scheme to analyze continuum
limits using effective lattice models
is the Wilson-Kadanoff renormalization group \cite{KadanoffScalingLawsFor, WilsonTheRenormalizationGroupKondo, WegnerCorrectionsToScaling}.
For a family of lattices $\{\Lambda_{N}\}_{N\in\NN_{0}}$, say in $\RR^{d}$ or in the torus $\TT_L^{d}$ of side length $2L$,
we consider a sequence of Hamiltonian quantum systems $\{\fA_{N},\cH_{N},H^{(N)}_{0}\}_{N\in\NN_{0}}$ indexed by the level $N\in\NN_{0}$, the logarithmic scale accounting for the relative density of lattice points (increasing with $N$).
At each level $N$, $\fA_{N}\subset \cB(\cH_{N})$ is a concrete $C^{*}$-algebra of observables acting on a Hilbert space $\cH_{N}$, and $H^{(N)}_{0}$ is an initial Hamiltonian with domain $\cD_{N}\subset\cB(\cH_{N})$. The essential equation of renormalization group theory defines the renormalized Hamiltonians $H^{(N)}_{M}$
for $N'-N=M\geq0$ by demanding the equality of partition functions of $H^{(N)}_{M}$ and $H^{(N+M)}_{0}$ (\cite{FisherTheRenormalizationGroup}, see also \cite{FisherRenormalizationGroupTheory, WhiteDensityMatrixFormulation, WhiteDensityMatrixAlgorithms}):
\begin{align}
\label{eq:basicrg}
Z^{(N+M)}_{0} = \Tr_{N+M}\Big(e^{-H^{(N+M)}_{0}}\Big) & = \Tr_{N}\Big(e^{-H^{(N)}_{M}}\Big) = Z^{(N)}_{M},
\end{align}
where both exponentials are assumed to be trace class.
More precisely, the renormalization group should define coarse-graining transformations (or quantum operations) 
 between the spaces of density matrices, $\cS^{1}_{N+M}$ and $\cS^{1}_{N}$, on $\fA_{N+M}$ respectively $\fA_{N}$,
such that: 
\begin{align}
\label{eq:densityrg}
\cE^{N+M}_{N}\Big(\rho^{(N+M)}_{0}\Big) & = \rho^{(N)}_{M},
\end{align}
with $\rho^{(N)}_{M} = (Z^{(N)}_{M})^{-1}e^{-H^{(N)}_{M}}$. The transformation are also know as the descending superoperators \cite{EvenblyAlgorithmsForEntanglement}. Since the choice of levels $N<N'=N+M$ is arbitrary, we naturally require the semi-group property:
\begin{align}
\label{eq:semigrouprg}
\cE^{N'}_{N}\circ\cE^{N''}_{N'} & = \cE^{N''}_{N}, & N < & N' < N''.
\end{align}
At the level of Hamiltonians with corresponding maps, $H^{(N+M)}_{0} \longmapsto H^{(N)}_{M}$, this is summarized by Wilson's \textit{triangle of renormalization} \cite[p. 790]{WilsonTheRenormalizationGroupKondo} in Figure \ref{fig:trianglerg}.
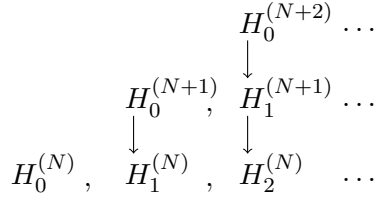
\begin{figure}[ht]
\centering
\begin{tikzpicture}
	\draw (-0.25,0) node[right]{$H^{(N)}_{0}$} (1.25,0) node[right]{$H^{(N)}_{1}$} (2.75,0) node[right]{$H^{(N)}_{2}$};
	\draw (0.9,-0.2) node{$,$};
	\draw (2.5,-0.2) node{$,$};
	\draw (4.5,-0.1) node{$\dots$};
	\draw (1.25,1) node[right]{$H^{(N+1)}_{0}$} (2.75,1) node[right]{$H^{(N+1)}_{1}$};
	\draw (2.5,0.8) node{$,$};
	\draw (4.5,0.9) node{$\dots$};
	\draw[->] (1.5,0.75) to (1.5,0.25);
	\draw[->] (3,0.75) to (3,0.25);
	\draw (2.75,2) node[right]{$H^{(N+2)}_{0}$};
	\draw (4.5,1.9) node{$\dots$};
	\draw[->] (3,1.75) to (3,1.25);
\end{tikzpicture}
\caption{\small Wilson's triangle of renormalization in terms of Hamiltonians of subsequent scales:
Horizontal lines represent sequences of renormalized Hamiltonians $H^{(N)}_{M}$ at a fixed scale that arise via coarse graining from the initial Hamiltonians $H^{(N)}_{0}$ at subsequent scale (vertical lines).}
\label{fig:trianglerg}
\end{figure}

\paragraph{Operator-algebraic renormalization.}
Let us translate \eqref{eq:densityrg} into a statement involving the algebras $\{\fA_{N}\}_{N\in\NN_{0}}$ and states on them (generalizing density matrices).
We find for $a_{N}\in\fA_{N}$:
\begin{align}
\label{eq:densityrgstate}
\Tr_{N}\!\Big(\rho^{(N)}_{M}a_{N}\!\Big) & = \Tr_{N}\!\Big(\cE^{N+M}_{N}(\rho^{(N+M)}_{0})a_{N}\!\Big) = \Tr_{N+M}\!\Big(\rho^{(N+M)}_{0}\alpha^{N}_{N+M}(a_{N})\!\Big)
\end{align}
Here, $\alpha^{N}_{N+M}:\fA_{N}\rightarrow\fA_{N+M}$ is the dual of $\cE^{N+M}_{N}$ (or an ascending superoperators as in \cite{EvenblyAlgorithmsForEntanglement}). Thus, for a given family of initial states
$\{\omega^{(N)}_{0}\}_{N\in\NN_{0}}$ instead of Hamiltonian $\{H^{N}_{0}\}_{N\in\NN_{0}}$, \eqref{eq:densityrg} generalizes to:
\begin{align}
\label{eq:staterg}
\omega^{(N+M)}_{0}\circ\alpha^{N}_{N+M} & = \omega^{(N)}_{M}.
\end{align}
A natural requirement is that $\alpha^{N}_{N+M}$ is unital and completely positive because it should map states into states and preserve probability as expressed by the equality of partition functions \eqref{eq:basicrg}. We further require that $\alpha^{N}_{N'}$ is a $*$-morphism,
so that we can define the inductive limit algebra.
The semi-group property \eqref{eq:semigrouprg} translates to:
\begin{align}
\label{eq:semigrouprgmorph}
\alpha^{N'}_{N''}\circ\alpha^{N}_{N'} & = \alpha^{N}_{N''}.
\end{align}
We call the collection $\{\alpha^{N}_{N'}\}_{N<N'\in\NN_{0}}$, the \textit{scaling maps} or \textit{renormalization group}. At the level of algebras and states,
Figure \ref{fig:statetrianglerg} provides an analogue to Figure \ref{fig:trianglerg}:
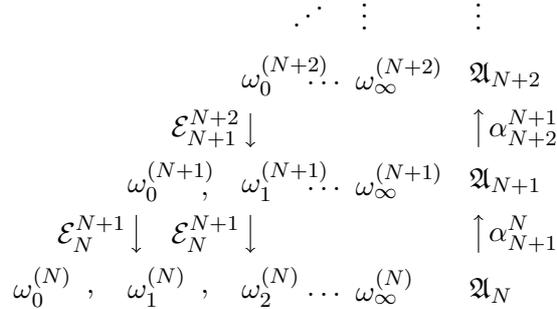
\begin{figure}[ht]
\centering
\begin{tikzpicture}
	\draw (-0.25,0) node[right]{$\omega^{(N)}_{0}$} (1.25,0) node[right]{$\omega^{(N)}_{1}$} (2.75,0) node[right]{$\omega^{(N)}_{2}$} (4.25,0) node[right]{$\omega^{(N)}_\infty$} (5.75,-0.05) node[right]{$\fA_N$};
	\draw (0.9,-0.1) node{$,$};
	\draw (2.4,-0.1) node{$,$};
	\draw (4,-0.1) node{$\dots$};
	\draw (1.25,1.4) node[right]{$\omega^{(N+1)}_{0}$} (2.75,1.4) node[right]{$\omega^{(N+1)}_{1}$} (4.25,1.4) node[right]{$\omega^{(N+1)}_\infty$}  (5.75,1.4) node[right]{$\fA_{N+1}$};
	\draw (2.4,1.2) node{$,$};
	\draw (4,1.3) node{$\dots$};
	\draw (1.5,0.7) node[left]{$\cE^{N+1}_{N}$} (3,0.7) node[left]{$\cE^{N+1}_{N}$};
	\draw[->] (1.5,0.95) to (1.5,0.45);
	\draw[->] (3,0.95) to (3,0.45);
	\draw[<-] (6,0.95) to (6,0.45); 
	\draw (6,0.7) node[right]{$\alpha^{N}_{N+1}$};
	\draw (2.75,2.8) node[right]{$\omega^{(N+2)}_{0}$} (4.25,2.8) node[right]{$\omega^{(N+2)}_\infty$} (5.75,2.8) node[right]{$\fA_{N+2}$};
	\draw (4,2.7) node{$\dots$};
	\draw (3,2.1) node[left]{$\cE^{N+2}_{N+1}$};
	\draw[->] (3,2.3) to (3,1.85);
	\draw[<-] (6,2.35) to (6,1.85); 
	\draw (6,2.1) node[right]{$\alpha^{N+1}_{N+2}$};

	\node at (3.75,3.7) {$\iddots$};
	\node at (4.5,3.7) {$\vdots$};
	\node at (6,3.7) {$\vdots$};
\end{tikzpicture}
\caption{\small Wilson's triangle of renormalization in terms of algebras and states: Vertical lines represent renormalization steps, either by coarse graining states ($\cE$'s) or by refining observables or basic fields ($\alpha$'s). Horizontal lines represent sequences of renormalized states at a fixed scale.}
\label{fig:statetrianglerg}
\end{figure}

\begin{example}
\label{ex:twistedpartialtrace}
A simple example of quantum operations $\{\cE^{N'}_{N}\}_{N<N'\in\NN_{0}}$ are partial traces $\Tr_{N'\rightarrow N}$ for $\cH_{N} = \cH_{0}^{\otimes 2^{N}}$ and $\fA_{N} = B(\cH_{N})$ such that $\alpha^{N}_{N'}(a_{N}) = a_{N}\otimes\mathds{1}_{N'- N}$. Natural generalizations of the latter arising in the context of lattice gauge theory are \cite{MilstedQuantumYangMills, BrothierConstructionsOfConformal, BrothierAnOperatorAlgebraic}:
\begin{align}
\label{eq:twistedpartialtrace}
\alpha^{N}_{N'}(a_{N}) & = U_{N'}(a_{N}\otimes\mathds{1}_{N'- N})U_{N'}^{*},	
\end{align}
for some unitary $U_{N'}\in\cU(\cH_{N'})$.
\end{example}

Equation~\eqref{eq:twistedpartialtrace} is satisfied for our construction of a renormalization group for lattice scalar fields
and is analogous to MERA (multi-scale entanglement renormalization), see \eqref{eq:tensorconj}.

\paragraph{Continuum limit, inductive limit algebra and scaling limit state.}
Our operator-algebraic renormalization provides an algorithm for the construction of a continuum limit of lattice models, which we roughly split into three parts:
First, we construct a family of scaling maps $\{\alpha_{N+1}^N\}_{N\in\NN_{0}}$ between algebras of lattice
fields to define the renormalization group arriving at an inductive limit algebra (see Section \ref{sec:indlim} for a summary of the construction):
\begin{align}
\label{eq:algrglimit}
\fA_{\infty} & = \varinjlim_{N\in\NN_{0}}\fA_{N}.
\end{align}
Second, we consider a sequence of initial states $\{\omega_0^{(N)}\}_{N\in\NN_{0}}$ (at the upside of the triangle in Figure \ref{fig:statetrianglerg}). Their restrictions to coarser lattices are determined by the renormalization group elements $\alpha_{N'}^N$ (generalizing the coarse-graining transformations $\{\cE_{N'}^N\}$ via pullback) according to \eqref{eq:staterg}. On each lattice algebra $\fA_{N}$, the sequence $\{\omega^{(N)}_{M}\}_{M\in\NN_{0}}$ (horizontal lines)
should admit a limit state,
\begin{align}
\label{eq:staterglimit}
\omega^{(N)}_{\infty} & = \lim_{M\rightarrow\infty}\omega^{(N)}_{M},
\end{align}
at least for a subsequence, 
and we expect the limit state to be stable under coarse graining,
\begin{align}
\label{eq:statergconsistency}
\omega^{(N')}_{\infty}\circ\alpha^{N}_{N'} & = \omega^{(N)}_{\infty},
\end{align}
because formally $\alpha^{N'}_{\infty}\circ\alpha^{N}_{N'}\!=\!\alpha^{N}_{\infty}$ for $\alpha^{N}_{\infty}\!=\!\lim_{N'\rightarrow\infty}\alpha^{N}_{N'}$. Algebraically, the consistency expressed by this stability condition would allow for the existence of a projective-limit state on the inductive-limit algebra $\fA_{\infty}$ (cf.~Section \ref{sec:indlim}),
\begin{align}
\label{eq:projstaterg}
\omega^{(\infty)}_{\infty} & = \varprojlim_{N\in\NN_{0}}\omega^{(N)}_{\infty}.
\end{align}
Such projective-limit states will be called \textit{scaling limits} of the initial sequence $\{\omega^{(N)}_{0}\}_{N\in\NN_{0}}$, and their existence together with the property \eqref{eq:statergconsistency} is assured under rather mild conditions as shown in Proposition \ref{prop:weakstarconv} below. The expected non-uniqueness of \eqref{eq:staterglimit} and, therefore, of the scaling limits is physically meaningful (e.g.~renormalization group trajectories and phase transitions) and is partially reflected in the need for \textit{renormalization conditions}, e.g.~on the couplings implicitly present in the initial states.\\
Finally given a scaling limit $\omega^{(\infty)}_{\infty}$ together with the algebra $\fA_{\infty}$, we perform the Gelfand-Naimark-Segal (GNS) construction to arrive at a Hilbert space representation,
\begin{align}
\label{eq:GNSrglimit}
\pi^{(\infty)}_{\infty} : \fA_{\infty} & \longrightarrow \cH^{(\infty)}_{\infty}, & \omega^{(\infty)}_{\infty}(a) & = \langle\Omega_{\infty},\pi^{(\infty)}_{\infty}(a)\Omega_{\infty}\rangle_{\cH^{(\infty)}_{\infty}},
\end{align}
such that the scaling limit is implemented by the vector state $\Omega_{\infty}\in\cH^{(\infty)}_{\infty}$. Now, $\cH^{(\infty)}_{\infty}$ should be regarded as the Hilbert space of the continuum limit, we can identify the elements of each $\fA_{N}$ with certain operators of the continuum field via scaling maps as expressed by \eqref{eq:algrglimit}) and compare the expected properties. As we are working in a Hamiltonian setting, the continuum limit should be interpreted in terms of  time-zero fields. Moreover, natural candidates for non-trivial scaling limits are families of ground states $\{\omega_{\lambda_{N}}\}_{N\in\NN_{0}}$ for lattice Hamiltonians $\{H_{\lambda_{N}}\}_{N\in\NN_{0}}$ admitting quantum critical points\footnote{Physically, a necessary condition for the existence of a non-trivial sequence of limit states $\{\omega^{(N)}_{\infty}\}_{N\in\NN_{0}}$ is the divergence (in units of lattice length) of the sequence of correlation lengths $\{\xi^{(N)}_{N'}\}_{N'\in\NN_{0}}$ associated with pairs of sequences $\{\alpha^{N}_{N'}(a_{N}), \alpha^{N}_{N'}(b_{N})\}_{N'\in\NN_{0}}$ of local operators for any $N'\in\NN_{0}$.} (not necessarily quantum phase transitions) with respect to their respective (dimensionless) couplings $\{\lambda_{N}\}_{N\in\NN_{0}}$, see e.g.~\cite{SachdevQuantumPhaseTransitions}.
In the case of harmonic lattice fields, we will see that families of lattice ground states admit the vacuum states of arbitrary masses (reflecting the non-uniqueness) of the continuum free field as scaling limits (see also Section \ref{sec:other}).

\paragraph{Dynamics.}
Besides the existence of a scaling limit $\{\omega^{(\infty)}_{\infty}\}$, we can also analyze the convergence of the family of time-evolution groups $\{\eta^{(N)}_{t} = e^{itH^{(N)}_{0}}(\!\ .\!\ )e^{-itH^{(N)}_{0}}\}_{N\in\NN_{0}}$. In view of the similarities of our renormalization group scheme with the construction of thermodynamical limits in quantum statistical mechanics, we consider, for this purpose, sequences of the form,
\begin{align}
\label{eq:limdyn}
\{\alpha^{N'}_{\infty}(\eta^{(N')}_{t}(\alpha^{N}_{N'}(a_{N})))\}_{N'>N},
\end{align}
for fixed $t\in\RR$ and $a_{N}\in\fA_{N}$ and ask whether these are Cauchy sequences in the Hilbert space representation relative to the scaling limit $\omega^{(\infty)}_{\infty}$, i.e.~for a suitable operator topology on $\pi^{(\infty)}_{\infty}(\fA_{\infty})$ relative to the scaling limit  \cite{HaagLocalQuantumPhysics, BratteliOperatorAlgebrasAnd2, NachtergaeleOnTheExistence}. This way we may define the limit $\eta^{(\infty)}_{t}\!=\!\lim_{N\rightarrow\infty}\eta^{(N)}_{t}$ on the closure $\cA = \overline{\pi^{(\infty)}_{\infty}(\fA_{\infty})}$ and obtain a scaling-limit Hamiltonian $H^{(\infty)}_{\infty}$ (corresponding to an extrapolation of the upside of Wilson's triangle in Figure \ref{fig:trianglerg}) given suitable continuity properties of $\eta^{(\infty)}:\RR \curvearrowright \cA$.

\paragraph{Free field as an example.}
For general interacting lattice models, we do not expect to find all the above objects in closed form, but suitable expansion or perturbation methods will be required to obtain approximations \cite{BorgsConfinementDeconfinementAnd}. Moreover, the extent to which it will be possible to carry out this formulation of Wilson-Kadanoff renormalization will depend sensitively on the choice of scaling maps $\alpha^{N}_{N'}$, and the amount of control over the state space $\fS(\fA_{N}) = \fS_{N}$ and the effective Hamiltonian $H^{(N)}_{0}$ of each lattice system $(\fA_{N},\cH_{N})$ . Nevertheless, we show in the following that this scheme can be carried out in real space for the free scalar field in any dimension with full control over all involved objects.

\subsection{Inductive limit of \texorpdfstring{$C^*$}{C*}-algebras and representations}
\label{sec:indlim}

The construction of a sequence of lattice $C^*$-algebras and its inductive limit is central to our construction of a continuum limit. Therefore, we recall this fundamental architecture to present detailed and self-contained understanding of the limit procedure. For further references see for instance \cite{KadisonFundamentalsOfThe2, TakesakiTheoryOfOperator3, BlackadarOperatorAlgebras}

Following the notation of the previous section, let $\{\mathfrak{A}_N\}_{N\in\NN_{0}}$ be a sequence of unital $C^*$-algebras.
For $N < N'<N''$, we assume that there exists a unital injective  $^*$-morphism $\alpha_{N'}^{N}$ from $\fA_{N}$ to $\fA_{N'}$ such that
$\alpha_{N''}^{N'}\circ\alpha_{N'}^{N}=\alpha_{N''}^{N}$. Since $\alpha_{N'}^{N}$ is injective,
it is also isometric. $\{\fA_N\}_{N\in\NN_{0}}$ together with $\{\alpha_{N'}^{N}:N, N'\in\NN_{0}, N<N'\}$ is called a \textit{directed system of $C^*$-algebras}. Such a directed system of $C^*$-algebras can be embedded up to $^*$-isomorphisms into a $C^*$-algebra called the \textit{inductive limit}. In particular, there exists a $C^*$-algebra $\fA_{\infty}$ such that:
\begin{enumerate}
\item[(i)] for every $N\in\NN_{0}$ there exists a unital injective $^*$-homomorphism $\alpha^{N}_{\infty}$ from $\fA_{N}$ to $\fA_{\infty}$. Furthermore, if $ N< N'$ then $\alpha^{N}_{\infty} = \alpha^{N'}_{\infty}\circ \alpha_{N'}^{N}$ and $\bigcup_{N\in\NN_{0}} \alpha^{N}_{\infty}(\fA_{N})$ is dense in $\fA_{\infty}$;
\item[(ii)] the $C^*$-algebra $\fA_{\infty}$ in (i) is unique up to a $^*$-isomorphism; in particular, if $\fB_{\infty}$ is another $C^*$-algebra with $^*$-homomorphisms $\beta^{N}_{\infty}$ satisfying conditions as in (i), then there exists a $^*$-isomorphism $\mathcal J:\fB_{\infty}\rightarrow\fA_{\infty}$ such that  $\alpha^{N}_{\infty}=\mathcal J\circ\beta^{N}_{\infty}$.
\end{enumerate}

Next, we consider a family of states $\{\omega^{(N)}\}_{N\in\NN_{0}}$ such that $\omega^{(N)}$ is a state on $\fA_{N}$, and we assume that $\omega^{(N)}=\omega^{(N')}\circ\alpha_{N'}^{N}$. Such a family is said to be \emph{projectively consistent}. Then, there is a uniquely defined state $\omega^{(\infty)}$ on $\fA_{\infty}$ such that  $\omega^{(\infty)}\circ\alpha^{N}_{\infty}=\omega^{(N)}$. ${\omega^{(\infty)} = \varprojlim_{N\in\NN_{0}}\omega^{(N)}}$ is called the \textit{projective limit} of the family $\{ \omega^{(N)}\}$.

As our notion of continuum limit from the previous section is intimately linked with the GNS construction, we also recall the following: The GNS representation $(\pi^{(\infty)}, \cH^{(\infty)},\Omega_{\infty})$ of $(\fA_{\infty}, \omega^{(\infty)})$
extends the GNS representation $(\pi^{(N)},\cH^{(N)},\Omega_{N})$ associated with $(\fA_{N},\omega^{(N)})$. More precisely, the GNS construction induces unique isometries $V^{N}_{N'}$ from $\cH^{(N)}$ into $\cH^{(N')}$ such that $\pi^{(N')}(\alpha^{N}_{N'}(a_{N}))V^{N}_{N'} = V^{N}_{N'}\pi^{(N)}(a_{N})$ for all $a_{N}\in\fA_{N}$ and $V^{N}_{N'}\Omega_{N} = \Omega_{N'}$. Therefore, the GNS representations $(\pi^{(N)},\cH^{(N)},\Omega_{N})$ naturally acquire the structure of a compatible direct system of Hilbert spaces, which admits an inductive limit given by $(\pi^{(\infty)}, \cH^{(\infty)},\Omega_{\infty})$ together with isometries $V^{N}_{\infty}$ from $\cH^{(N)}$ into $\cH^{(\infty)}$ such that $\pi^{(\infty)}(\alpha^{N}_{\infty}(a_{N}))V^{N}_{\infty} = V^{N}_{\infty}\pi^{(N)}(a_{N})$ for all $a_{N}\in\fA_{N}$ and $V^{N}_{\infty}\Omega_{N} = \Omega_{\infty}$. The unique normal extension of $\omega^{(\infty)}$ to the von Neumann algebra $\pi^{(\infty)}(\fA_{\infty})''$ will be denoted by the same symbol.

In view of the construction and existence of projectively consistent families of states (cf.~\eqref{eq:staterglimit} and \eqref{eq:projstaterg}), we make the important observation (which generalizes to directed index sets):
\begin{proposition}
\label{prop:weakstarconv}
Let $\{\fA_N\}_{N \in \NN_{0}}$, $\{\alpha^N_{N'}\}_{N < N'}$ be a directed system of $C^*$-algebras, and let $\{\omega^{(N)}\}_{N\in\NN_{0}}$ be a family of states on $\{\fA_N\}_{N\in\NN_{0}}$.
For $M>0$, we define a state by $\omega^{(N)}_{M} := \omega^{(N+M)}\circ\alpha^{N}_{N+M}$ on $\fA_N$.
If, for each $N$, $\lim_{M\rightarrow\infty}\omega^{(N)}_{M} =: \omega^{(N)}_{\infty}$ exists in the weak$^*$ topology,
then it defines a projectively consistent family $\{\omega^{(N)}_{\infty}\}_{N\in\NN_{0}}$
and hence
$\varprojlim_{N\in\NN_{0}}\omega^{(N)}_{\infty} = \omega^{(\infty)}_{\infty}$ is well-defined.
\end{proposition}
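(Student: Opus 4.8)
The plan is to establish the two claims in sequence: first that each weak$^*$ limit $\omega^{(N)}_{\infty}$ is genuinely a state on $\fA_N$, then the projective consistency relation $\omega^{(N)}_{\infty} = \omega^{(N')}_{\infty}\circ\alpha^{N}_{N'}$ for $N<N'$. Once consistency is in hand, the existence (and uniqueness) of $\omega^{(\infty)}_{\infty}=\varprojlim_N\omega^{(N)}_{\infty}$ is immediate from the general fact recalled just above the proposition, namely that a projectively consistent family of states on a directed system admits a unique projective limit on $\fA_{\infty}$ with $\omega^{(\infty)}_{\infty}\circ\alpha^{N}_{\infty}=\omega^{(N)}_{\infty}$.

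For the first point I would observe that, for each fixed $M$, the functional $\omega^{(N)}_{M}=\omega^{(N+M)}\circ\alpha^{N}_{N+M}$ is a state on $\fA_N$, being the composition of the state $\omega^{(N+M)}$ with the unital $*$-morphism $\alpha^{N}_{N+M}$, which is positive and unital and hence maps states to states. Since the state space of the unital $C^*$-algebra $\fA_N$ is weak$^*$ closed (indeed weak$^*$ compact by Banach--Alaoglu), the hypothesized weak$^*$ limit $\omega^{(N)}_{\infty}$ again lies in it; concretely $\omega^{(N)}_{\infty}(\mathds{1})=\lim_M\omega^{(N)}_{M}(\mathds{1})=1$ and $\omega^{(N)}_{\infty}(a^*a)=\lim_M\omega^{(N)}_{M}(a^*a)\geq 0$.

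The crux is the consistency relation, which rests on the cocycle identity $\alpha^{N'}_{N''}\circ\alpha^{N}_{N'}=\alpha^{N}_{N''}$ combined with a reindexing of the limit. Fix $N<N'$, write $N'=N+K$, and take $a_N\in\fA_N$. Using the definition of $\omega^{(N')}_{\infty}$ as a weak$^*$ limit and then the cocycle identity to collapse the two scaling maps, I would compute $\omega^{(N')}_{\infty}(\alpha^{N}_{N'}(a_N))=\lim_{M'\to\infty}\omega^{(N'+M')}(\alpha^{N'}_{N'+M'}(\alpha^{N}_{N'}(a_N)))=\lim_{M'\to\infty}\omega^{(N+(K+M'))}(\alpha^{N}_{N+(K+M')}(a_N))=\lim_{M'\to\infty}\omega^{(N)}_{K+M'}(a_N)$. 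Since by hypothesis the full sequence $(\omega^{(N)}_{M}(a_N))_{M}$ converges, its values along the shifted cofinal indices $M=K+M'$ share the same limit $\omega^{(N)}_{\infty}(a_N)$, yielding $\omega^{(N')}_{\infty}\circ\alpha^{N}_{N'}=\omega^{(N)}_{\infty}$.

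I do not expect any serious obstacle here; the argument is essentially a bookkeeping exercise. The only point demanding care is the reindexing step, where one must be certain that shifting the limit index by the fixed offset $K$ and passing to a cofinal tail leaves the limit unchanged — and this is exactly what the hypothesis secures, since it guarantees convergence of the \emph{entire} sequence in $M$ rather than of a mere subsequence. The same reasoning applies verbatim when $\NN_0$ is replaced by an arbitrary directed index set, as indicated in the remark preceding the statement, with ``shifted cofinal tail'' reading literally as cofinality in the directed set.
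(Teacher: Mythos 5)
Your argument is correct and follows essentially the same route as the paper's proof: both rest on the cocycle identity $\alpha^{N'}_{N''}\circ\alpha^{N}_{N'}=\alpha^{N}_{N''}$ to rewrite $\omega^{(N)}_{M}$ (for $M$ large) as $\omega^{(N')}_{M-K}\circ\alpha^{N}_{N'}$ and then pass to the limit, using that convergence of the full sequence makes the index shift harmless. Your additional observation that the weak$^*$ limit remains a state is a worthwhile (if routine) point the paper leaves implicit.
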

\begin{proof}
Because of weak$^*$ convergence, we have:
\begin{align*}
\lim_{M\rightarrow\infty}\Big|\omega^{(N')}_{\infty}\circ\alpha^{N}_{N'}(a_{N}) - \omega^{(N')}_{M}\circ\alpha^{N}_{N'}(a_{N})\Big| & = 0 & \text{ for all } a_{N}\in\fA_{N} 
\end{align*}
for all $N<N'$. Therefore, the following is true for all $N<N'$ (assuming $M>N'-N$):
\begin{align*}
\omega^{(N)}_{\infty} & = \lim_{M\rightarrow\infty}\omega^{(N)}_{M} = \lim_{M\rightarrow\infty}\omega^{(N+M)}\circ\alpha^{N}_{N+M}
= \lim_{M\rightarrow\infty}\omega^{(N+M)}\circ\alpha^{N'}_{N+M}\circ\alpha^{N}_{N'} \\
&= \lim_{M\rightarrow\infty}\omega^{(N')}_{N+M-N'}\circ\alpha^{N}_{N'} 
 = \omega^{(N')}_{\infty}\circ\alpha^{N}_{N'}.
\end{align*}
\hfill$\square$
\end{proof}

It may happen that each $\{\omega^{(N)}_{M}\}_{M\in\NN_{0}}$ converges only by taking a subsequence, and, thus, the limit may not be unique. In the context of scaling limits, this corresponds to the non-uniqueness of the vacuum or different coupling constants and, therefore, is physically important. Nevertheless, the proposition implies the existence of scaling limits in general due to the weak$^{*}$ compactness\footnote{Apply a diagonal-sequence argument along $N\in\NN_{0}$ to the convergent subsequences $\{\omega^{(N)}_{M_{k_{N}}}\}_{M_{k_{N}}>0}$.} of the state space of $C^{*}$-algebras (see also Section \ref{sec:other}).

\subsection{Lattice models and local algebras}
\label{sec:localg}
We are mainly interested in field theories on hypercubic lattices $\Lambda_N$
and their scaling limits. This means that each $\fA_N$ is equipped with a local structure as follows.

As in Section \ref{sec:oaren}, we associate to each $N$, the scale, a spatial lattice
$\Lambda_N$ with lattice spacing $\vep_{N}$ such that $\vep_{N+1}<\vep_{N}$. To each lattice point $x \in \Lambda_N$ we assign
a local algebra $\fA_N(x)$. The algebras $\fA_N(x)$ and $\fA_N(y)$ should commute
if $x\neq y$, and they generate the whole algebra $\fA_N$.
For a subset $X$ of $\Lambda_N$, let us denote $\fA_N(X) = \overline{\bigcup_{x\in X} \fA_N(x)}^{\|\cdot\|}$.

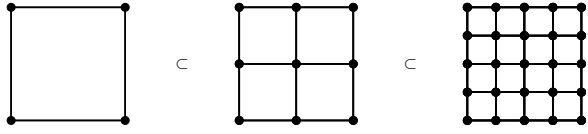
\begin{figure}[t]
\centering
\scalebox{0.75}{
	\begin{tikzpicture}
	
	\foreach \x in {0} 
		\foreach \y in {0,1}
		\foreach \v in {0,1} 	
		{
			\draw[thick] (1+2*\x,4+2*\y) to (3+2*\x,4+2*\y);
			\draw[thick] (1+2*\y,4+2*\x) to (1+2*\y,6+2*\x);
			\filldraw (1+2*\y,4+2*\v) circle (2pt);
		}
	
	\draw (4,5) node{\scriptsize $\subset$};
	
	\foreach \x in {0,1} 
		\foreach \y in {0,...,2}
		\foreach \v in {0,...,2} 	
		{
			\draw[thick] (5+\x,4+\y) to (6+\x,4+\y);
			\draw[thick] (5+\y,4+\x) to (5+\y,5+\x);
			\filldraw (5+\y,4+\v) circle (2pt);
		}
	
	\draw (8,5) node{\scriptsize $\subset$};
	
	\foreach \x in {0,...,3} 
		\foreach \y in {0,...,4}
		\foreach \v in {0,...,4} 	
		{
			\draw[thick] (9+0.5*\x,4+0.5*\y) to (9.5+0.5*\x,4+0.5*\y);
			\draw[thick] (9+0.5*\y,4+0.5*\x) to (9+0.5*\y,4.5+0.5*\x);
			\filldraw (9+0.5*\y,4+0.5*\v) circle (2pt);
		}

\end{tikzpicture}
}
\caption{Geometric subdivisions of the lattice $\Lambda^{L}_{1}$ for $d=2$.}
\label{fig:realsubdiv1}
\end{figure}
Furthermore, we assume that the image of the renormalization map $\alpha_{N+1}^N(\fA_N(x))$ should be
included in $\fA_{N+1}(S_x)$, where $S_x = \{y \in \Lambda_{N+1}: y-x \in [0, \vep_{N+1} r_{\max}]^d\}$, and $r_{\max}>0$ does not depend on $N$. We call the collection $\{\fA_N, \alpha^N_{N'}, \Lambda_N\}_{N, N' \in \NN, N' > N}$ a \emph{lattice field theory}.

In the following, we take standard hypercubic lattices in $\RR^{d}$ with $\vep_N = 2^{-N}\vep$ for some $\vep>0$, hence $\Lambda_N \subset \Lambda_{N+1}$.
However, the renormalization map $\alpha_{N+1}^N$ is in general
{\it not} just the identification of $\alpha_{N+1}^N(\fA_N(x))$ with $\fA_{N+1}(x)$ for $x \in \Lambda_N \subset \Lambda_{N+1}$.
Indeed, the key to obtain the continuum field as the scaling limit is to identify
a lattice field in $\fA_N$ with a smeared field in the continuum, not a point-like field.
This naturally leads to the wavelet scaling in Section \ref{sec:waveletscaling}.

By the above, the increase in support due to $\alpha_{N'}^N$ for an element $a_N \in \fA_N$ is bounded from above by $r_{\max}\vep_{N}\left(1-2^{-(N'-N)}\right)$.
Thus, we can define local algebras $\fA_\infty(S)\subset\fA_\infty$ for suitable open domains $S\subset\TT_{L}^{d} = [-L,L)^d$ by collecting at each level $N$ all the operators $a_{N}$ with support in the sublattice $\Lambda_{N}(S) \subset \Lambda_{N}\cap S$ with the convention that the cube $x+[0,\vep_{N} r_{\max}]^{d}$ does not intersect the boundary $\partial S$ for each site $x\in\Lambda_{N}( S)$, see Figure \ref{fig:locallat}. The bound on the increase of support ensures that this definition is compatible with the equivalence classes formed with respect to the inductive system \eqref{eq:weylmra}:
\begin{align}
\label{eq:spalocalg}
\fA_\infty( S)&\!=\!\varinjlim_{N\in\NN_{0}}\fA_{N}(\Lambda_N( S)),
\end{align}
It immediately follows from locality at level $N$ that
\begin{align}
\label{eq:spaisotony}
& \fA_\infty( S)\subset \fA_\infty( S') &  S&\subset S', \\ 
\label{eq:spalocality}
& [\fA_\infty( S),\fA_\infty( S')] = \{0\} &  S&\cap S'=\emptyset.
\end{align}
We define $\fA_\infty = \overline{\bigcup_{ S}\fA_\infty( S)}$ a quasi-local algebra \cite{BratteliOperatorAlgebrasAnd1}.

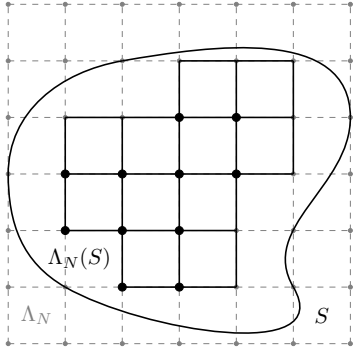
\begin{figure}[ht]
\centering
\scalebox{0.75}{
	\begin{tikzpicture}
	
	
	\foreach \x in {1,...,6} 
		\foreach \y in {1,...,7}	
		{
			\draw[dashed, thin, gray] (1+\x,1+\y) to (2+\x,1+\y);
			\draw[dashed, thin, gray] (1+\y,1+\x) to (1+\y,2+\x);
		}
	\foreach \v in {1,...,7} 
		\foreach \w in {1,...,7}
		{
			\filldraw[gray] (1+\v,1+\w) circle (1pt);
		}
	
	\draw[thick] (3,3) to[out=150, in=270] (2,5);
	\draw[thick] (2,5) to[out=90,in=190] (4,7);
	\draw[thick] (4,7) to[out=10,in=90] (8,6);
	\draw[thick] (8,6) to[out=270,in=120] (7,3);
	\draw[thick] (7,3) to[out=300,in=330] (3,3);
	
	\foreach \x in {0,1} 
		\foreach \y in {0,...,2}	
		{
			\draw[thick] (4+\x,4+\y) to (5+\x,4+\y);
			\draw[thick] (4+\y,4+\x) to (4+\y,5+\x);
		}
	\foreach \v in {0,...,1} 
		\foreach \w in {0,...,1}
		{
			\filldraw (4+\v,4+\w) circle (2pt);
		}
		
	
	\draw[thick] (4,3) to (4,4);
	\draw[thick] (5,3) to (5,4);
	\draw[thick] (6,3) to (6,4);
	\draw[thick] (3,4) to (3,5);
	\draw[thick] (3,5) to (3,6);
	\draw[thick] (5,6) to (5,7);
	\draw[thick] (6,6) to (6,7);
	\draw[thick] (7,6) to (7,7);
	\draw[thick] (7,5) to (7,6);
	
	\filldraw (4,3) circle (2pt) (5,3) circle (2pt) (3,4) circle (2pt) (3,5) circle (2pt) (6,5) circle (2pt) (5,6) circle (2pt) (6,6) circle (2pt);
	
	
	\draw[thick] (4,3) to (5,3);	
	\draw[thick] (5,3) to (6,3);
	\draw[thick] (3,4) to (4,4);
	\draw[thick] (3,5) to (4,5);
	\draw[thick] (3,6) to (4,6);
	\draw[thick] (5,7) to (6,7);
	\draw[thick] (6,7) to (7,7);
	\draw[thick] (6,6) to (7,6);
	\draw[thick] (6,5) to (7,5);
		
	\draw[gray] (2.5,2.5) node{$\Lambda_{N}$};
	\draw (7.5,2.5) node{$ S$};
	\draw (3.25,3.5) node{$\Lambda_{N}( S)$};

\end{tikzpicture}
}
\caption{\small Illustration of a localized sublattice $\Lambda_{N}( S)$ (the black sites) for $r_{\max}=1$ in dimension $d=2$.
Thick lines attached to black sites indicate boundaries of support in the scaling limit.}
\label{fig:locallat}
\end{figure}

\subsection{Continuum field theory}\label{sec:continuum}
Let $\omega^{(\infty)}_\infty$ be the scaling limit state \eqref{eq:projstaterg}
of the sequence $\{\omega_0^{(N)}\}$ on a lattice field theory $\{\fA_N, \alpha^N_{N'}, \Lambda_N\}$.
In the GNS representation $\pi_\infty^{(\infty)}$ of $\fA_\infty$ with respect to $\omega^{(\infty)}_\infty$,
we obtain a family of von Neumann algebras $\pi_\infty^{(\infty)}(\fA_\infty(S))''$.
These algebras should be a continuum field theory, in the following sense.

We say that $(\cA, U, \Omega)$ is a {\bf continuum time-zero net of observables} in $\cH$,
where $\cA(S)$ is a von Neumann algebra for each region $S \subset \TT^d_L$,
$U$ is a strongly continuous unitary representation of $\TT^d_L \times \RR$
and $\Omega$ is a vector such that
\begin{enumerate}[{(}1{)}]
 \item If $S_1 \subset S_2 \subset \TT^d_L$, then $\cA(S_1) \subset \cA(S_2)$.
 \item $\Ad U(a,0)(\cA(S)) = \cA(S + a)$ for $a\in \TT^d_L$.
 \item There is $c' > 0$ such that, if $d(S_1, S_2) > \epsilon c'$
 (where $d(S_1, S_2) = \inf_{x \in S_1, y \in S_2, n \in \ZZ^d} |x-y+2Ln|$ is the distance on $\TT^d_L$ between two regions $S_1, S_2$),
 then $\Ad U(0,t)(\cA(S_1))$ commutes with $\cA(S_2)$
 for $t < \epsilon$.
 \item $U(a,t)\Omega = \Omega$.
 \item The generator of the one-parameter group $U(0,t)$ is positive.
\end{enumerate}
If $\TT^d_L$ is replaced by $\RR^d$, we call it
a {\bf continuum (infinite volume) time-zero net of observables}.
If $c=1$, these properties are restrictions of the Haag-Kastler axioms \cite{HaagLocalQuantumPhysics}
to the time-zero plane and the restriction of any Haag-Kastler net to the $t=0$ plane satisfies them.
We do not include uniqueness of the vacuum, because
it may fail for a physical reason in the construction through continuum limit (phase transition).

The Poincar\'e covariance does not follow from these time-zero axioms.
Indeed, it is an additional requirement that $U$ extends to a representation
of the Poincar\'e group. A counterexample should be obtained
if we start with the states with a wrong (nonrelativistic) dispersion relation (see e.g.~\cite{YngvasonNoteOnEssentialDuality}).

\section{Wavelet scaling of lattice scalar fields}\label{sec:waveletscalar}
\subsection{Scalar fields on lattices}
\label{sec:latscalar}

\subsubsection{First and second quantization}
\label{sec:quant}

The Weyl algebra is defined by the canonical commutation relations
(a general reference is \cite{BratteliOperatorAlgebrasAnd2}).
Let $\fh$ be a Hilbert space (which will be called the {\bf one-particle space}),
$\langle\cdot,\cdot\rangle$ its scalar product and $\sigma (\cdot,\cdot)=\Im \langle\cdot,\cdot\rangle$, where $\Im$ denotes the imaginary part,
the canonical non degenerate symplectic form.
A $C^*$-algebra $\cW(\fh)$ is said to be the {\bf Weyl algebra  associated with $(\fh,\sigma)$}
if $\cW(\fh)$ is generated by elements $\{W(\xi)\}_{\xi\in \fh}$
with the following commutation relations:
\begin{equation}\label{eq:CCR}
W(\xi)W(\eta)=e^{-\frac{i}{2}\sigma(\xi,\eta)}W(\xi+\eta), \qquad \xi,\eta\in \fh.
\end{equation}
Such a $C^*$-algebra is unique (up to isomorphism) and simple.

A fundamental representation of $\cW(\fh)$ is constructed on
the {\bf bosonic Fock space over $\fh$}: $\fF_+(\fh)=\CC\Omega\,\oplus\,\bigoplus_{n\in\NN_0}\fh^{\otimes_s n}$, where $\fh^{\otimes_s n}$ is the symmetric $n$-fold tensor product
of the one-particle space $\fh$ and $\Omega$ is called the {\bf Fock vacuum}.
On $\fF_+(\fh)$, the actions of the Weyl operators $W(\xi)$ are uniquely given by
\begin{equation*}
W(\xi)\Omega=e^{-\frac12\|\xi\|^2}e^\xi, \qquad \text{ where } \xi \in\fh \text{ and }
e^\xi = 1\oplus \xi \oplus \tfrac1{\sqrt{2!}}\xi^{\otimes 2}\oplus \cdots \oplus \tfrac1{\sqrt{n!}}\xi^{\otimes n}\oplus \cdots.
\end{equation*}
and by the canonical commutation relations \eqref{eq:CCR}:
This is the GNS representation of the algebra $\cW(\fh)$ with respect to the state $\omega(W(\xi))=\langle\Omega,W(\xi)\Omega\rangle=e^{-\frac12\|\xi\|^2}$.

We use the following repeatedly, and hence state  it as a proposition.
\begin{proposition}\label{prop:secondquantization}
A symplectic map $R$ from $\fh_1$ into $\fh_2$ induces a natural injective $^*$-homomorphism from $\cW(\fh_1)$ to $\cW(\fh_2)$,
which maps $W(\xi)$ to $W(R(\xi))$, and is a $^*$-isomorphism if $R$ is bijective.
\end{proposition}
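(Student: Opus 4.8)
The plan is to define the map on the Weyl generators by $W(\xi)\mapsto W(R\xi)$ and to extend it using the two defining properties of the CCR algebra recalled above: uniqueness and simplicity. First I would check that the candidate images satisfy the Weyl relations of $(\fh_1,\sigma_1)$, where I write $\sigma_i=\Im\langle\cdot,\cdot\rangle_{\fh_i}$. Since $R$ is real-linear we have $R\xi+R\eta=R(\xi+\eta)$, and since $R$ is symplectic we have $\sigma_2(R\xi,R\eta)=\sigma_1(\xi,\eta)$, so
\begin{equation*}
W(R\xi)W(R\eta)=e^{-\frac{i}{2}\sigma_2(R\xi,R\eta)}W(R\xi+R\eta)=e^{-\frac{i}{2}\sigma_1(\xi,\eta)}W(R(\xi+\eta)).
\end{equation*}
Thus the family $\{W(R\xi)\}_{\xi\in\fh_1}\subset\cW(\fh_2)$ obeys exactly \eqref{eq:CCR} for $(\fh_1,\sigma_1)$.

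Next I would consider the $C^*$-subalgebra $\cB:=C^*(\{W(R\xi):\xi\in\fh_1\})\subseteq\cW(\fh_2)$ that they generate. Relabelling $V(\xi):=W(R\xi)$, the previous step shows that $(\cB,\{V(\xi)\}_{\xi\in\fh_1})$ satisfies the definition of a Weyl algebra associated to $(\fh_1,\sigma_1)$; here one uses that $R$ is automatically injective, since $R\xi=0$ forces $\sigma_1(\xi,\eta)=\sigma_2(R\xi,R\eta)=0$ for all $\eta$ and hence $\xi=0$ by nondegeneracy, so the generators are genuinely indexed by $\fh_1$. By the uniqueness of the Weyl algebra (equivalently, the universal property of the CCR relations) there is a $*$-isomorphism $\cW(\fh_1)\xrightarrow{\ \sim\ }\cB$ sending $W(\xi)$ to $V(\xi)=W(R\xi)$; composing with the inclusion $\cB\hookrightarrow\cW(\fh_2)$ yields the desired unital $*$-homomorphism $\Phi$. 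Injectivity is then immediate, but it also follows a fortiori from simplicity: $\Phi$ is unital, hence nonzero, so $\Ker\Phi$ is a proper closed two-sided ideal of the simple algebra $\cW(\fh_1)$, forcing $\Ker\Phi=\{0\}$.

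Finally, for bijective $R$ I would run the same construction for $R^{-1}$, which is again real-linear and symplectic (from $\sigma_2(\zeta,\omega)=\sigma_2(RR^{-1}\zeta,RR^{-1}\omega)=\sigma_1(R^{-1}\zeta,R^{-1}\omega)$), obtaining $\Psi:\cW(\fh_2)\to\cW(\fh_1)$ with $W(\eta)\mapsto W(R^{-1}\eta)$. The compositions $\Psi\circ\Phi$ and $\Phi\circ\Psi$ fix every generator $W(\xi)$, respectively $W(\eta)$; since the Weyl relations make the linear span of the $W(\xi)$ a dense $*$-subalgebra and both compositions are continuous $*$-homomorphisms, they coincide with the identity, so $\Phi$ is a $*$-isomorphism.

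The step deserving the most care — and the reason the whole argument stays algebraic — is the passage from the generators to a genuine homomorphism. One might hope to realize $\Phi$ spatially through a second-quantized unitary $\Gamma(R)$ on Fock space, but a general symplectic $R$ need not be unitarily implementable (this is the content of Shale's criterion), so no such spatial implementation exists in general. The abstract route via uniqueness avoids this obstacle entirely; the price is that one must recognize $\cB$ as a Weyl algebra over $(\fh_1,\sigma_1)$, and the verification of the Weyl relations in the first step is precisely what licenses that identification.
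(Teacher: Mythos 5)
Your proof is correct and follows exactly the route the paper intends: the paper leaves this proposition without an explicit proof, relying on the uniqueness and simplicity of the Weyl algebra stated just before it, and your argument (verify the Weyl relations for $\{W(R\xi)\}$, invoke uniqueness to identify the generated subalgebra with $\cW(\fh_1)$, use simplicity for injectivity, and invert $R$ for the isomorphism statement) is precisely the standard elaboration of that appeal. The observation that $R$ is automatically injective by nondegeneracy of $\sigma_1$, and the remark that no spatial (Fock-space) implementation is needed, are both accurate.
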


Note that, even if $\fh_1 \subset \fh_2$, we {\it may take} a symplectic map $R$ which is different from
the inclusion map. Indeed, such different embedding correspond to
different scaling maps, see Section \ref{sec:waveletscaling}.

\subsubsection{Scalar fields at different scales}
\label{sec:freefield}

As in Section \ref{sec:quant}, to construct the Weyl algebra, one has firstly
to define the one-particle Hilbert space.
Let us consider an initial lattice
\[
 \Lambda_{\vep,r} = \vep\{-r,...,0,...,r-1\}^{d}\subset\RR^{d}
\]
with scale parameters\footnote{The scale parameters $\vep, r$ will be fixed more explicitly below.} $\vep>0$, $r\in\NN$.
We think of $\Lambda_{\vep,r}$ as a discretization of the cube $[-L,L)^{d}=\TT_{L}^{d}$ with periodic boundary conditions ($r\equiv -r$) which fixes the product of scale parameters $\vep r = L>0$. From $\Lambda_{\vep,r}$ we generate a sequence of lattices $\Lambda_{N} = \Lambda_{\vep_{N},r_{N}}$ with $\vep_{N} = 2^{-N}\vep$, $r_{N}=2^{N}r$ for $N\in\NN_{0}$. This way all lattices have the same volume $\vol(\Lambda_{N})=(2\vep_{N}r_{N})^{d}=(2L)^{d}$. 
In the following, we also need the dual lattice $$\Gamma_{\vep,r} = \tfrac{\pi}{\vep r}\{-r,...,0,...,r-1\}^d$$ and its scaled versions $\Gamma_{N} = \Gamma_{\vep_{N},r_{N}}$, with the scaling parameter defined above. 

We introduce two $(2^N r)^d$-dimensional Hilbert spaces associated with the lattices  and their dual\footnote{We emphasize the dependence on the lattice
size $L$ only on Hilbert spaces and algebras, although most of other objects depend implicitly on $L$.}
\begin{align}\label{eq:hilLat}
\fh_{N,L} & = \ell^2(\Lambda_N), & \langle\xi,\eta\rangle_{N,L} & = \sum_{x\in\Lambda_N} \bar\xi(x)\eta(x), \\
\label{eq:hilLatft}
\hat{\fh}_{N,L} & = L^2(\Gamma_N, (2r_N)^{-d}\mu_{\Gamma_N}), & \langle\hat{\xi},\hat{\eta}\rangle_{N,L} & =(2r_N)^{-d}\sum_{k\in\Gamma_N} \bar{\hat{\xi}}(k)\hat{\eta}(k),
\end{align}
where the counting measure $\mu_{\Gamma_{N}}$ acquires a factor $(2 r_{N}\!)^{-d}$ because of the normalization $(2 r_{N}\!)^{-d}\sum_{k\in\Gamma_{N}}e^{i k\cdot(x-x')}=\delta_{x,x'}$, $x,x'\in \Lambda_{N}$.

These two Hilbert spaces are identified via the discrete Fourier transform:
\begin{align}
\label{eq:Latft}
\mathscr{F}_N[\xi](k) = \hat{\xi}(k) & = \!\!\!\sum_{x\in\Lambda_N}\!\!\!\xi(x) e^{-i k\cdot x}, & \mathscr{F}_N^{-1}[\hat{\xi}](x) = \check{\hat{\xi}}(x) & = \tfrac{1}{(2r_N)^{d}}\!\!\!\sum_{k\in\Gamma_N}\!\!\!\hat{\xi}(k) e^{i k\cdot x} = \xi(x).
\end{align}

The kinematical scalar field lattice systems are given in terms of the one-particle spaces
$\fh_{N,L}$, cp.~\cite{BattleWaveletsAndRenormalization, BratteliOperatorAlgebrasAnd2}:
\begin{align}
\label{eq:scalarlatticesystems}
\cW(\fh_{N,L})\!&=\!\cW_{N,L}, & \cH_{N,L} &\!=\!\fF_{+}(\fh_{N,L})\!\cong\!\bigotimes_{x\in\Lambda_{N}}\cH_{x},
\end{align}
where $\cH_{x} = L^{2}(\RR)$, and $\cW(\fh_{N,L})$ is the Weyl algebra,
\begin{align}
\label{eq:latticeweyl}
W_{N}(\xi)W_{N}(\eta) &\!=\!e^{-\frac i2\sigma_{N,L}(\xi,\eta)}W_{N}(\xi+\eta), \qquad \xi,\eta\in\fh_{N,L},
\end{align}
with respect to the standard symplectic form, $\sigma_{N,L} = \Im\langle\!\ \cdot \!\ ,\!\ \cdot \!\ \rangle_{N,L}$.
The decomposition into real (Langragian) subspaces is facilitated by
\begin{align}\label{eq:canonicaldimension}
 \xi = \vep_{N}^{\frac{d+1}{2}}q_\xi + i\vep_{N}^{\frac{d-1}{2}}p_\xi  \text{ for }\xi \in\fh_{N,L}
\end{align}
with real-valued $q_\xi, p_\xi \in \fh_{N,L}$.
We denote by $q$, $p$ generic real-valued elements in $\fh_{N,L}$.
The finite volume field $\Phi_N(q)$ and momentum $\Pi_N(p)$ are the generators of the one-parameter groups
$W_N(tq), W_N(tp)$, respectively, and in terms of $(q,p)$ the symplectic form reads
\begin{equation}\label{eq:symplecticpq}
\sigma_{N,L}( (q_\xi,p_\xi), (q_\eta,p_\eta)) := \sigma_{N,L}(\xi,\eta) = \vep_N^d\sum_{x \in \Lambda_N} q_\xi(x)p_\eta(x)-p_\xi(x)q_\eta(x).
\end{equation}
For the Lagrangian decomposition of $\hat \fh_{N,L}$, we choose the normalization:
\begin{align}
\label{eq:fourierscaling}
\hat{q} &\!=\!\vep_{N}^{\frac{d}{2}}\sF_{N}[q], & \hat{p} &\!=\!\vep_{N}^{\frac{d}{2}}\sF_{N}[p].
\end{align}

The algebra $\cW_{N,L}$ is also naturally equipped with a *-automorphic action $\eta^{(N)}_{L}:\Lambda_N\curvearrowright\cW_{N,L}$ of lattice translations, induced via Prop.~\ref{prop:secondquantization} by the action $\tau^{(N)}_{L} : \Lambda_N \curvearrowright \fh_{N,L}$ of translations on the one-particle space, defined by the symplectic maps $(\tau^{(N)}_{L|a} \xi)(x) := \xi(x-a)$, $a,x \in \Lambda_N$, $\xi \in \fh_{N,L}$, where the difference $x-a$ is to be intended $\!\!\!\!\mod 2L \ZZ^{d}$.

\subsubsection{Scaling maps from one-particle spaces}\label{subsubsec:scalingonepart}
A renormalization group for lattice scalar fields (see Section \ref{sec:freefield}) is obtained
from a symplectic map $R_{N'}^N: \fh_{N,L} \to \fh_{N',L}$, $N'>N$, between one-particle spaces:
it induces via Prop.~\ref{prop:secondquantization} an injective $^*$-homomorphism $\alpha^{N}_{N'} : \cW_{N,L} \to \cW_{N',L}$ such that
\begin{align}
\label{eq:oneparticlerg}
\alpha^{N}_{N'}(W_N(\xi_{N}))\!=\!W_{N'}(R^{N}_{N'}(\xi_{N})).
\end{align}
We thus obtain a $C^{*}$-inductive system of lattice Weyl algebras,
\begin{align}
\label{eq:weylmra}
\cW_{0,L} \rightarrow \cdots \rightarrow \cW_{N,L} \rightarrow \cW_{N+1,L}\rightarrow \cdots \rightarrow \cW_{N',L}\rightarrow \cdots.
\end{align}
Although there is a natural inclusion $\fh_{N,L}\subset \fh_{N+1,L}$,
we {\it do not} take it as $R_{N+1}^N$, because this would mean (cf.\! Section \ref{sec:blockspin})
that we identify
an operator on a lattice vertex with an operator localized in one point in the continuum,
which does not exist. Instead, we consider maps associated with various wavelets.

As Weyl algebras are simple, thus $^*$-homomorphisms are injective, and the inductive limit of a directed system of simple $C^*$-algebras is again simple.

\subsection{An instructive example: The block-spin method}
\label{sec:blockspin}
As an illustration of the general scheme introduced above, we define a version of the block-spin renormalization group for lattice scalar fields known from spin systems (e.g. the Ising model), see
\cite{EvenblyEntanglementRenormalizationIn}.

\begin{definition}
\label{def:blockspinscaling}
The block-spin renormalization group $\{\alpha^{N}_{N'}\}_{N<N'}$ is the inductive family of $^*$-homomorphisms defined by the block-spin scaling map between one-particle Hilbert spaces:
\begin{align*}
R^{N}_{N+1} : \fh_{N,L} & \longrightarrow \fh_{N+1,L},
\end{align*}
where
\begin{align}
\label{eq:blockspinscale}
R^{N}_{N+1}(q,p)(x') & = \sum_{x\in\Lambda_{N}}(q,p)(x)\chi_{[0,\vep_{N})^{d}}(x'-x), & N & \in\NN_{0},
\end{align}
(note that $q$ and $p$ in $\fh_{N,L}$ are scaled as \eqref{eq:canonicaldimension},
that compensate the volume of the cube)
namely, the function with support in $x$ is mapped to the step function
supported in the (discrete) cube with ``lower left'' corner $x$.
Other scaling maps are defined by composition, so that the semigroup property is automatic:
\begin{align*}
R^{N}_{N'} & = R^{N'-1}_{N'}\circ R^{N'-2}_{N'-1}\circ ... \circ R^{N}_{N+1}, & N & < N'.
\end{align*}
\end{definition}
We note that $R^{N}_{N+1}$ is symplectic, because the inner product is preserved:
\begin{align*}
\sigma_{N+1,L}(R^{N}_{N+1}(q,p),R^{N}_{N+1}(q',p')) & = \sigma_{N,L}((q,p),(q',p')),
\end{align*}
Intuitively, \eqref{eq:blockspinscale} encodes the idea that the field and its momentum spatially localized at $x$ at scale $N$ result from averaging\footnote{Since we are working with the algebra of field operators, the block-spin transformation results in a refining operation $\fA_{N} = \cW_{N,L}\rightarrow\fA_{N+1} = \cW_{N+1,L}$ \eqref{eq:densityrgstate} in contrast with the familiar coarse-graining operation on states, respectively density matrices \eqref{eq:densityrg}.} over points $x'$ close by at scale $N+1$.
At the level of fields, this yields:
\begin{align}
\label{eq:blockscalefield}
\alpha^N_{N+1}(\Phi_{N}(x))
 & = 2^{-d}\sum_{x'\in\Lambda_{N+1}}\chi_{[0,\vep_{N})}(x'-x)\Phi_{N+1}(x'), \\
\label{eq:blockscalemom}
\alpha^N_{N+1}(\Pi_{N}(x))
 & = 2^{-d}\sum_{x'\in\Lambda_{N+1}}\chi_{[0,\vep_{N})}(x'-x)\Pi_{N+1}(x').
\end{align}

Formally, the block-spin scaling map encodes the relation between block-averaged (continuum) fields and momenta at scales $N\in\NN_{0}$, i.e. we think of the lattice fields and momenta as integrated against characteristic functions of lattice cubes with $\Phi, \Pi$ the continuum field and the continuum momenta,
\begin{align}
\label{eq:blockfield}
\Phi_{N}(x) & = \vep_{N}^{-d}\int_{\mathds{T}^{d}_{L}}d^{d}x'\chi_{[0,\vep_{N})^{d}}(x'-x)\Phi(x') = \vep_{N}^{-d}\Phi(\chi_{x+[0,\vep_{N})^{d}}) \text{ (formal)}, \\
\label{eq:blockmom}
\Pi_{N}(x) & = \vep_{N}^{-d}\int_{\mathds{T}^{d}_{L}}d^{d}x'\chi_{[0,\vep_{N})^{d}}(x'-x)\Pi(x') = \vep_{N}^{-d}\Pi(\chi_{x+[0,\vep_{N})^{d}}) \text{ (formal)},\end{align}
although
$\Pi$ is ill-defined for nondifferentiable functions.

Yet, the idea of embedding lattice fields into the continuum field can be justified
using wavelets. Indeed, if the problem is the non-differentiability of
step functions, we would only have to identify lattice fields with continuum fields smeared over more regular functions, as we see below.

\subsection{A generalization: The wavelet method}
\label{sec:waveletscaling}

\subsubsection{Scaling map from scaling function.}
The block-spin scaling map is a special case of wavelet scaling map which we define here.
The scaling map is induced by the maps between lattice one-particle spaces:
\begin{align}
\label{eq:oneparticlemra}
\fh_{0,L}\cdots\stackrel{R^{N-1}_{N}}{\rightarrow} \fh_{N,L} \stackrel{R^{N}_{N+1}}{\rightarrow} \fh_{N+1,L}\stackrel{R^{N+1}_{N+2}}{\rightarrow} \cdots \stackrel{R^{N'-1}_{N'}}{\rightarrow}\fh_{N',L} \stackrel{R^{N'}_{N'+1}}{\rightarrow} \cdots,
\end{align}
As we hope to embed lattice fields into the continuum field,
there should be the corresponding spaces in the continuum one-particle space:
\begin{align}
\label{eq:mra}
V_{0}\subset \cdots \subset V_{N}\subset V_{N+1}\subset \cdots \subset L^{2}(\TT^d_L).
\end{align}
In the case of the block-spin renormalization,
$V_{N}$ is spanned by the step functions  $\chi_{[0,\vep_N)^d}(\cdot - x)$,
and this is naturally included in $V_{N+1}$. Furthermore, such a step function is
a linear combination of finer step functions with a fixed set of coefficients
(see Section \ref{sec:Haarwavelet} for details of this case).
By generalizing these properties, we are led to consider wavelets.
A general reference is \cite{DaubechiesTenLecturesOn} (for periodic wavelet bases, see \cite[Section 10.7]{DaubechiesTenLecturesOn}).

We start with a {\bf scaling function} $\phi \in L^2(\RR^d)$, and define the functions $\phi^{(\vep)}_{N,k} \in L^2(\TT^d_L)$ as the $2L$-periodization of the rescaled scaling functions $\RR^d \ni x \mapsto \vep_N^{-d/2} \phi(\vep_N^{-1}x-\vep^{-1} k)$, $N \in \NN_{0}, k\in\Lambda_{\vep, r_N} = \{-\vep r_N, \dots, \vep(r_N-1)\}^d \subset (\vep\ZZ)^d$ (recall that $L=\vep r$). In particular we set $\phi^{(\vep)} := \phi^{(\vep)}_{0,0}$. The characteristic properties of $\phi$ translate into the following ones\footnote{Here
the convention is that $n$ belongs to a unit lattice (with lattice spacing $1$) while $k$ belongs to a scaled lattice.}
for the functions $\phi^{(\vep)}_{N,k}$.
\begin{itemize}
 \item $\{\phi^{(\vep)}(\ \cdot \ - k)\}_{k\in\Lambda_{\vep, r}}$ is an orthonormal system.
 \item  It holds that
  \begin{align}\label{eq:unitscalingeq}
   \phi^{(\vep)}(x) & = \sum_{n\in\Lambda_{1, r_1}}h_{n}\phi^{(\vep)}_{1,\vep n}(x),
  \end{align}
 with $h_n \in \RR$.
 \item $\{\phi^{(\vep)}_{N,k}\}_{ N  \in\NN_{0}, k\in\Lambda_{\vep, r_N}}$ span $L^2(\TT^d_L)$.
 \item $\phi^{(\vep)}$ is normalized in such a way that
 \begin{equation}\label{eq:scalingnorm}
 \int_{\TT^d_L} \phi^{(\vep)}(x)dx = \vep^{d/2}.
 \end{equation}
\end{itemize}
Note that from~\eqref{eq:unitscalingeq} and~\eqref{eq:scalingnorm} the sum rule $\sum_{n \in \Lambda_{1,r_1}} h_n = 2^{d/2}$ follows.
 
Such $\phi^{(\vep)}$ gives rise to a (half-sided) multiresolution analysis (MRA)\footnote{In the non-periodic
setting this sequence is two-sided infinite, i.e. indexed by $\ZZ$.}, i.e.~ the sequence of subspaces $V_{N}$
spanned by $\{\phi^{(\vep)}_{N,k}\}_{ k\in\Lambda_{\vep, r_N}}$, $N \in \NN_0$,
satisfying \eqref{eq:mra}
with the properties:
\begin{align}
\label{eq:MRAunion}
& \overline{\bigcup_{N\in\NN_{0}}V_{N}} = L^{2}(\TT_{L}^{d}), \\
\label{eq:MRAintersec}
& \bigcap_{N\in\NN_{0}}V_{N} = V_{0}, \\
\label{eq:MRAscaling}
& f\in V_{0} \Leftrightarrow f(2^{N}\ \cdot \ )\in V_{N}, \\
\label{eq:MRAtranslate}
& f\in V_{0} \Leftrightarrow f(\ \cdot \ - k)\in V_{0} & \forall k\in\vep\ZZ^{d}.
\end{align}

We discuss concrete examples of scaling functions in the next sections.
In case the scaling function $\phi$ is chosen compactly supported, there is $r_{\max}> 0$ such that $h_{n}\neq 0$ only for $\|n\|_{\infty}\le r_{\max}$,
where $\|n\|_{\infty}$ is the largest absolute value of the components of $n\in \Lambda_{1, r_1} \subset \ZZ^d$,
because these are the expansion coefficients of $\phi^{(\vep)}$ in the basis $\{\phi^{(\vep)}_{1,\vep n}\}$,
i.e.~$\{h_{n}\}_{\|n\|_{\infty}\le r_{\max}}$ yields a finite impulse response filter scheme,
and is called a low-pass filter \cite{DaubechiesTenLecturesOn}.
For a fixed $N$, $\{\phi^{(\vep)}_{N,k}\}_{k\in\Lambda_{\vep, r_N}}$ play the role of approximate $\delta$-functions.
We will also use the notation $\phi^{(\vep_N)}_x := \phi^{(\vep)}_{N, 2^N x}$, $x \in \Lambda_N (=\Lambda_{2^{-N}\vep, 2^N r})$, $N \in \NN_0$. Correspondingly, we denote the standard basis of $\ltwo(\Lambda_{N})$ by $\{\delta^{(N)}_{x}\}_{x\in\Lambda_{N}}$. At this point, we demand $\log_{2}r\in\NN_{0}$ as otherwise the completeness of the restricted half-sided MRA \eqref{eq:mra} to $L^{2}(\TT_{L}^{d})$ is problematic \cite{DaubechiesTenLecturesOn}.

Having in mind the identification $\delta^{(0)}_{x} \!\sim\!\phi^{(\vep)}_{x},  x \in\Lambda_{\vep,r}\!=\!\Lambda_{0}$, we make the analogue of Definition \ref{def:blockspinscaling}.

\begin{definition}
\label{def:waveletscaling}
Given a compactly supported scaling function $\phi^{(\vep)}$ with the scaling coefficients
\eqref{eq:unitscalingeq},
the wavelet renormalization group $\{\alpha^{N}_{N'}\}_{N<N'}$ is the inductive family of $^*$-homomorphisms defined by the wavelet scaling map between one-particle Hilbert spaces, $R^{N}_{N+1} : \fh_{N,L} \longrightarrow \fh_{N+1,L}$:
\begin{align}
\label{eq:waveletscale}
R^{N}_{N+1}(q,p)(x) & = 2^{\frac{d}{2}}\sum_{x'\in\Lambda_{N}} (q,p)(x') \sum_{n\in\ZZ^{d}, \|n\|_\infty \le r_{\mathrm{max}}}h_{n}\delta^{(N+1)}_{x'+n\vep_{N+1}}(x),
\end{align}
where the index of $\delta^{(N+1)}_{x'+n\vep_{N+1}}\in\ltwo(\Lambda_{N+1})$ is interpreted $\mod 2r_{N+1}(\vep_{N+1}\ZZ^{d})$, according to the periodization convention above, and
\begin{align*}
R^{N}_{N'} & = R^{N'-1}_{N'}\circ R^{N'-2}_{N'-1}\circ ... \circ R^{N}_{N+1}, & N & < N'.
\end{align*}
\end{definition}

The numerical coefficients in \eqref{eq:waveletscale} are motivated by the formal relations between fields and momenta at successive scales implied by \eqref{eq:unitscalingeq}, cp.~\eqref{eq:blockscalefield} and \eqref{eq:blockscalemom}. In the following, to simplify the notation, we will indicate sums as the second one in~\eqref{eq:waveletscale} simply as sums over $n \in \ZZ^d$, with the convention that $h_n = 0$ for $\|n\|_\infty > r_{\mathrm{max}}$.

Thanks to the property \eqref{eq:unitscalingeq}, we obtain a family of symplectic maps.
\begin{lemma}
\label{lem:waveletscaling}
The scaling maps $R^{N}_{N'}$, $N<N'$, associated with a scaling function $\phi$ are symplectic, i.e.
\begin{align*}
\sigma_{N',L}(R^{N}_{N'}(q,p),R^{N}_{N'}(q',p')) & = \sigma_{N,L}((q,p),(q',p')).
\end{align*}
\end{lemma}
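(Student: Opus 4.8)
The plan is to reduce the statement, via the semigroup definition $R^{N}_{N'}=R^{N'-1}_{N'}\circ\cdots\circ R^{N}_{N+1}$ and the fact that a composition of symplectic maps is symplectic, to the single-step case $N'=N+1$; and then to reduce \emph{that} to one orthogonality relation for the filter coefficients $\{h_{n}\}$. The key observation is that $R^{N}_{N+1}$ acts \emph{diagonally} on the Lagrangian decomposition \eqref{eq:canonicaldimension}: if I introduce the real-linear operator $T\colon\ltwo(\Lambda_{N})\to\ltwo(\Lambda_{N+1})$,
\begin{align*}
(Tq)(x) & = 2^{\frac{d}{2}}\sum_{x'\in\Lambda_{N}}q(x')\sum_{n\in\ZZ^{d}}h_{n}\,\delta^{(N+1)}_{x'+n\vep_{N+1}}(x),
\end{align*}
then $R^{N}_{N+1}(q,p)=(Tq,Tp)$ with the \emph{same} $T$ on both slots. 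Since the symplectic form \eqref{eq:symplecticpq} is built entirely from the real pairing, the whole lemma then follows from the single identity $T^{*}T=2^{d}\,\id$ together with the scaling identity $\vep_{N+1}^{d}\,2^{d}=(2\vep_{N+1})^{d}=\vep_{N}^{d}$. Indeed, granting $T^{*}T=2^{d}\,\id$,
\begin{align*}
\sigma_{N+1,L}(R^{N}_{N+1}(q,p),R^{N}_{N+1}(q',p')) & = \vep_{N+1}^{d}\sum_{x\in\Lambda_{N+1}}\big((Tq)(x)(Tp')(x)-(Tp)(x)(Tq')(x)\big) \\ & = \vep_{N+1}^{d}\,2^{d}\sum_{x'\in\Lambda_{N}}\big(q(x')p'(x')-p(x')q'(x')\big) = \sigma_{N,L}((q,p),(q',p')).
\end{align*}

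To prove $T^{*}T=2^{d}\,\id$ I would compute directly, using $\langle\delta^{(N+1)}_{a},\delta^{(N+1)}_{b}\rangle=\delta_{a,b}$:
\begin{align*}
\sum_{x\in\Lambda_{N+1}}(Tq)(x)(Tq')(x) & = 2^{d}\sum_{x',y'\in\Lambda_{N}}q(x')q'(y')\sum_{n,m}h_{n}h_{m}\,\delta_{x'+n\vep_{N+1},\,y'+m\vep_{N+1}} \\ & = 2^{d}\sum_{x'\in\Lambda_{N}}\sum_{\ell\in\ZZ^{d}}q(x')\,q'(x'-\ell\vep_{N})\sum_{n}h_{n}h_{n+2\ell} = 2^{d}\sum_{x'\in\Lambda_{N}}q(x')q'(x'),
\end{align*}
where the middle equality uses that $x',y'\in\Lambda_{N}=\vep_{N}\ZZ^{d}\cap\TT^{d}_{L}$ (with $\vep_{N}=2\vep_{N+1}$) forces the coincidence $x'+n\vep_{N+1}=y'+m\vep_{N+1}$ to occur only for $m-n=2\ell$ even and $y'=x'-\ell\vep_{N}$, and the last equality is the quadrature-mirror-filter relation $\sum_{n}h_{n}h_{n+2\ell}=\delta_{\ell,0}$. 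This last relation is exactly where property \eqref{eq:unitscalingeq} enters: feeding the scaling equation into the orthonormality of $\{\phi^{(\vep)}(\cdot-k)\}$ gives, for $\ell=k-k'$,
\begin{align*}
\delta_{k,k'} & = \langle\phi^{(\vep)}(\cdot-\vep k),\phi^{(\vep)}(\cdot-\vep k')\rangle = \sum_{n,n'}h_{n}h_{n'}\,\delta_{n+2k,\,n'+2k'} = \sum_{n}h_{n}h_{n+2\ell},
\end{align*}
using $\phi^{(\vep)}(\cdot-\vep k)=\sum_{n}h_{n}\phi^{(\vep)}_{1,\vep(n+2k)}$ and the orthonormality of $\{\phi^{(\vep)}_{1,\vep m}\}_{m}$.

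A conceptually cleaner variant of this second step, which I would at least mention, is to identify $2^{-d/2}T$ with the multiresolution inclusion $V_{N}\hookrightarrow V_{N+1}$ under the isometry $\ltwo(\Lambda_{N})\to V_{N}\subset L^{2}(\TT^{d}_{L})$, $\delta^{(N)}_{x}\mapsto\phi^{(\vep_{N})}_{x}$. The scaling equation \eqref{eq:unitscalingeq}, rescaled to level $N$, reads $\phi^{(\vep_{N})}_{x'}=\sum_{n}h_{n}\,\phi^{(\vep_{N+1})}_{x'+n\vep_{N+1}}$, so that $2^{-d/2}T$ is precisely this inclusion and is isometric by orthonormality of the $\{\phi^{(\vep_{N})}_{x}\}_{x\in\Lambda_{N}}$; this gives $T^{*}T=2^{d}\,\id$ at once. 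I expect the main obstacle to be neither identity \emph{per se} but the bookkeeping on the torus: the indices of the $\delta^{(N+1)}$ are read $\bmod\,2r_{N+1}(\vep_{N+1}\ZZ^{d})$, so in principle the periodization could create coincidences $x'+n\vep_{N+1}\equiv y'+m\vep_{N+1}$ beyond the genuine ones and spoil the cancellation in the filter sum. This is handled exactly by the standing hypotheses $\log_{2}r\in\NN_{0}$ and the finiteness of the low-pass filter ($h_{n}=0$ for $\|n\|_{\infty}>r_{\mathrm{max}}$), which guarantee that the periodic scaling functions $\{\phi^{(\vep_{N})}_{x}\}_{x\in\Lambda_{N}}$ remain an orthonormal system; invoking the periodic orthonormality directly (the MRA route above) is the safest way to dispatch the wraparound without tracking it term by term.
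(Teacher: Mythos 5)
Your proof is correct and follows essentially the same route as the paper's: a direct computation reducing the single-step case to the quadrature-mirror-filter relation $\sum_{n}h_{n}h_{n+2m}=\delta_{m,0}$, itself derived from the scaling equation \eqref{eq:unitscalingeq} and the orthonormality of the integer translates of $\phi^{(\vep)}$, with the general case following by composition. Your explicit attention to the torus wraparound (and the cleaner reformulation via the isometric MRA inclusion $V_{N}\hookrightarrow V_{N+1}$) is a welcome addition that the paper's proof leaves implicit, but it does not change the substance of the argument.
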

\begin{proof}
A direct computation for $R^{N}_{N+1}$ with $N\in\NN_{0}$ arbitrary yields:
\begin{align*}
 & \vep_{N+1}^{d}\sum_{x'\in\Lambda_{N+1}}R^{N}_{N+1}(p)(x')R^{N}_{N+1}(q')(x') \\
 & = \frac{\vep_{N}^{d}}{2^{d}}\sum_{x'\in\Lambda_{N+1}}\left(2^{\frac{d}{2}}\sum_{x\in\Lambda_{N}}p(x)\sum_{n\in\ZZ^{d}}h_{n}\delta^{(N+1)}_{x+n\vep_{N+1}}(x')\right)\left(2^{\frac{d}{2}}\sum_{y\in\Lambda_{N}}q'(y)\sum_{m\in\ZZ^{d}}h_{m}\delta^{(N+1)}_{y+m\vep_{N+1}}(x')\right) \\
 & = \vep_{N}^{d}\sum_{x,y\in\Lambda_{N}}p(x)q'(y)\sum_{n,m\in\ZZ^{d}}h_{n}h_{m}\delta_{n-m,2\vep_{N}^{-1}(y-x)} = \vep_{N}^{d}\sum_{x,y\in\Lambda_{N}}p(x)q'(y)\underbrace{\sum_{n\in\ZZ^{d}_{r}}h_{n}h_{n+2\vep_{N}^{-1}(x-y)}}_{= \delta_{x,y}} \\[-0.7cm]
 & = \vep_{N}^{d}\sum_{x\in\Lambda_{N}}p(x)q'(x),
\end{align*}
where we used the fact that $2\vep_{N}^{-1}(x-y)\in 2\ZZ^{d}$ for $x,y\in\Lambda_{N}$ and the orthogonality relation $\sum_{n\in\ZZ^{d}}h_{n}h_{n+2m} = \delta_{m,0}$ which follows from the scaling equation \eqref{eq:unitscalingeq} and
the orthonormality of $\{\phi^{(\vep)}_{0,k}\}_{k \in \Lambda_{\vep,r}}$. The statement is then obtained by~\eqref{eq:symplecticpq}.\hfill$\square$
\end{proof}
We note that a priori there is no need to choose the same low-pass filter $\{h_{n}\}_{n\in\ZZ^{d}}$ for the complementary real, Lagrangian subspaces of the decomposition \eqref{eq:canonicaldimension} of $\fh_{N,L}$, and the scaling map $R^{N}_{N'}$ will be symplectic for biorthogonal wavelets as well \cite{DaubechiesTenLecturesOn}.

Therefore, according to the discussion in Section \ref{subsubsec:scalingonepart}, we denote by $\cW_{\infty,L}$ the inductive limit $C^*$-algebra obtained from the inductive system defined by~\eqref{eq:oneparticlerg}, \eqref{eq:weylmra}. Moreover, it is easy to check that for the one-particle lattice translations $\tau^{(N)}_{L|a}$, $a \in \Lambda_N$, there holds $\tau^{(N')}_{L|a}\circ R^N_{N'} = R^N_{N'} \circ\tau^{(N)}_{L|a}$, which of course entails,
for the lattice translation automorphisms $\eta^{(N)}_{L|a}$ of $\cW_{N,L}$,
\begin{equation}\label{eq:translationcoherence}
\eta^{(N')}_{L|a}\circ\alpha^{N}_{N'} = \alpha^{N}_{N'}\circ\eta^{(N)}_{L|a}, \qquad a \in \Lambda_N, \, N' > N.
\end{equation}
This, together with the inductive limit uniqueness, implies the existence of an automorphic action of the dyadic traslations on the inductive limit algebra as follows. Given $a \in \bigcup_N \Lambda_N$ and $N' \in \NN$, since $\Lambda_N \subset \Lambda_{N+1}$ we can assume that $a \in \Lambda_N$ with $N \geq N'$, and we can define an injective *-morphism $\beta^{N'}_\infty : \cW_{N',L} \to \cW_{\infty,L}$ by
\[
\beta^{N'}_\infty(A) := \alpha^{N}_{\infty}(\eta^{(N)}_{L|a}(\alpha^{N'}_{N}(A))), \qquad A \in \cW_{N',L}.
\]
Thanks to the intertwining property~\eqref{eq:translationcoherence}, $\beta^{N'}_{\infty}$ is independent of the chosen $N$ such that $a \in \Lambda_N$, and one immediately checks that $\beta^{N'}_{\infty}\circ\alpha^{N''}_{N'} = \beta^{N''}_\infty$ for $N'\geq N''$. Therefore, by the uniqueness of the $C^*$-inductive limit, there exists a *-automorphism $\eta^{(\infty)}_{L|a} : \cW_{\infty,L} \to \cW_{\infty,L}$ such that $\eta^{(\infty)}_{L|a}\circ\alpha^{N'}_{\infty} = \beta^{N'}_{\infty}$ for all $a \in \bigcup_N \Lambda_N$ and $N' \in \NN$, i.e.,
\begin{equation}\label{eq:dyadictrans}
\eta^{(\infty)}_{L|a}(\alpha^{N'}_{\infty}(A)) = \alpha^{N}_{\infty}(\eta^{(N)}_{L|a}(\alpha^{N'}_{N}((A)), \qquad A \in \cW_{N',L}, \,a \in \Lambda_N,\, N \geq N',
\end{equation}
and $\eta^{(\infty)}_{L}: \bigcup_N \Lambda_N  \curvearrowright \cW_{\infty,L}$ defines the required action of the dyadic translation group.
 
Invoking the discrete Fourier transform \eqref{eq:Latft}, we obtain the important identity in momentum space, i.e. with respect to the inclusion of dual lattices $\Gamma_{N}\subset\Gamma_{N+1}$:
\begin{align}
\label{eq:oneparticlescalingfourier}
R^{N}_{N+1}(\hat{q}_{N},\hat{p}_{N})& = 2^{\frac{d}{2}}m_{0}(\vep_{N+1}\!\ \cdot \!\ )(\hat{q}_{N+1},\hat{p}_{N+1}),
\end{align}
where $m_{0}(\vep_{N+1}k)\!=\!2^{-\frac{d}{2}}\sum_{n\in\ZZ^{d}}h_{n}e^{-i\vep_{N+1} n k}$ (the discrete Fourier transform of the
coefficients $h_n$) and $\hat{q}_{N+1},\hat{p}_{N+1}$ are the periodic extension of $\hat{q}_{N},\hat{p}_{N}$ from $\Gamma_{N}$ to $\Gamma_{N+1}$. From \eqref{eq:oneparticlescalingfourier}, we easily compute the iterated maps,
 \begin{align}
 \label{eq:oneparticlescalingiterated}
 \hspace{-0.15cm}R^{N}_{N'}(\hat{q}_{N},\hat{p}_{N}) & = 2^{\frac{d(N'-N)}{2}}\!\!\prod^{N'-N}_{n=1}\!\!\!m_{0}(\vep_{N+n}\!\ \cdot \!\ )(\hat{q}_{N'},\hat{p}_{N'}).
 \end{align}
This expression already indicates the special role of the scaling function $\phi$ in the limit $N'\rightarrow\infty$
(if $r_{\max}$ is finite) because \cite[(6.2.2)]{DaubechiesTenLecturesOn}:
\begin{align}
\label{eq:scalingprod}
\lim_{N'\rightarrow\infty}\prod^{N'-N}_{n=1}\!\!\!m_{0}(\vep_{N+n}\!\ \cdot \!\ ) & = \hat \phi(\vep_N\cdot)= \vep^{-d/2} \hat{\phi}^{(\vep)}(2^{-N}\cdot) = \vep_N^{-d/2} \hat{\phi}^{(\vep_N)}_0
\end{align}
pointwise, which results from the (continuum) Fourier transform (normalized as in~\eqref{eq:minkft}) of the scaling equation \eqref{eq:unitscalingeq}:
\begin{align}
\label{eq:unitscalingeqft}
\hat{\phi}^{(\vep)}(2^{-N}k) & = m_{0}(\vep_{N+1}k)\hat{\phi}^{(\vep)}(2^{-(N+1)}k), & k\in\tfrac{\pi}{L}\ZZ^{d},
\end{align}
together with the normalization~\eqref{eq:scalingnorm}.
Thus, apart from an infinite field-strength renormalization accounting for the rescaling of the symplectic structure $\sigma_{N',L}$ with $N'$, the inductive limit map, $R^{N}_{\infty}:\fh_{N,L}\rightarrow\fh_{\infty,L} = \varinjlim_{N}\fh_{N,L}$,
is given by multiplication with the Fourier transform $\vep^{-\frac{d}{2}}_{N}\hat{\phi}^{(\vep_{N})}_{0}$ in momentum space.

\subsubsection{Wavelet bases.}
Before turning to a detailed description of the wavelet method, let us also comment on the use of the wavelet basis of $L^{2}(\TT_{L}^{d})$ constructed from the scaling function \eqref{eq:Haarwav}. In general, such a basis is obtained by rewriting the MRA \eqref{eq:mra} as a direct sum decomposition:
\begin{align}
\label{eq:directsumwav}
V_{0}\oplus V'_{0}\oplus ... \oplus V'_{N}\oplus V'_{N+1}\oplus ... & = L^{2}(\TT_{L}^{d}).
\end{align}
This is achieved by finding an orthogonal decomposition of the orthogonal complement of $V_{0}$ inside $V_{1}$:
\begin{align}
\label{eq:wavcomp}
V_{1} & = V_{0}\oplus V'_{0} = V_{0}\oplus\bigoplus^{2^{d}-1}_{m=1} V^{\prime, m}_{0},
\end{align}
such that  there are distinguished orthonormal functions -- the {\bf wavelets}, $\{^{m}\psi^{(\vep)}_{0}\}^{2^{d}-1}_{m=1}$, which together with their integer translates span the spaces $V^{\prime, m}_{0}$. The spaces $V^{\prime, m}_{N}$ are constructed by rescaling the wavelets, thus, these spaces inherit the scaling properties of the $V_{N}$. \\[0.1cm]
As the use of wavelets compared to that of the scaling function turns out to be important for the characterization of the inductive-limit algebra $\cW_{\infty,L}$ and the scaling limit $\omega^{(\infty)}_{\infty} = \omega^{(\infty)}_{L,\infty}$ \eqref{eq:projstaterg} (see Section \ref{sec:continuumlimit}), we also provide some details on the construction of wavelets.

Let us discuss wavelets in the unit lattice, $\vep=1$ for the case $d=1$.
Higher dimensional wavelets can be obtained by tensor product, cf.~\cite{DaubechiesTenLecturesOn}.

Given the scaling function, $\phi^{(\vep=1)}_{0} = \phi$, the wavelet $^{m=1}\psi^{(\vep=1)}_{0}=\psi$ can be constructed by the formula:
\begin{align*}
\psi(x) & = \sum_{n\in\ZZ}g_{n}\phi_{1,n}(x),
\end{align*}
where the wavelet coefficients $\{g_{n}\}_{n\in\ZZ}$ (also called a high-pass filter) are obtained explicitly in terms of the expansion coefficients $\{h_{n}\}_{n\in\ZZ}$ of the scaling equation \eqref{eq:scalingeq}. Albeit, the choice of wavelet coefficients is not unique, convenient options include:
\begin{align}
\label{eq:waveletcoeff}
g_{n} & = (-1)^{n}\overline{h_{1-n}}, & \textup{ or } & & g_{n} & = (-1)^{n}h_{1-n+2M}, & \forall n&\in\ZZ,
\end{align}
for some $M\in\ZZ$, and we take one of them.
By construction, the set of functions $\{\phi\}\cup\{\psi_{N,k}\}_{N\in\NN_{0},k\in\{-2^{N}r,...,2^{N}r-1\}}$,
where $\psi_{N,k}(x) = 2^{\frac{Nd}{2}}\psi(2^{N}x- k)$,
forms an orthonormal basis of $L^{2}(\TT_{L=r})$. 
The orthogonal decomposition $V_{N+1}=V_{N}\oplus V'_{N}$ analogous to \eqref{eq:wavcomp} is captured by an important projection identity involving the wavelets $\psi_{N,k}$ and scaling functions $\phi_{N,k}$:
\begin{align}
\label{eq:waveletproj}
\textup{Proj}_{V_{N+1}}(f) & = \sum_{k=-2^{N+1}r}^{2^{N+1}r-1}\langle\phi_{N+1,k},f\rangle_{L^{2}}\phi_{N+1,k} = \textup{Proj}_{V_{N}}(f) + \sum_{k=-2^Nr}^{2^Nr-1}\langle\psi_{N,k},f\rangle_{L^{2}}\psi_{N,k} \\ \nonumber
 & = \sum_{k=-2^Nr}^{2^Nr-1}\langle\phi_{N,k},f\rangle_{L^{2}}\phi_{N,k} + \sum_{k=-2^Nr}^{2^Nr-1}\langle\psi_{N,k},f\rangle_{L^{2}}\psi_{N,k},\hspace{1cm} f\in L^{2}(\TT_{L}).
\end{align}
Iterating and using $\lim_{N\rightarrow\infty}\textup{Proj}_{V_{N}} = \id_{L^{2}(\TT_{L})}$, we have an expansion:
\begin{align}
\label{eq:waveletexp}
f & = \sum_{k=-r}^{r-1}\langle\phi_{0,k},f\rangle_{L^{2}}\phi_{0,k} + \sum_{N\in\NN_{0}}\sum_{k=-2^Nr}^{2^Nr-1}\langle\psi_{N,k},f\rangle_{L^{2}}\psi_{N,k}.
\end{align}

\subsubsection{Block-spin renormalization in terms of orthogonal Haar wavelets}
\label{sec:Haarwavelet}
The above interpretation \eqref{eq:blockfield} and \eqref{eq:blockmom} of lattice fields as continuum field smeared with characteristic functions, $\{\vep_{N}^{-d}\chi_{x+[0,\vep_{N})^{d}}\}_{x\in\Lambda_{N}}$, can be understood as a special instance of the general scheme of wavelet scaling discussed above.
We take the characteristic function $\chi_{[0,\vep)^{d}}$ as the scaling function underlying the construction of the (periodic) Haar wavelets basis of $L^{2}(\TT_{L}^{d})$ with scaling parameters $\vep r = L$ \cite{BattleWaveletsAndRenormalization, MeyerWaveletsAndOperators, CohenWaveletsOnThe, DaubechiesTenLecturesOn}. More precisely, the function,
\begin{align}
\label{eq:Haarwav}
\phi^{(\vep)}_{x}(y) & = \vep^{-\frac{d}{2}}\chi_{[0,\vep)^{d}}(y-x) = \vep^{-\frac{d}{2}}\chi_{[0,1)^{d}}(\vep^{-1}(y-x)),
\end{align}
has an associated (half-sided) multiresolution analysis (MRA), i.e.~there is a sequence of subspaces of step functions
$\{V_{N}\}_{N\in\NN_{0}}$ satisfying~\eqref{eq:MRAunion}-\eqref{eq:MRAtranslate}.
In this case, we have $r_{\max} = 1$.

By dimensionality, we can identify the real subspace spanned by this basis $\{ \phi^{(\vep)}_{N,k}\}$ with the Lagrangian subspaces of $\fh_{N,L}$, cp. Section \ref{sec:freefield}. Moreover, Definition \ref{def:blockspinscaling} of the block-spin scaling map can be obtained from the expansion of $\phi^{(\vep)}_{0}$ into the functions $\{\phi^{(\vep)}_{1,k}\}$, which is equivalent to the inclusion $V_{0}\subset V_{1}$ and similarly, an embedding $\fh_{0,L}\rightarrow\fh_{1,L}$ is given by
writing a step function as a linear combination of finer step functions:
\begin{align}
\label{eq:Haarscale}
\phi^{(\vep)}_{0}(x) & = \sum_{k\in\NN_{0}^{d}, \|k\|_\infty \leq r_{\max}} 2^{-\frac{d}{2}} \phi^{(\vep)}_{1,\vep k}(x).
\end{align}
Now, \eqref{eq:Haarscale} can be interpreted as the well-known scaling equation,
\begin{align}
\label{eq:scalingeq}
\phi^{(\vep)}_{0}(x) & = \sum_{n\in\ZZ^{d}}h_{n}\phi^{(\vep)}_{1, \vep n}(x),
\end{align}
of the theory of wavelets
associated with the scaling function \eqref{eq:Haarwav} (and the MRA \eqref{eq:mra}).
The Haar wavelet basis in $L^{2}([0,1))$ ($d=1$) is usually given as:
\begin{align*}
\psi_{0,0}(x)  = (-1)\phi(2x-1)+(+1)\phi(2x) =
 \begin{cases}
   1 & x\in[0,\tfrac{1}{2}) \\
   -1 & x\in[\tfrac{1}{2},1),
 \end{cases}
\end{align*}
\begin{align*}
\psi_{N,k}(x) & = 2^{\frac{N}{2}}\psi_{0,0}(2^{N}x-k), & N\in\NN_{0}, &\ 0\leq k\leq 2^{N}-1.
\end{align*}

The relation between fields and momenta at successive scales \eqref{eq:blockscalefield} and
\eqref{eq:blockscalemom} in terms of the scaling function $\phi^{(\vep)}_{0}$ can be reformulated as follows:
\begin{align}
\label{eq:blockscalefieldre}
\Phi_{N}(x) & = \vep_{N}^{-\frac{d}{2}}\Phi(\phi^{(\vep_{N})}_{x}) = 2^{-\frac{d}{2}} \vep_{N+1}^{-\frac{d}{2}}\sum_{n\in\ZZ^{d}}h_{n}\Phi(\phi^{(\vep_{N})}_{1, 2x+\vep_{N}n}) \\ \nonumber
 & = 2^{-\frac{d}{2}}\sum_{n\in\ZZ^{d}}h_{n}\vep_{N+1}^{-\frac{d}{2}}\Phi(\phi^{(\vep_{N+1})}_{x+\vep_{N+1}n}) = 2^{-\frac{d}{2}}\sum_{n\in\ZZ^{d}}h_{n}\Phi_{N+1}(x+\vep_{N+1}n),\\
\label{eq:blockscalemomre}
\Pi_{N}(x) & = 2^{-\frac{d}{2}}\sum_{n\in\ZZ^{d}}h_{n}\Pi_{N+1}(x+\vep_{N+1}n).
\end{align}

\subsubsection{The Daubechies wavelets}\label{sec:Daubechies}
In our renormalization scheme of Section \ref{sec:continuum},
observables in the lattice field theories should be mapped to certain observables in the continuum theory.
In order to obtain a continuum field as the scaling limit of lattice fields,
we need to choose a scaling function $\phi$ which is both localized and sufficiently regular.
This goal is achieved by the so-called Daubechies wavelets \cite[Chapter 6]{DaubechiesTenLecturesOn}.

This is a family of scaling functions $\{_K\phi\}_{K \in \NN}$ with various support properties and regularity,
although no closed expression for them is known.
The scaling function $_K\phi$ satisfies the \eqref{eq:scalingeq}
with $\{h_n\}$ with $h_n = 0$ for $n \ge 2K$ \cite[Table 6.1]{DaubechiesTenLecturesOn},
is supported in $[0,2K-1]$ and belongs to the class $C^{\alpha - \epsilon}$ for arbitrary $\epsilon > 0$,
where the dependence of $\alpha$ on $K$ is given by \cite[Table in P.226]{DaubechiesTenLecturesOn}.
In particular, with $K=2$, $\alpha \cong 0.339$, and this suffices for our purpose to
construct both the field and momentum operators in the continuum.

In the following, we take $\phi = _K\!\!\phi$ and define $R_{N+1}^N$
as in \eqref{eq:waveletscale}.

\section{The continuum limit of the free vacua}\label{sec:continuumlimit}
On the field algebras $\{\cW_{N,L}\}$ on the lattices with the family of scaling maps $\{\alpha^{N}_{N'}\}_{N<N'}$ based on the Daubechies wavelets,
we construct a family of initial states $\{\omega^{(N)}_{L,0}\}_{N\in\NN_{0}}$, consisting of the initial data of Wilson's triangle of renormalization (see Figure \ref{fig:statetrianglerg}).
We restrict our attention to free fields, i.e.\! we choose as initial states a family of ground states of the free lattice Hamiltonian $H^{(N)}_{L,0}$.

As $\{\cW_{N,L}\}$ can be interpreted as the time-zero algebra of the lattice field $\Phi_N(x), \Pi_N(x)$ and $\delta$-functions on the lattice
$\Lambda_N$ are mapped to the scaling functions by \eqref{eq:scalingprod}, a simplest choice is to take the vacuum state in the continuum
$\omega_L$ and to restrict it to the image of $\{\cW_{N,L}\}$ by the map. It is straightforward to show that this yields
a time-zero net of local algebras (by using the properties of the continuum free field). This is, however, available just because
we know explicitly what the continuum state should look like.
In a constructive program, one should take a natural Hamiltonian at each scale, and consider the sequence of the ground states.
Below we show that this indeed gives the continuum free field state as well in the scaling limit, and hence gives rise to a continuum net.

\subsection{States on lattice and continuum fields}
\subsubsection{Ground states of lattice free fields}
\label{sec:latvac}
The (self-adjoint) free lattice Hamiltonian $H^{(N)}_{L,0}$ of (unrenormalized) mass $\mu_{N}>\sqrt{2d}$ is defined on a dense domain $\cD_{N,L}\subset\cH_{N,L} = \fF_+(\fh_{N,L})$ by the expression \cite[(1.8.17)]{BattleWaveletsAndRenormalization} (up to a reparametrization)  :
\begin{align}
\label{eq:freeham}
H^{(N)}_{L,0} & = \tfrac{1}{2}\vep_{N}^{d}\bigg(\sum_{x\in\Lambda_{N}}\Pi_{N|x}^{2}+\mu^{2}_{N}\vep_{N}^{-2}\Phi_{N|x}^{2}
-2\sum_{\underset{\text{adjacent}}{x,y \in\Lambda_{N}}}\vep_{N}^{-2}\Phi_{N|x}\Phi_{N|y}\bigg).
\end{align}

To be precise, we define the domain $\cD_{N,L} = \cD_{\mu_{N}}$ depending explicitly on $\mu_{N}$ via the dispersion relation, $\gamma_{\mu_{N}}^{2}(k) = \vep_{N}^{-2}(\mu_{N}^{2}-2d)+2\vep_{N}^{-2}\sum^{d}_{j=1}(1-\cos(\vep_{N}k_{j}))$, $k\in\Gamma_{N}$:
\begin{align}
\label{eq:freehamdom}
\cD_{\mu_{N}} & = \bigg\{\Psi\in\cH_{N,L}\ :\ \sum^{\infty}_{n=0}\bigg\|\sum^{n}_{j=1}\gamma_{\mu_{N}}((\ \cdot \ )_{j})\Psi_{n}\bigg\|_{\ltwo(\Gamma_{N})^{\otimes n}}^{2}<\infty\bigg\},
\end{align}
where $\gamma_{\mu_{N}}((\ \cdot \ )_{j})$ denotes the multiplication operator on the $j$-th tensor component.
The generators of the Weyl algebra $\cW_{N,L}$ are related to the field, $\Phi_{N}$, and momentum, $\Pi_{N}$, in the usual way (in the Fock representation on $\cH_{N,L}$): $W_{N}(\xi)=e^{i(\Phi_{N}(q)+\Pi_{N}(p))}$ with $\xi = \vep_{N}^{\frac{d+1}{2}}q + i\vep_{N}^{\frac{d-1}{2}}p$ as in Section \ref{sec:freefield}. The ground state (or lattice vacuum) $\Omega_{\mu_{N}}$ of $H^{(N)}_{L,0}$ gives the following state on $\cW_{N,L}$:
\begin{align}
\label{eq:freeground}
\omega_{\mu_{N}}(W_{N}(\xi)) & = e^{-\frac{1}{4}\left(\left\|\gamma_{\mu_{N}}^{-1/2}\hat{q}\right\|_{N,L}^{2}+\left\|\gamma_{\mu_{N}}^{1/2}\hat{p}\right\|_{N,L}^{2}\right)},
\end{align}
and the GNS construction applied to $\cW_{N,L}$ with respect to $\omega_{\mu_{N}}$ yields a representation which is unitarily equivalent to that on $\cH_{N,L}$ such that $\Omega_{\mu_{N}}$ is identified with the cyclic GNS vector.

Let us introduce the ($N$-dependent) physical mass $m$ by $\mu_{N}^{2} = \vep_{N}^{2}m^{2}+2d$.
The Fock representation of mass $m$ of the lattice scalar field $\cW_{N,L}$ is determined by the Fock vacuum state\footnote{We drop
the dependence of the state $\omega_{L,0}^{(N)}$ on $m$, because we do not change $m>0$ during the paper.}
\begin{align}
\label{eq:freegroundmass}
\omega_{L,0}^{(N)}(W_N(\xi)) = e^{-\frac{1}{2}\left\|\hat{\xi}^{(m)}\right\|^{2}_{N,L}}, \qquad \xi \in \fh_{N,L},
\end{align}
where
\begin{align*}
&\hat{\xi}^{(m)}(k)  = \tfrac{i}{\sqrt2} \left(\gamma^{(N)}_{m}(k)^{-\frac{1}{2}}\hat{q}(-k)+i\gamma^{(N)}_{m}(k)^{\frac{1}{2}}\hat{p}(-k)\right) 
\end{align*}
with $\gamma^{(N)}_{m}(k)^{2} = m^{2}+2\vep_{N}^{-2}\sum^{d}_{j=1}(1-\cos(\vep_{N}k_{j}))$, $k\in\Gamma_{N}$.
We actually have
\begin{align}
\label{eq:normxiN}
\left\|\hat{\xi}^{(m)}\right\|_{N,L}^{2} & = \tfrac{1} {2{(2r_N)}^{d}}\sum_{k\in\Gamma_N}\left(\gamma^{(N)}_{m}(k)^{-1}\hat{q}(-k)\hat{q}(k) + \gamma^{(N)}_{m}(k)\hat{p}(-k)\hat{p}(k)\right) \\ \nonumber
& = \tfrac{1}{2}\left(\left\|(\gamma^{(N)}_{m})^{-1/2}\hat{q} \right\|_{N,L}^{2}+ \left\|(\gamma^{(N)}_{m})^{1/2}\hat{p} \right\|_{N,L}^{2}\right),
\end{align}
and hence $\omega_{\mu_N} = \omega_{L,0}^{(N)}$, and their GNS representations are the same.
The expression \eqref{eq:normxiN} facilitates its relation to states on the continuum, see next Section.

The lattice ``mass'' $\mu_{N}$ is a dimensionless parameter and one can consider
its ``flow'' with respect to the action of renormalization group \eqref{eq:staterg} on states and Figure \ref{fig:statetrianglerg}, cf. \cite{WilsonTheRenormalizationGroupKondo}, in the following sense: At each scale $N$, we initially fix $\mu_{N} = \mu$ and follow the variation of the parameter $\mu_{N}(M) = \mu_{N+M}$ entering the renormalized states $\omega^{(N)}_{L,M}$ as a function of $M$.

Although, strictly speaking, the ``mass" parameter $\mu_{N+M}$ is not of the type as the initial ``mass'' $\mu_{N}$ because the scaling map $\alpha^{N}_{N+M}$ modifies the form of the dispersion relation and, thereby, the lattice interactions (if we were to interpret $\omega^{(N)}_{L,M}$ as ground state itself). This could be compensated by separating the part of $\omega^{(N)}_{L,M}$ that resembles the initial $\omega^{(N)}_{L,0}$, which would depend on $\mu_{N+M}$, in addition to $\prod^{M}_{n=1}m_{0}(\vep_{N+n})$ (see Section \ref{sec:groundstate}).

In order to obtain a convergent sequence of states as we scale $\vep_N$, the dependence of
$\mu_N$ on this UV cutoff must be fixed in such a way that it determines a finite physical mass $m$ in the infrared.

\subsubsection{Continuum free scalar field}\label{sec:contfree}

Let us next fix the notations about the free scalar field on the continuum, both on the cylinder and on Minkowski spacetime (i.e., for the finite and infinite volume cases respectively).
We pick a mass $m>0$.

\paragraph{On the cylinder.}

We will regard the torus ${ \TT_L^d}=[-L,L)^d$
as the time zero Cauchy surface $\TT_L^d \simeq \{0\} \times \TT_L^d$ of the cylinder spacetime $\RR\times\TT_L^d$ endowed with the Minkowski metric. 
The Fourier transform of a function $\xi$ on $\TT_L^d$ is defined as
\begin{align} 
\label{eq:cylft}
\hat\xi(k)= \int_{[-L,L)^d} \xi(x)e^{-ixk} dx, \qquad k \in \tfrac{\pi}{L}\ZZ^d.
\end{align}

The one-particle Hilbert space $\fh_L$ of the mass $m$ free scalar field on the cylinder is  the completion of $C^\infty(\TT_L^d, \CC)$ with the  scalar product defined as follows: let $\xi,\eta\in C^\infty(\TT_L^d, \CC)$ whose real and imaginary parts are denoted by $\xi=q_\xi+ip_\xi$ and $\eta=q_\eta+ip_\eta$,
then their scalar product is
\begin{align}\label{eq:spcyl}
\langle\xi,\eta\rangle_L=\tfrac1{(2L)^{d}}\!\!\!\!\!\!\sum_{k\in\frac\pi L\ZZ^d}\!\!\!\overline{\left(\!\gamma_m^{-1/2}(k)\hat q_\xi(k)+i\gamma_m^{1/2}(k)\hat p_\xi(k)\!\!\right)}\!\left(\!\gamma_m^{-1/2}(k)\hat q_\eta(k)+i\gamma_m^{1/2}(k)\hat p_\eta(k)\!\!\right),
\end{align}
where $\gamma_m(k)=(k^2+m^2)^{\frac12}$ is the continuum dispersion relation.
The complex structure (``multiplication by the imaginary unit'') on $\fh_L$ is defined as
\[
 q+ip\longmapsto -\gamma_m p+i \gamma_m^{-1}q ,
\]
where $\widehat{(\gamma_m\xi)}(k)=\gamma_m(k)\hat\xi (k)$. The associated symplectic form is
\begin{align}\label{eq:sympcont}
\sigma_L(\xi,\eta) := \Im\langle\xi,\eta\rangle_L =\Im\bigg(\;\int\limits_{[-L,L)^{d}}d^{d}x\, {\bar \xi(x)\eta(x)}\bigg).
\end{align}

The Fock representation $\pi_{L}$ of the Weyl algebra $\cW(\fh_L)$ is the GNS representation induced by the state
\begin{align}\label{eq:vaccyl}
\omega_{L}(W_{\mathrm{ct}}(\xi))=e^{-\frac14\|\xi\|_L^2},\quad\xi\in \fh_L.
\end{align}
It will be convenient in the following not to consider $\pi_{L}$ as acting on the Fock space built on $\fh_L$. Instead, we will realize it on the (mass independent) Fock space  $\fF_+ (L^2(\TT_L^d))\cong\fF_{+}(\ltwo(\tfrac{\pi}{L}\ZZ,(2L)^{-d}))$ by\footnote{Here $L^2$ denotes the space of square-summable functions and not the square of the length $L$.}
\begin{equation}\label{eq:pimL}
\pi_{L}(W_{\mathrm{ct}}(\xi))=e^{i\left[a\left(\gamma_m^{-1/2}q_\xi+i\gamma_m^{1/2}p_\xi\right)+ a^*\left(\gamma_m^{-1/2}q_\xi+i\gamma_m^{1/2}p_\xi\right)\right]}, \qquad \xi \in \fh_L,
\end{equation}
where $a$ and $a^*$ are the standard creation and annihilation operator on $\fF_+ (L^2(\TT_L^d))$. Indeed, it is easy to check that the Fock vacuum $\Omega_L \in \fF_+ (L^2(\TT_L^d))$ is cyclic for the linear span of the operators~\eqref{eq:pimL} and induces the state $\omega_L$.

Let $ S$ be an open subregion of $\TT_L^d$,
the (time-zero) local $C^*$-algebra associated with $ S$ is the  $C^*$-subalgebra $\cW_L( S)$ of $\cW(\fh_L)$ generated by the Weyl operators $W_{\mathrm{ct}}(\xi)$ with $\xi \in C^\infty(\TT_L^d,\mathbb{C})$ compactly supported in $ S$, and let $\cA_L( S) := \pi_{L}(\cW_L( S))''$ be
the (time-zero) local von Neumann algebra associated with $ S$ in the representation.

The free field dynamics satisfies the Klein-Gordon equation. It is realized on $\fh_L$ by a one-parameter group of unitaries $\{\tau_{L|t}\}_{t \in \RR}$:
\begin{align}\label{eq:contdynamics}
\tau_{L|t}(\hat{\xi}) = [\cos(t \gamma_m)+i \gamma_m^{-1}\sin(t\gamma_m)] \hat{q}_\xi + i[\cos(t\gamma_m)+i\gamma_m \sin(t\gamma_m)]\hat{p}_\xi, \qquad t \in \RR.
\end{align}
The unitaries $\{\tau_{L|t}\}$, leaving the symplectic form invariant, induce a one-parameter group $\{\eta_{L|t}\}$ of automorphisms of $\cW(\fh_L)$ for which $\omega_L$ is an invariant state, and which are therefore implemented  in the representation $\pi_{L}$ by unitaries $U_L(0,t)$, $t \in \RR$, on $\fF_+(L^2(\TT_L^d))$.
Moreover, as the Klein-Gordon equation has the speed of propagation $1$, we have $\Ad U_L(0,t)(\cA_L(\cB_r)) \subset \cA_L(\cB_{r+|t|})$, where $\cB_r$ is a ball of radius $r > 0$ and $r + |t| < L$. Similarly, the spacelike translations (on $\fh_L$) are unitarily implemented on $\fF_+(\fh_L)$ by $U_L(a,0)$, $a \in \RR^d$, and $U_L(a,0), U_L(0,t)$ commute.

Altogether, $(\cA_L, U_L, \Omega_L)$ form a continuum finite volume time-zero net of observables as per Section \ref{sec:continuum}.

\paragraph{On Minkowski space.} 
The construction is parallel to the cylinder case.
Our convention for the Fourier transform of a function $\xi : \RR^d \to \CC$ is
\begin{align}
\label{eq:minkft}
\hat\xi(k)= \int_{\RR^d} \xi(x)e^{-ixk} dx.
\end{align}
We note that with this convention and the one~\eqref{eq:cylft} for the torus, if $\xi$ is a function compactly supported in $(-L,L)^d$ then the value of
$\hat \xi(k)$ on $k\in \frac \pi L \ZZ^{d}$ is defined without ambiguity whether we consider $\xi$ on the torus or on $\RR^d$. This will be useful in the following.

The one-particle Hilbert space $\fh_\infty$ for the continuum free field on Minkowski space is  the completion of $C^\infty_0(\RR^d,\CC)$ with respect to the  scalar product  defined for $\xi,\eta\in C^\infty_0(\RR^d,\CC)$, (whose real and imaginary parts are $\xi=q_\xi+ip_\xi$ and $\eta=q_\eta+ip_\eta$) as
\begin{align}\label{eq:scalarmink}
\langle\xi,\eta\rangle_\infty =\tfrac{1}{(2\pi)^d}\!\!\!\int_{\RR^d}\!\!\!dk \overline{\left(\!\gamma_m^{-1/2}(k)\hat q_\xi(k)\!+\!i\gamma_m^{1/2}(k)\hat p_\xi(k)\!\right)}\!\left(\!\gamma_m^{-1/2}(k)\hat q_\eta(k)\!+\!i\gamma_m^{1/2}(k)\hat p_\eta(k)\!\!\right),
\end{align}
with the same dispersion relation as above.
Also $\fh_\infty$ becomes a complex Hilbert space if the complex structure is given by
\[
q+ip\longmapsto -\gamma_m p+i \gamma_m^{-1}q,
\]
and the associated symplectic form is
\begin{align}\label{eq:sympcontinf}
\sigma_\infty(\xi,\eta) := \Im\langle\xi,\eta\rangle_\infty =\Im\left(\;\int\limits_{\RR^{d}}d^{d}x\,{\bar \xi(x)\eta(x)}\right). 
\end{align}

The Fock representation $\pi_{\infty}$ of the Weyl algebra $\cW(\fh_\infty)$ is the GNS representation specified  by the state
\begin{align}\label{eq:vacuummink}
\omega_{\infty}(W_{\mathrm{ct}}(\xi))=e^{-\frac14\|\xi\|_{\infty}^2},\quad\xi\in \fh_\infty,
\end{align}
and it is realized on the Fock space over $L^2(\RR^d)$ (independent of $m$) by
\[
\pi_{\infty}(W_{\mathrm{ct}}(\xi))=e^{i\left[a\left(\gamma_m^{-1/2}q_\xi+i\gamma_m^{1/2}p_\xi\right)+ a^*\left(\gamma_m^{-1/2}q_\xi+i\gamma_m^{1/2}p_\xi\right)\right]}.
\]

The (time-zero) local $C^*$-algebra $\cW_\infty( S)$ and von Neumann algebra $\cA_\infty( S)$ associated with bounded open regions $ S \subset \RR^d$ are defined similarly to the torus case considered above,  and the dynamics is given by unitaries $\{\tau_{\infty|t}\}_{t \in \RR}$ on $\fh_\infty$, whose action on a generic $\xi \in \fh_\infty$ is defined again by formula~\eqref{eq:contdynamics} and which again induce automorphisms of $\cW(\fh_\infty)$ with finite propagation speed, implemented in $\pi_{\infty}$ by unitaries $U_\infty(0,t)$, $t \in \RR$. Thus, together with the unitary implementers $U_\infty(a,0)$, $a \in \RR^d$, of spatial translations, and with the Fock vacuum $\Omega_\infty \in \fF_+(L^2(\RR^d))$, we obtain the continuum infinite volume time-zero net of observables $(\cA_\infty,U_\infty,\Omega_\infty)$ defined by the free scalar field on Minkowski space.

We summarize our notations in Table \ref{tab:note} for one-particle spaces, algebras, Weyl operators and vacuum states on the lattices and for the inductive limit and the continuum theories.

\begin{table}[ht]
\centering
\begin{tabular}{|c|ccc|}
\hline
& lattice object & inductive limit & continuum object \\
& & &(possibly $L=\infty$) \\
\hline
 one-particle space  & $\fh_{N,L}$ & $\fh_{\infty,L} $ & $ \fh_L$ \\
                     &             & (symplectic space) & \\
 full algebra & $\cW_{N,L}$ & $\cW_{\infty,L}$ & $ \cW_L$ \\
 local algebra & $\cW_{N,L}( S)$ & $\cW_{\infty,L}( S) $ & $ \cW_L( S)$ \\
 field & $\Phi_N(x)$ & & $\Phi(x)$ \\
 momentum & $\Pi_N(x)$ & & $\Pi(x)$ \\
 Weyl operator & $W_N(\xi)$ & & $W_{\mathrm{ct}}(\xi)$ \\
 state  & $\omega^{(N)}_{L,M}$ & $\omega^{(\infty)}_{L,\infty}$ & $\omega_L$ \\
 (of mass $m$) & & & \\
 dispersion relation & $\gamma_m^{(N)}$ & & $\gamma_m$ \\
\hline
\end{tabular}
\caption{Notations for one-particle spaces, algebras, Weyl operators and vacuum states.}
\label{tab:note}
\end{table}

\subsection{Constructing the continuum limit}\label{subsec:cont}
In this section we will discuss the relation between the quasi-local and the local algebras of the inductive limit and of the continuum free field. 

\subsubsection{Embedding the lattice algebras into the continuum algebra}
\label{sec:embedding}

The lattice algebras in different scales with volume $(2L)^d$ are embedded into one another by $\alpha_{N+1}^{N}:\cW_{N,L}\rightarrow \cW_{N+1,L}$.
Here we show that they are further embedded into the continuum algebra $\cW(\fh_L)$.
This is realized by the $\RR$-linear map $R^N_\infty: \fh_{N,L} \to \fh_L$ defined by
\begin{equation}\label{eq:jLN}
 R^N_\infty(\xi)(y)=\vep_N^{\frac{d}{2}}\sum_{x\in\Lambda_N} (q(x)+i p(x))\phi^{(\vep_N)}_x(y),
\end{equation}
where we recall that $q:=\vep_N^{-\frac {1+d}2}\Re\xi$, $p:=\vep_N^{\frac{1-d}2}\Im \xi$.

\begin{proposition}\label{prop:isoalg1}
There exists an injective *-homomorphism $\beta_L : \cW_{\infty,L} \to \cW(\fh_L)$ such that
\[
 \beta_L(W_N(\xi)) = W_{\mathrm{ct}}(R^N_\infty(\xi)), \qquad \xi \in \fh_{N,L}.
\]
\end{proposition}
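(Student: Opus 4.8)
The plan is to build $\beta_L$ through Proposition~\ref{prop:secondquantization} together with the universal property of the $C^*$-inductive limit recalled in Section~\ref{sec:indlim}. Concretely, I would first check that each $R^N_\infty$ of \eqref{eq:jLN} is a well-defined real-linear \emph{symplectic} map $\fh_{N,L}\to\fh_L$, then that the family is compatible with the scaling maps, i.e.\ $R^{N'}_\infty\circ R^N_{N'}=R^N_\infty$ for $N<N'$, and finally assemble the induced Weyl $*$-homomorphisms $\beta^N_L\colon W_N(\xi)\mapsto W_{\mathrm{ct}}(R^N_\infty(\xi))$ into a single morphism on the limit. Before anything one must verify that $R^N_\infty(\xi)$ actually lies in $\fh_L$, i.e.\ that $\|\phi^{(\vep_N)}_x\|_L<\infty$. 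In the norm \eqref{eq:spcyl} the $\gamma_m^{-1/2}\hat q$ term is harmless, so the only nontrivial requirement is finiteness of the momentum term $\sum_k\gamma_m(k)|\widehat{\phi^{(\vep_N)}_x}(k)|^2$; this holds once $\phi$ has sufficient Sobolev regularity, which is exactly what the choice of a Daubechies scaling function in Section~\ref{sec:Daubechies} secures, and is the point where the regularity hypothesis enters.

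For the symplectic property, I would use that $\phi^{(\vep_N)}_x$ is real-valued, so the real and imaginary parts of $R^N_\infty(\xi)$ are $\vep_N^{d/2}\sum_{x}q(x)\phi^{(\vep_N)}_x$ and $\vep_N^{d/2}\sum_{x}p(x)\phi^{(\vep_N)}_x$. Inserting these into $\sigma_L=\Im\langle\cdot,\cdot\rangle_L$ from \eqref{eq:sympcont} and invoking the orthonormality of $\{\phi^{(\vep_N)}_x\}_{x\in\Lambda_N}$ in $L^2(\TT^d_L)$ collapses the double site-sum to a diagonal one, leaving $\vep_N^d\sum_x\big(q(x)p'(x)-p(x)q'(x)\big)$, which is precisely $\sigma_{N,L}$ by \eqref{eq:symplecticpq}. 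Hence $R^N_\infty$ is symplectic.

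I expect the compatibility identity $R^{N+1}_\infty\circ R^N_{N+1}=R^N_\infty$ to be the crux of the argument. Writing $R^N_{N+1}$ in the $(q,p)$-coordinates via \eqref{eq:waveletscale} and then applying $R^{N+1}_\infty$, the sum over $x\in\Lambda_{N+1}$ eliminates the Kronecker delta, the prefactors combine as $2^{d/2}\vep_{N+1}^{d/2}=\vep_N^{d/2}$, and one is left with $\vep_N^{d/2}\sum_{x'\in\Lambda_N}(q+ip)(x')\sum_n h_n\phi^{(\vep_{N+1})}_{x'+n\vep_{N+1}}$. The remaining inner sum equals $\phi^{(\vep_N)}_{x'}$ by the rescaled and translated form of the scaling equation \eqref{eq:unitscalingeq}; checking the index matching $2^{N+1}(x'+n\vep_{N+1})=2^{N+1}x'+\vep n$ modulo the periodization is the bookkeeping that has to be handled carefully. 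This recovers $R^N_\infty$, and the general relation $R^{N'}_\infty\circ R^N_{N'}=R^N_\infty$ then follows by telescoping through the semigroup factorization of $R^N_{N'}$.

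Finally I would assemble the morphism. Each symplectic $R^N_\infty$ yields, by Proposition~\ref{prop:secondquantization}, an injective $*$-homomorphism $\beta^N_L$ with $\beta^N_L(W_N(\xi))=W_{\mathrm{ct}}(R^N_\infty(\xi))$, and compatibility translates into $\beta^{N'}_L\circ\alpha^N_{N'}=\beta^N_L$. Thus the $\beta^N_L$ agree on the dense subalgebra $\bigcup_N\alpha^N_\infty(\cW_{N,L})\subset\cW_{\infty,L}$, and since each is isometric they extend uniquely to a $*$-homomorphism $\beta_L$ with the asserted action on generators; this is exactly the universal property of the inductive limit from Section~\ref{sec:indlim}. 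Injectivity then comes for free: $\beta_L$ is nonzero and $\cW_{\infty,L}$ is simple, being an inductive limit of simple Weyl algebras, so $\beta_L$ must be injective.
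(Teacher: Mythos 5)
Your proposal is correct and follows essentially the same route as the paper's proof: symplecticity of $R^N_\infty$ via orthonormality of $\{\phi^{(\vep_N)}_x\}_x$, the compatibility $R^{N+1}_\infty\circ R^N_{N+1}=R^N_\infty$ via the scaling equation, and assembly through the uniqueness and simplicity of the $C^*$-inductive limit. Your preliminary check that $R^N_\infty(\xi)\in\fh_L$ (finiteness of the $\gamma_m^{1/2}$-weighted norm of $\phi^{(\vep_N)}_x$, secured by the regularity of the Daubechies scaling function) is a point the paper leaves implicit but is a welcome addition rather than a different argument.
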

\begin{proof}
The map $R^N_\infty$ preserves the symplectic forms on $\fh_{N,L}$ and $\fh_L$, defined in Sections \ref{sec:freefield} and  \ref{sec:contfree} respectively, indeed:
\begin{align*}
\sigma_{L}(R^N_\infty(\xi),R^N_\infty(\eta)) & =\Im \bigg(\!\vep_N^{d}\!\!\!\!\!\!\sum_{x,x'\in\Lambda_N}\!\!\!\!\overline {(q_\xi(x)\!+\!ip_\xi(x)}(q_\eta(x')\!+\!ip_\eta(x')\!)\vep_N^{-d}\!\!\!\!\!\!\!\!\int\limits_{[-L,L)^{d}}\!\!\!\!\!\!d^{d}y\,\phi\!\left(\tfrac{y-x}{\vep_N}\right)\phi\!\left(\tfrac{y-x'}{\vep_N}\right)\!\!\!\!\bigg)\\
&=\vep_N^d\!\!\!\sum_{x\in\Lambda_N}\!\!\big(q_\xi(x)p_\eta(x)-p_\xi(x)q_\eta(x)\big)=\sigma_{N,L}(\xi,\eta),
\end{align*}
where the second equality holds since $\langle \phi^{(\vep_N)}_x,\phi^{(\vep_N)}_{x'}\rangle_{L}=\delta_{xx'}$ by the property of the scaling function $\phi$.
Therefore by the uniqueness of the Weyl $C^*$-algebra, $R^N_\infty$ induces an injective *-homomorphism of $C^*$-algebras $\beta^N_L : \cW_{N,L} \to \cW(\fh_L)$ such that $\beta^N_L(W_N(\xi)) = W_{\mathrm{ct}}(R^N_\infty(\xi))$. Moreover, there holds
\begin{align*}
R^{N+1}_\infty(R_{N+1}^{N}(\xi)) &= \vep_{N+1}^{\frac{d}{2}}\!\!\!\!\!\!\sum_{x\in\Lambda_{N+1}}\!\!\!\!(q_{R^{N}_{N+1}(\xi)}(x)+i p_{R_{N+1}^{N}(\xi)}(x))\phi^{(\vep_{N+1})}_x \\
&= \vep_{N+1}^{\frac{d}{2}}2^{\frac d 2}\!\!\!\sum_{y\in\Lambda_{N}}\!\!(q(y)+ip(y)\!)\!\!\!\sum_{n \in \ZZ^d}\!\!\!h_n\!\!\!\sum_{x\in \Lambda_{N+1}}\!\!\!\delta_{y+\frac{\vep_N n}2}^{(\vep_{N+1})}(x)\phi^{(\vep_{N+1})}_x\\
&=\vep_N^{\frac{d}{2}}\!\!\!\sum_{y\in\Lambda_{N}}\!\!(q(y)+ip(y)\!)\!\!\!\sum_{n \in \ZZ^d}\!\!h_n\phi^{(\vep_N/2)}_{y+\frac{\vep_N n}2} =\vep_N^{\frac{d}{2}}\!\!\!\sum_{y\in\Lambda_{N}}\!(q(y)+ip(y)\!)\phi^{(\vep_N)}_{y} = R^N_\infty(\xi),
\end{align*}
where in the fourth equality we used~\eqref{eq:scalingeq}. This entails $\beta^{N+1}_L\circ\alpha^{N}_{N+1} = \beta^N_L$, and therefore, by the uniqueness and simplicity of the $C^*$-inductive limit, $\beta_L$ exists as in the statement.\hfill$\square$
\end{proof}

Now that we have realized the lattice algebras $\cW_{N,L}$ inside $\cW(\fh_L)$, we can consider the restrictions, $\omega_{L|\beta_{L}(\cW_{N,L})}$, of the vacuum state $\omega_L$ and take these as the ``initial states''.
This yields in a straightforward way a continuum time-zero net of local algebras.
However, as already remarked, this choice is possible only because we know explicitly the vacuum state of the continuum free field.
Below we study a more interesting case where we take the ground state \eqref{eq:freeground} of the lattice free field at each scale, $\omega^{(N)}_{L,0} = \omega_{\mu_{N}}$, and show that the renormalization group scheme of Section \ref{sec:oaren} constructs $\omega_{L}$ as the scaling limit.

\subsubsection{Scaling limits of free-field ground states}
\label{sec:groundstate}
Following the general scheme of Section \ref{sec:oaren}, the scaling limit of \eqref{eq:freeground} is obtained by a two-step procedure, which we will explicitly implement:
\begin{itemize}
	\item[1.] Prove the convergence of $\lim_{M\rightarrow\infty}\omega^{(N)}_{L,M} = \omega^{(N)}_{L,\infty}$ on $\cW_{N,L}$ at every fixed scale $N\in\NN_{0}$.
	\item[2.] Construct the projective-limit state $\varprojlim_{N}\omega^{(N)}_{L,\infty} = \omega^{(\infty)}_{L,\infty}$ on $\cW_{\infty,L}$.
\end{itemize}
According to Figure \ref{fig:statetrianglerg}, the sequence of states $\{\omega^{(N)}_{L,M}\}_{M\in\NN_{0}}$ at each level $N$ is generated by iterating the \textit{flow equation} \eqref{eq:staterg}. In the case at hand, we use \eqref{eq:oneparticlescalingiterated} and \eqref{eq:oneparticlerg} to define the renormalization group, $\{\alpha^{N}_{N'}\}_{N<N'}$, resulting in:
\begin{align}
\label{eq:freegroundscaling}
\omega^{(N)}_{L,M}(W_{N}(\xi)) & = \omega_{\mu_{N+M},M}(W_{N}(\xi)) \\ \nonumber
& = e^{-\frac{2^{dM}}{4}\Big(\Big\|\gamma_{\mu_{N+M}}^{-1/2}\prod\limits^{M}_{n=1}\!\!\!m_{0}(\vep_{N+n}\!\ \cdot \!\ )\hat{q}_{N+M}\Big\|_{N+M,L}^{2}+\Big\|\gamma_{\mu_{N+M}}^{1/2}\prod\limits^{M}_{n=1}\!\!\!m_{0}(\vep_{N+n}\!\ \cdot \!\ )\hat{p}_{N+M}\Big\|_{N+M,L}^{2}\Big)},
\end{align}
for any $\xi\in\fh_{N,L}$, where we made the scale dependence of one-particle vectors explicit and denoted by $\hat{q}_{N+M},\hat{p}_{N+M}$ the periodic extensions of $\hat{q}_{N},\hat{p}_{N}$ to $\Gamma_{N+M}$. Recall that $m_{0}$ is the trigonometric polynomial associated with the low-pass filter $\{h_{n}\}$ introduced in \eqref{eq:oneparticlescalingfourier}.

Thus, if we choose $\{\mu_N\}$ which satisfies the \textit{renormalization condition},
\begin{align}
\label{eq:Latmassscaling}
\lim_{N\rightarrow\infty}\vep_{N}^{-2}(\mu_{N}^{2}-2d) & = m^{2},
\end{align}
for a fixed ``physical'' mass $m>0$, we find the free continuum dispersion relation of mass $m$ in the scaling limit: $\lim_{M\rightarrow\infty}\gamma_{\mu_{N+M}}(k)^{2} = m^{2}+k^{2} =\gamma_{m}(k)^{2}$.
By \eqref{eq:scalingprod}, the limit state should be formally
\begin{align}
\label{eq:freegroundlimit}
\omega^{(N)}_{L,\infty}(W_N(\xi)) &\!=\!e^{-\frac{1}{4}\big\|\hat{\phi}^{(\vep_{N})}_{0}(\gamma_{m}^{-1/2}\hat{q}_{\infty}+i\gamma_{m}^{1/2}\hat{p}_{\infty})\big\|_{2,L}^{2}},
\end{align}
where $\hat{q}_{\infty},\hat{p}_{\infty}$ are the periodic extensions of $\hat{q}_{N},\hat{p}_{N}$ to the infinite lattice $\Gamma_{\infty}=\tfrac{\pi}{L}\ZZ^{d}$, and $\|\!\ .\!\ \|_{2,L}$ is the standard norm on $\ltwo(\tfrac{\pi}{L}\ZZ^{d},(2L)^{-d}\mu_{\Gamma_{\infty}})$. The norms involved there are finite if the scaling function decays like $|\hat{\phi}(k)|\leq C(1+|k|)^{-\rho}$ with $\rho>\tfrac{d+1}{2}$.
Our choice of the Daubechies wavelet family, $\{_{K}\phi\}_{K\in\NN}$ \cite{DaubechiesTenLecturesOn} (see Section \ref{sec:Daubechies}), is indeed sufficient if
$\phi =\!\!\!\ _{2}\phi^{\otimes d}$ because $_{2}\rho\approx 1.339$. Moreover, \eqref{eq:freegroundlimit} is not only the formal limit of the sequence \eqref{eq:freegroundscaling} but a weak*-limit point as the following lemma shows.

\begin{lemma}
\label{lem:freegroundconv}
Assume that the renormalization condition \eqref{eq:Latmassscaling} holds and that $\phi$ is built via tensor products from the Daubechies family, $\phi = _K\!\!\phi^{\otimes d}$ for $K\ge 2$. Then, the sequence of states $\{\omega^{(N)}_{L,M}\}_{M\in\NN_{0}}$ on $\cW_{N,L}$ is weak* convergent to the state~\eqref{eq:freegroundlimit} for every $N\in\NN_{0}$.
\end{lemma}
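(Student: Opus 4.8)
The plan is to exploit the quasi-free (Gaussian) form of all the states involved. Since the Weyl operators $\{W_N(\xi)\}_{\xi\in\fh_{N,L}}$ have dense linear span in $\cW_{N,L}$ and the functionals $\omega^{(N)}_{L,M}$ are uniformly bounded (they are states), weak* convergence is equivalent to pointwise convergence of $\omega^{(N)}_{L,M}(W_N(\xi))$ for each fixed $\xi$, and the candidate limit \eqref{eq:freegroundlimit} is the quasi-free state to which any limit point must equal. By \eqref{eq:freegroundscaling} this reduces to showing that the quadratic exponent converges to that of \eqref{eq:freegroundlimit} for each fixed $\xi\in\fh_{N,L}$. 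First I would rewrite the exponent as a single sum over the dual lattice $\Gamma_{N+M}$: using $r_{N+M}=2^M r_N$ one has $2^{dM}(2r_{N+M})^{-d}=(2r_N)^{-d}=\vep_N^d(2L)^{-d}$, so the prefactor $2^{dM}$ is exactly absorbed into the measure and
\[
2^{dM}\big\|\gamma_{\mu_{N+M}}^{-1/2}\textstyle\prod_{n=1}^M m_0(\vep_{N+n}\,\cdot)\,\hat q_{N+M}\big\|_{\fh_{N+M}}^2 = \vep_N^d(2L)^{-d}\!\!\sum_{k\in\Gamma_{N+M}}\!\!\gamma_{\mu_{N+M}}(k)^{-1}\Big|\textstyle\prod_{n=1}^M m_0(\vep_{N+n}k)\Big|^2|\hat q_{N+M}(k)|^2,
\]
with the analogous identity for the $p$-part (with $\gamma_{\mu_{N+M}}^{-1}$ replaced by $\gamma_{\mu_{N+M}}^{+1}$). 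I would treat the $q$- and $p$-parts separately, regarding each summand as a function on $\tfrac{\pi}{L}\ZZ^d$ supported on $\Gamma_{N+M}$.

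Next I would identify the pointwise limit of the summand. The periodic extensions satisfy $\hat q_{N+M}(k)=\hat q_\infty(k)$ and $\hat p_{N+M}(k)=\hat p_\infty(k)$ on $\Gamma_{N+M}$, and $\Gamma_{N+M}\uparrow\tfrac{\pi}{L}\ZZ^d$ since $r_{N+M}\to\infty$. The renormalization condition \eqref{eq:Latmassscaling} together with $\vep_{N+M}\to0$ gives $\gamma_{\mu_{N+M}}(k)^2\to m^2+k^2=\gamma_m(k)^2$ pointwise, while \eqref{eq:scalingprod} gives $\prod_{n=1}^M m_0(\vep_{N+n}k)\to \vep_N^{-d/2}\hat\phi^{(\vep_N)}_0(k)$. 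Combining these, and using $\vep_N^d\cdot\vep_N^{-d}=1$, the $q$-summand converges pointwise to $\gamma_m(k)^{-1}|\hat\phi^{(\vep_N)}_0(k)|^2|\hat q_\infty(k)|^2$ (and analogously for the $p$-part), whose weighted sum over $\tfrac{\pi}{L}\ZZ^d$ is precisely $\|\hat\phi^{(\vep_N)}_0\gamma_m^{-1/2}\hat q_\infty\|_{2,L}^2$ (resp. $\|\hat\phi^{(\vep_N)}_0\gamma_m^{1/2}\hat p_\infty\|_{2,L}^2$). The vanishing of the $q$-$p$ cross term in \eqref{eq:freegroundlimit}, as in \eqref{eq:normxiN}, follows from the reality of $q,p,\phi$ and the symmetry $k\mapsto-k$.

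To pass the limit through the sum I would invoke dominated convergence (Tannery's theorem) for the counting measure, which requires an $M$-independent summable majorant. The mass term in \eqref{eq:Latmassscaling} is bounded and bounded away from $0$ for large $M$, so $\gamma_{\mu_{N+M}}(k)^{-1}\le C$ uniformly; using $1-\cos\theta\le\tfrac12\theta^2$ one gets $\gamma_{\mu_{N+M}}(k)\le C(1+|k|)$ uniformly in $M$; and $\hat q_\infty,\hat p_\infty$ are bounded, being periodic. The decisive ingredient, and the main obstacle, is a uniform decay bound on the truncated products,
\[
\Big|\prod_{n=1}^M m_0(\vep_{N+n}k)\Big|\le C(1+\vep_N|k|)^{-\rho}\qquad\text{for all }M,
\]
with $\rho$ the decay exponent of $\hat\phi$. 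The pointwise limit \eqref{eq:scalingprod} alone does not furnish this; one must control the finite products uniformly, which is exactly the estimate underlying the decay of the Daubechies scaling function. I would obtain it from the factorization $m_0(t)=\big(\tfrac{1+e^{-it}}{2}\big)^K\cL(t)$ of the order-$K$ filter, bounding $\prod_n\big(\tfrac{1+e^{-i\vep_{N+n}k}}{2}\big)^K$ by $(\tfrac{\sin}{\cdot})^K$-type decay and the residual product $\prod_n\cL(\vep_{N+n}k)$ by a polynomial via $\sup|\cL|$, cf. the Daubechies estimates already cited for \eqref{eq:scalingprod}.

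Finally, for $\phi={}_K\phi^{\otimes d}$ the symbol $m_0$ and the bound above factorize over coordinates, so the relevant one-dimensional exponent ${}_K\rho$ controls summability in every dimension; with $K\ge2$ one has ${}_K\rho\ge{}_2\rho\approx1.339$, making the majorant $(1+|k|)(1+\vep_N|k|)^{-2\rho}$ summable over $\tfrac{\pi}{L}\ZZ^d$ (the $p$-part being the binding constraint, matching the condition $\rho>\tfrac{d+1}{2}$). Dominated convergence then yields convergence of both quadratic forms, hence of $\omega^{(N)}_{L,M}(W_N(\xi))$ to $\omega^{(N)}_{L,\infty}(W_N(\xi))$ for every $\xi\in\fh_{N,L}$, which is the asserted weak* convergence.
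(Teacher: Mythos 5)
Your proposal follows essentially the same route as the paper's proof: reduction to pointwise convergence of the Gaussian exponents on Weyl generators, absorption of the factor $2^{dM}$ into the counting measure on $\tfrac{\pi}{L}\ZZ^{d}$, dominated convergence against an $M$-independent majorant, and the uniform decay bound on the truncated filter products obtained from the factorization $m_{0}=\bigl(\tfrac{1+e^{-i\cdot}}{2}\bigr)^{K}\,{}_{K}\cL$ together with the tensor-product reduction to $d=1$. The only step you leave implicit --- the polynomial bound $\prod_{n=1}^{M}|{}_{K}\cL(2^{-n}l)|\leq C(1+|l|)^{K-1-\epsilon}$ --- is exactly what the paper supplies via the momentum-shell argument of Daubechies' Lemma 7.1.1 and the strict inequality $\sup_{l}|{}_{K}\cL(l)|<2^{K-1}$, which is the content behind your ``cf.'' reference.
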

\begin{proof}
We state the proof for $\phi = _K\!\!\phi^{\otimes d}$ with $K\ge 2$.

Since the Weyl elements $W_{N}(\xi)$, $\xi\in\fh_{N,L}$, form a total set in $\cW_{N,L}$ in norm, it is sufficient to prove the convergence of \eqref{eq:freegroundscaling} to \eqref{eq:freegroundlimit}. 
To this end, we rewrite the $\|\cdot\|_{N+M,L}$ norms in \eqref{eq:freegroundscaling} in terms of the norm $\left|\left|\!\ .\!\ \right|\right|_{2,L}$ to be able to apply Lebesgue's dominated convergence theorem for the measure space $(\tfrac{\pi}{L}\ZZ^{d}, (2L)^{-d}\mu_{\frac{\pi}{L}\ZZ^{d}})$:
\begin{align*}
& 2^{dM}\Big\|\gamma_{\mu_{N+M}}^{\frac{1}{2}}\!\!\prod\limits^{M}_{n=1}\!m_{0}(\vep_{N+n}\!\ \cdot \!\ ) \hat p_{N+M}\Big\|_{N+M,L}^{2} \\
& =\tfrac{\vep_{N}^{d}}{(2L)^{d}}\!\!\!\!\sum_{k\in\frac{\pi}{L}\ZZ^{d}}\!\!\!\!\big|\chi_{\Gamma_{N+M}}(k)\gamma_{\mu_{N+M}}(k)^{\frac{1}{2}}\!\!\prod\limits^{M}_{n=1}\!m_{0}(\vep_{N+n}k) \hat p_{N+M}(k)\big|^2,
\end{align*}
where $\chi_{S}(k)$ is the characteristic function of the set $S$,
and similarly for the term involving $\hat q_{N+M}$. Now, the periodic extensions $\hat{q}_{N+M},\hat{p}_{N+M}$ of $\hat{q}_{N},\hat{p}_{N}$  are bounded in $k\in\Gamma_{N+M}\subset\Gamma_{\infty} = \tfrac{\pi}{L}\ZZ^{d}$, uniformly for $M \in \NN$, and the same holds for $\gamma_{\mu_{N+M}}^{-1/2}$ thanks to the renormalization condition~\eqref{eq:Latmassscaling}. As we know that
\begin{equation}\label{eq:sobolevconv}
\bigg|\chi_{\Gamma_{N+M}}(k)\gamma_{\mu_{N+M}}(k)^{\alpha}\prod\limits^{M}_{n=1}\!m_{0}(\vep_{N+n}k) \bigg|^2, \qquad -\tfrac12\leq\alpha\leq\tfrac{1}{2}
\end{equation}
converge pointwise to $|\hat\phi_0^{(\vep_N)}(k)\gamma_m(k)^\alpha|^2$ by \eqref{eq:scalingprod}, we only have to show that this sequence is 
bounded by a $\mu_{\frac{\pi}{L}\ZZ^{d}}$-integrable function uniformly for $M \in \NN$.
Next, we observe that because of the renormalization condition \eqref{eq:Latmassscaling}, we can replace
\begin{align*}
\gamma_{\mu_{N+M}}(k)^{\alpha} & = \bigg(\!\!\vep_{N+M}^{-2}(\mu_{N+M}^{2}-2d)+2\vep_{N+M}^{-2}\sum^{d}_{j=1}(1-\cos(\vep_{N+M}k_{j}))\!\!\bigg)^{\frac{\alpha}{2}}
\end{align*}
by
\begin{align*}
\bigg(\!\!m^{2}+2^{2M}\vep_{N}^{-2}2\sum^{d}_{j=1}(1-\cos(\tfrac{\vep_{N}k_{j}}{2^{M}}))\!\!\bigg)^{\frac{\alpha}{2}} & = \bigg(\!\!m^{2}+\sum^{d}_{j=1}(2^{M+1}\vep_{N}^{-1}\sin(\tfrac{\vep_{N} k_{j}}{2^{M+1}}))^{2}\!\!\bigg)^{\frac{\alpha}{2}}.
\end{align*}
Since we assume a tensor product structure for the scaling function, $\phi = _K\!\!\phi^{\otimes d}$, the same is true for the function $m_{0}$, i.e.~$m_{0}= _K\!\!m_{0}^{\otimes d}$. Similarly, the characteristic function $\chi_{\Gamma_{N}}$ factorizes: $\chi_{\Gamma_{N}}(k) = \prod_{j=1}^{d}\chi_{\Gamma^{(1)}_{N}}(k_{j})$ for $k\in\tfrac{\pi}{L}\ZZ^{d}$ with $\Gamma^{(1)}_{N}$ the $d=1$ analogue of $\Gamma_{N}$. Therefore, we can use the estimate
\begin{align*}
\bigg(\!\!m^{2}+\sum^{d}_{j=1}(2^{M+1}\vep_{N}^{-1}\sin(\tfrac{\vep_{N} k_{j}}{2^{M+1}}))^{2}\!\!\bigg)^{\frac{\alpha}{2}} & \leq \tfrac{\max(1,(\vep_{N}m)^{\alpha})}{\vep_{N}^{\alpha}}\!\!\prod_{j=1}^{d}\!\!\left(1+(2^{M+1}\sin(\tfrac{\vep_{N} k_{j}}{2^{M+1}}))^{2}\right)^{\frac{\alpha}{2}}
\end{align*}
to reduce the problem to the $d=1$ case. Finally, to conclude the estimation of~\eqref{eq:sobolevconv}, we only need to find a $\mu_{\frac{\pi}{L}\ZZ}$-integrable function that dominates the sequence:
\begin{align*}
\left|_{K}g^{(N)}_{M}(k)\right|^{2} & = \bigg|\chi_{\Gamma^{(1)}_{N+M}}(k)\bigg(1+(2^{M+1}\sin(\tfrac{\vep_N k}{2^{M+1}}))^{2}\bigg)^{\frac{\alpha}{2}}\prod\limits^{M}_{n=1}\!_{K}m_{0}(2^{-n}\vep_{N}k)\bigg|^2.
\end{align*}
In the following, we set $l=\vep_{N}k\in\pi\ZZ$ to write:
\begin{align}
\label{eq:sobolevseq}
\left|_{K}g^{(N)}_{M}(l)\right|^{2} & = \bigg|\chi_{\vep_N \Gamma^{(1)}_{N+M}}(l)\bigg(1+(2^{M+1}\sin(\tfrac{l}{2^{M+1}}))^{2}\bigg)^{\frac{\alpha}{2}}\prod\limits^{M}_{n=1}\!_{K}m_{0}(2^{-n}l)\bigg|^2.
\end{align}
Now, we observe that $_{K}m_{0}$ factorizes according to \cite[(6.1.2)]{DaubechiesTenLecturesOn},
\begin{align*}
_{K}m_{0}(l) & = \left(\frac{1+e^{-il}}{2}\right)^{K}\ \!_{K}\cL(l),
\end{align*}
where $_K\cL$ is a certain trigonometric polynomial, which leads to:
\begin{align*}
\left|_{K}g^{(N)}_{M}(l)\right|^{2} & =\chi_{\vep_N \Gamma^{(1)}_{N+M}}(l)\left(1+(2^{M+1}\sin(\tfrac{l}{2^{M+1}}))^{2}\right)^{\alpha}\prod\limits^{M}_{n=1}|\cos(2^{-n}\tfrac{1}{2}l)|^{2K}|\!_{K}\cL(2^{-n}l)|^{2}.
\end{align*}
Using the analogue of Vi\`ete's (respectively Euler's) formula for finite products,
\begin{align*}
\prod\limits^{M}_{n=1}\cos(2^{-n}\tfrac{1}{2}l) & = \frac{2\sin(\frac{1}{2}l)}{2^{M+1}\sin(\tfrac{l}{2^{M+1}})},
\end{align*}
(which is a consequence of iterating the cosine product identity, $\cos(\varphi)\cos(\vartheta) = \tfrac{1}{2}(\cos(\varphi+\vartheta)+\cos(\varphi-\vartheta))$, followed by a finite geometric summation\footnote{See \url{https://en.wikipedia.org/wiki/Vi\%C3\%A8te\%27s_formula}.}), we find:
\begin{align*}
\left|_{K}g^{(N)}_{M}(l)\right|^{2} & = \chi_{\vep_N \Gamma^{(1)}_{N+M}}(l)\left|\frac{2\sin(\frac{1}{2}l)}{2^{M+1}\sin(\tfrac{l}{2^{M+1}})}\right|^{2K}\left(1+(2^{M+1}\sin(\tfrac{l}{2^{M+1}}))^{2}\right)^{\alpha}\prod\limits^{M}_{n=1}|\!_{K}\cL(2^{-n}l)|^{2}.
\end{align*}
Next, we apply the basic inequalities,
\begin{align*}
2^{M+1}\sin(2^{-(M+1)}|l|) & \leq |l|, & \forall l\in\RR, &  \\
\tfrac{2}{\pi}|l| & \leq 2^{M+1}\sin(2^{-(M+1)}|l|), & \forall l\in\RR: |l| & \leq 2^{M}\pi,
\end{align*}
to arrive at:
\begin{align*}
\left|_{K}g^{(N)}_{M}(l)\right|^{2} & \leq \chi_{\vep_N \Gamma^{(1)}_{N+M}}(l)\pi^{2K}\left(\frac{|\sin(\frac{1}{2}l)|}{|l|}\right)^{2K}\left(1+|l|^{2}\right)^{\alpha}\prod\limits^{M}_{n=1}|\!_{K}\cL(2^{-n}l)|^{2} \\
 & \leq \chi_{\vep_N \Gamma^{(1)}_{N+M}}(l)\pi^{2K}\left(\frac{|\sin(\frac{1}{2}l)|}{|l|}\right)^{2K}\left(1+|l|\right)^{2\alpha}\prod\limits^{M}_{n=1}|\!_{K}\cL(2^{-n}l)|^{2},
\end{align*}
where in the first inequality we used that if $l \in \vep_N \Gamma^{(1)}_{N+M}$ then $2^{-(M+1)}|l| \leq \pi/2$.
Thus, to dominate the sequence \eqref{eq:sobolevseq} by $\mu_{\ZZ}$-integrable function, it is sufficient to have an estimate of the form,
\begin{equation}
\label{eq:regularityestimate}
\prod\limits^{M}_{n=1}|\!_{K}\cL(2^{-n}l)|  \leq C(1+|l|)^{K-1-\epsilon},
\end{equation}
for some $C>0$ and arbitrary $\epsilon>0$, as this implies:
\begin{equation*}
\left|_{K}g^{(N)}_{M}(l)\right|^{2}  \leq C^{2}\pi^{2K}\left(\frac{|\sin(\frac{1}{2}l)|}{|l|}\right)^{2K}\left(1+|l|\right)^{2(K+(\alpha-\epsilon)-1)}.
\end{equation*} 
But the estimate~\eqref{eq:regularityestimate} is an immediate consequence of the regularity analysis for compactly supported wavelets in \cite[Section 7.1]{DaubechiesTenLecturesOn}. To see this, we observe that according to \cite[Lemma 7.1.4]{DaubechiesTenLecturesOn}:
\begin{equation}\label{eq:Lbound}
\sup_{l}|_{K}\cL(l)| < 2^{K-1}.
\end{equation}
Now, applying the reasoning of \cite[Lemma 7.1.1]{DaubechiesTenLecturesOn} to the finite product $\prod\limits^{M}_{n=1}|\!_{K}\cL(2^{-n}l)|$ for the momentum space shells $|l|\leq1$ and $2^{n-1}<|l|\leq 2^{n}$, $n\in\NN_{\leq M}$, as well as $2^{M}<|l|\leq 2^{M}\pi$, we deduce that
~\eqref{eq:Lbound}
implies \eqref{eq:regularityestimate} for $\epsilon >0$ sufficiently small.\hfill$\square$
\end{proof}
Actually, with the embedding $\beta_L: \cW_{\infty, L} \to \cW(\fh_L)$,
the expression \eqref{eq:freegroundlimit} is exactly the evaluation of the usual continuum vacuum state $\omega_{L}$ of mass $m$ \eqref{eq:vaccyl} on the continuum Weyl operators $W_{\mathrm{ct}}(R^N_\infty(\xi))$. 
\begin{lemma}\label{lem:isoalg2}
It holds that $\omega^{(N)}_{L,\infty}(W) = \omega_{L}(\beta_L(W))$ for $W \in \cW_{N,L}$.
\end{lemma}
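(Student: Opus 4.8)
The plan is to exploit that both states are quasi-free (Gaussian): $\omega^{(N)}_{L,\infty}$ through its defining formula~\eqref{eq:freegroundlimit} and $\omega_{L}$ through~\eqref{eq:vaccyl}. Since the Weyl elements $W_{N}(\xi)$, $\xi\in\fh_{N,L}$, form a total set in $\cW_{N,L}$ in norm, and $\beta_{L}(W_{N}(\xi))=W_{\mathrm{ct}}(R^{N}_{\infty}(\xi))$ by Proposition~\ref{prop:isoalg1}, it suffices to verify the identity on these generators. There both sides are Gaussians in $\xi$ with the same $-\tfrac14$ prefactor, so the claim reduces to the equality of the two quadratic exponents,
\[
\left\|\hat{\phi}^{(\vep_{N})}_{0}\bigl(\gamma_{m}^{-1/2}\hat{q}_{\infty}+i\gamma_{m}^{1/2}\hat{p}_{\infty}\bigr)\right\|_{2,L}^{2}=\left\|R^{N}_{\infty}(\xi)\right\|_{L}^{2}.
\]

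First I would compute the continuum Fourier transform~\eqref{eq:cylft} of $R^{N}_{\infty}(\xi)$. Writing $R^{N}_{\infty}(\xi)=q_{R}+ip_{R}$ with $q_{R}=\vep_{N}^{d/2}\sum_{x\in\Lambda_{N}}q(x)\phi^{(\vep_{N})}_{x}$ and $p_{R}=\vep_{N}^{d/2}\sum_{x\in\Lambda_{N}}p(x)\phi^{(\vep_{N})}_{x}$ (which are real-valued, since the Daubechies scaling function is real), and using the translation covariance $\widehat{\phi^{(\vep_{N})}_{x}}(k)=e^{-ikx}\hat{\phi}^{(\vep_{N})}_{0}(k)$ together with the scaled transform $\hat{q}=\vep_{N}^{d/2}\sF_{N}[q]$ from~\eqref{eq:fourierscaling}, I would obtain for $k\in\tfrac{\pi}{L}\ZZ^{d}$ that
\[
\widehat{q_{R}}(k)=\hat{\phi}^{(\vep_{N})}_{0}(k)\,\hat{q}_{\infty}(k),\qquad \widehat{p_{R}}(k)=\hat{\phi}^{(\vep_{N})}_{0}(k)\,\hat{p}_{\infty}(k).
\]
The essential observation here is that evaluating the finite lattice sum $\sum_{x\in\Lambda_{N}}q(x)e^{-ikx}$ at arbitrary $k\in\tfrac{\pi}{L}\ZZ^{d}$ produces exactly the periodic extension $\hat{q}_{\infty}$ of the discrete transform from $\Gamma_{N}$ to $\Gamma_{\infty}=\tfrac{\pi}{L}\ZZ^{d}$, which is precisely the object entering~\eqref{eq:freegroundlimit}.

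Next I would substitute these transforms into the continuum scalar product~\eqref{eq:spcyl}, factor $\hat{\phi}^{(\vep_{N})}_{0}(k)$ out of the combination $\gamma_{m}^{-1/2}\widehat{q_{R}}+i\gamma_{m}^{1/2}\widehat{p_{R}}$, and identify the resulting expression
\[
\left\|R^{N}_{\infty}(\xi)\right\|_{L}^{2}=\frac{1}{(2L)^{d}}\sum_{k\in\frac{\pi}{L}\ZZ^{d}}\left|\hat{\phi}^{(\vep_{N})}_{0}(k)\bigl(\gamma_{m}^{-1/2}(k)\hat{q}_{\infty}(k)+i\gamma_{m}^{1/2}(k)\hat{p}_{\infty}(k)\bigr)\right|^{2}
\]
with the squared $\|\cdot\|_{2,L}$-norm appearing in~\eqref{eq:freegroundlimit}. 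This yields the desired equality of exponents, hence of the two states on generators, and thus the lemma.

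The computation is elementary once the bookkeeping is fixed; the only genuinely conceptual point — and thus the place to proceed carefully — is matching the various normalization conventions (the prefactor $\vep_{N}^{d/2}$ in~\eqref{eq:jLN}, the scaled transform $\hat{q}=\vep_{N}^{d/2}\sF_{N}[q]$, and the common measure factor $(2L)^{-d}$ shared by $\|\cdot\|_{2,L}$ and $\langle\cdot,\cdot\rangle_{L}$), and above all recognizing that the continuum Fourier transform of the lattice-supported coefficient function coincides with the periodic extension $\hat{q}_{\infty},\hat{p}_{\infty}$. No convergence issue intervenes at this stage, since the sum defining $R^{N}_{\infty}(\xi)$ is finite and Lemma~\ref{lem:freegroundconv} already guarantees the finiteness of the limiting exponent.
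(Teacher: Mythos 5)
Your proposal is correct and follows essentially the same route as the paper's proof: both reduce the claim to equality of the Gaussian exponents on Weyl generators and verify it via the identity $\int_{[-L,L)^{d}}d^{d}y\,\phi^{(\vep_N)}_{x}(y)e^{-iky}=\vep_N^{d/2}e^{-ikx}\hat\phi(\vep_N k)$, which turns the continuum Fourier transform of $R^N_\infty(\xi)$ into $\hat\phi^{(\vep_N)}_0$ times the periodic extensions $\hat q_\infty,\hat p_\infty$. Your explicit flagging of the normalization bookkeeping ($\vep_N^{d/2}$ factors and the shared $(2L)^{-d}$ measure) is exactly where the paper's computation also concentrates, so there is nothing to add.
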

\begin{proof}
The one-particle symplectic map $R^N_\infty$, which satisfies $\beta_L(W_N(\xi)) = W_{\mathrm{ct}}(R^N_\infty(\xi))$,
intertwines the states: ${\omega_L(W_{\mathrm{ct}}(R^N_\infty(\xi))}=\omega^{(N)}_{L,\infty}(W_N(\xi))$, $\xi\in \fh_{N,L}$.
Indeed, using~\eqref{eq:spcyl}, \eqref{eq:vaccyl} and \eqref{eq:freegroundlimit}, 
\begin{align*}
\omega_L&(W_{\mathrm{ct}}(R^N_\infty(\xi))=\exp\bigg\{\!\!\!-\tfrac{1}{2(2L)^d}\!\!\!\sum_{k\in\frac\pi L\ZZ^{d}}\!\!\!\big|\gamma_m^{-\frac12}(k)(\Re R^N_\infty(\xi))\hat{\,} (k)+i\gamma_m^{\frac12}(k) (\Im R^N_\infty(\xi))\hat{\,} (k)\big|^2\bigg\}\\
&=\exp\bigg\{\!\!\!-\tfrac1{2(2 r\vep_N)^{d}}\!\!\!\sum_{k\in\frac\pi L\ZZ^{d}}\!\!\!\bigg(\frac{1}{\gamma_m(k)}\bigg|\vep_N^{d/2}\sum_{x\in\Lambda_N}q(x)\!\!\!\!\!\!\int\limits_{[-L,L)^{d}}\!\!\!\!\!\!d^{d}y\,\phi_x^{(\vep_N)}(y)e^{-iky}\bigg|^2 \\
&\hspace{4cm}+\gamma_m(k)\bigg|\vep_N^{d/2}\sum_{x\in\Lambda_N}p(x)\!\!\!\!\!\!\int\limits_{[-L,L)^{d}}\!\!\!\!\!\!d^{d}y\,\phi_x^{(\vep_N)}(y)e^{-iky}\bigg|^2\bigg)\bigg\}\\
&=\exp\bigg\{\!\!\!-\tfrac1{2(2 r\vep_N)^{d}}\!\!\!\sum_{k\in\frac\pi L\ZZ^{d}}\!\!\!\bigg(\frac{\vep_N^{2d}}{\gamma_m(k)}\bigg|\sum_{x\in\Lambda_N}\!\!\!q(x)e^{-ikx}\bigg|^2\!\!\!\!+\vep_N^{2d}\gamma_m(k)\bigg|\sum_{x\in\Lambda_N}\!\!\!p(x)e^{-ikx}\bigg|^2\bigg)|\hat\phi(\vep_N k)|^2\bigg\}\\
&=\omega_{L,\infty}^{(N)}(W_N(\xi))
\end{align*}
where in the third equality the relation
\[
 \int\limits_{[-L,L)^{d}}d^{d}y\,\phi_x^{(\vep_N)}(y)e^{-iky}=\vep_N^{\frac{d}{2}} \,e^{-ikx} \hat\phi(\vep_N k),
\]
for $k \in \frac \pi L \ZZ$ was used, where $\hat{\phi}(\vep_N k) = \int_{[-L,L)^{d}}d^{d}y\,\phi(y)e^{-i\vep_N ky}$. \hfill$\square$
\end{proof}

The projective consistency \eqref{eq:staterglimit} of the limit states $\omega^{(N)}_{L,\infty}$ follows both abstractly from Proposition \ref{prop:weakstarconv}
and directly by applying \eqref{eq:unitscalingeqft}. Therefore, we conclude that the limit state $\omega^{(\infty)}_{L,\infty}$ exists on $\cW_{\infty, L}$. By the preceding Lemma, $\omega^{(\infty)}_{L,\infty}$ agrees with the usual continuum vacuum state $\omega_{L}$ on the span of Weyl operators indexed by functions $f\in L^{2}(\TT_{L}^{d})$ with finite wavelet expansion \eqref{eq:waveletexp} associated with $\phi$.

\begin{remark}
\label{rem:dimension}
The construction of suitable scaling functions $\phi$ in dimension $d\geq2$ by tensor products is probably not optimal:
what is needed is the condition $|\hat{\phi}(k)|\leq C(1+|k|)^{-\rho}$ with $\rho>\tfrac{d+1}{2}$ which is rotationally invariant,
while our wavelets are regular in $d$ directions separately.
\end{remark}

To summarize:
\begin{proposition}
\label{prop:waveletscaling}
Let $\{\mu_{N}\}_{N}$ satisfy \eqref{eq:Latmassscaling} with $m>0$ and $\{\alpha_{N'}^N\}$ be the scaling map defined through the Daubechies scaling function $_K\phi$ with $K\ge 2$ on the scalar lattice algebras $\{\cW_{N,L}\}$ as in Section \ref{sec:waveletscaling}.
Then the sequences of states $\{\omega_ {L,M}^{(N)}\}_{M\in\NN_{0}} = \{\omega_{\mu_{N+M},M}\}_{M\in\NN_{0}}$ has the scaling limit $\omega^{(\infty)}_{L,\infty}$ on $\cW_{\infty, L}$ in the sense of \eqref{eq:projstaterg}.
With the embedding $\beta_L$ of $\cW_{\infty, L}$ in $\cW(\fh_L)$ by Proposition \ref{prop:isoalg1}, it holds that
$\omega^{(\infty)}_{L,\infty}(W_N(\xi)) = \omega_{L}(\beta_L(W_N(\xi))$ for $\xi \in \fh_{N,L}$ with some $N$.
\end{proposition}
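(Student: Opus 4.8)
The plan is to assemble the proposition from the pieces already established, since the substantive analytic work is contained in the preceding lemmas. First I would fix $N \in \NN_0$ and apply Lemma \ref{lem:freegroundconv}: under the renormalization condition \eqref{eq:Latmassscaling} and with $\phi = {}_K\phi^{\otimes d}$, $K \geq 2$, the sequence $\{\omega^{(N)}_{L,M}\}_{M \in \NN_0}$ on $\cW_{N,L}$ converges in the weak* topology to the state $\omega^{(N)}_{L,\infty}$ given by \eqref{eq:freegroundlimit}. Since the Weyl elements $\{W_N(\xi)\}_{\xi \in \fh_{N,L}}$ form a total set in $\cW_{N,L}$ in norm, pointwise convergence on these generators already yields weak* convergence of the states.

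Second, having weak* convergence at every scale $N$, I would invoke Proposition \ref{prop:weakstarconv} applied to the directed system $\{\cW_{N,L}, \alpha^N_{N'}\}$ with initial family $\omega^{(N)}_{L,0} = \omega_{\mu_N}$. That proposition guarantees that the limit family $\{\omega^{(N)}_{L,\infty}\}_N$ is projectively consistent, i.e.\ $\omega^{(N')}_{L,\infty} \circ \alpha^N_{N'} = \omega^{(N)}_{L,\infty}$ for all $N < N'$ (this can also be checked directly from \eqref{eq:unitscalingeqft}). By the construction of the inductive limit recalled in Section \ref{sec:indlim}, projective consistency produces a unique projective-limit state $\omega^{(\infty)}_{L,\infty} = \varprojlim_N \omega^{(N)}_{L,\infty}$ on $\cW_{\infty,L}$, characterized by $\omega^{(\infty)}_{L,\infty} \circ \alpha^N_\infty = \omega^{(N)}_{L,\infty}$. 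This is exactly the scaling limit in the sense of \eqref{eq:projstaterg}, establishing the first assertion.

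Finally, for the identification with the continuum vacuum, I would combine the defining relation of the projective limit with Lemma \ref{lem:isoalg2}. For $\xi \in \fh_{N,L}$, writing $W_N(\xi)$ for its image $\alpha^N_\infty(W_N(\xi))$ in $\cW_{\infty,L}$, one has
\[
\omega^{(\infty)}_{L,\infty}(\alpha^N_\infty(W_N(\xi))) = \omega^{(N)}_{L,\infty}(W_N(\xi)) = \omega_L(\beta_L(W_N(\xi))),
\]
where the first equality is projective consistency and the second is Lemma \ref{lem:isoalg2}. This yields the stated equality $\omega^{(\infty)}_{L,\infty}(W_N(\xi)) = \omega_L(\beta_L(W_N(\xi)))$.

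I expect no genuine obstacle at this level: the proposition is a packaging of Lemma \ref{lem:freegroundconv}, Proposition \ref{prop:weakstarconv} and Lemma \ref{lem:isoalg2}. The only point requiring care is that the possible non-uniqueness of weak* limit points flagged after Proposition \ref{prop:weakstarconv} does not arise here, because Lemma \ref{lem:freegroundconv} produces a genuine (full-sequence) weak* limit rather than merely a convergent subsequence; consequently the projective limit is unambiguous, and the continuum vacuum $\omega_L$ is singled out by the renormalization condition \eqref{eq:Latmassscaling}.
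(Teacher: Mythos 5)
Your proposal is correct and follows exactly the route the paper takes: the proposition is summarized there by combining Lemma \ref{lem:freegroundconv} (weak* convergence at each scale), Proposition \ref{prop:weakstarconv} together with \eqref{eq:unitscalingeqft} (projective consistency and existence of the projective-limit state), and Lemma \ref{lem:isoalg2} (identification with $\omega_L\circ\beta_L$). Your closing remark that the full-sequence convergence from Lemma \ref{lem:freegroundconv} removes any ambiguity in the limit is a sensible observation consistent with the paper's discussion.
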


\subsubsection{Density of the wavelet observables in the continuum limit}
\label{sec:density}

Let $\pi^{(\infty)}_{L,\infty}$, be the GNS representation of $\cW_{\infty,L}$ w.r.t.~$\omega_{L,\infty}^{(\infty)}$. Here we show that the local $C^*$-algebra $\cW_{\infty,L}(S)$, defined as in~\eqref{eq:spalocalg}, is dense in $\cW_L(S)$ in the strong operator topology in the representation $\pi^{(\infty)}_{L,\infty}$,
if we use a sufficiently regular scaling function (with $K\ge 6$).
Recall that $\phi^{(\vep_N)}_x \in L^2(\TT_L^d)$, $x \in \Lambda_N$, is the $2L$-periodization of the function $y \in \RR^{d} \mapsto \vep_N^{-\frac{d}{2}}\phi(\frac{y-x}{\vep_N})$ restricted to an open cube of side length $2L$ containing its support.

\begin{lemma}\label{lem:density}
Let $K\ge 6$. Then the (complex) linear span of the functions $\phi^{(\vep_N)}_x$, $x \in \Lambda_N$, $N \in \NN$, is dense in $\fh_L$.
\end{lemma}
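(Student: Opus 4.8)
The plan is to read off the geometry of $\fh_L$ from the explicit inner product \eqref{eq:spcyl} and thereby split the claim into two independent Sobolev-density statements. Writing $\xi=q_\xi+ip_\xi$ with $q_\xi,p_\xi$ real and using that $\gamma_m$ is even together with the reality relations $\hat q_\xi(-k)=\overline{\hat q_\xi(k)}$ and $\hat p_\xi(-k)=\overline{\hat p_\xi(k)}$, the mixed terms in \eqref{eq:spcyl} cancel pairwise under $k\mapsto-k$, leaving
\[
\|\xi\|_L^2=\|q_\xi\|_{H^{-1/2}}^2+\|p_\xi\|_{H^{1/2}}^2,\qquad \|f\|_{H^s}^2:=\tfrac{1}{(2L)^d}\sum_{k\in\frac{\pi}{L}\ZZ^d}\gamma_m(k)^{2s}|\hat f(k)|^2,
\]
which is equivalent to the usual $H^s(\TT_L^d)$ norm since $\gamma_m(k)^2\asymp 1+|k|^2$. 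Since the $\phi^{(\vep_N)}_x$ are real-valued, a complex combination $\sum_x c_x\phi^{(\vep_N)}_x$ has real part $\sum_x(\Re c_x)\phi^{(\vep_N)}_x$ and imaginary part $\sum_x(\Im c_x)\phi^{(\vep_N)}_x$; hence its $q$- and $p$-parts range \emph{independently} over the real span $W:=\Span_{\RR}\{\phi^{(\vep_N)}_x:x\in\Lambda_N,\,N\in\NN\}=\bigcup_N V_N$. The lemma is therefore equivalent to the conjunction: $W$ is dense in $H^{-1/2}(\TT_L^d)$ \emph{and} in $H^{1/2}(\TT_L^d)$.

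The $H^{-1/2}$ part is the benign, field-like direction and is immediate: the MRA completeness \eqref{eq:MRAunion} makes $W$ dense in $L^2(\TT_L^d)$, and since $\|f\|_{H^{-1/2}}\le m^{-1/2}\|f\|_{L^2}$ while $L^2$ is itself dense in $H^{-1/2}$, density of $W$ in $H^{-1/2}$ follows.

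The $H^{1/2}$ part is the momentum-like direction and is where the real content sits --- consistent with the warning around \eqref{eq:blockmom} that the continuum momentum cannot be smeared against nondifferentiable functions. I would prove it constructively: approximate a dense class (trigonometric polynomials or $C^\infty$ functions) and show $P_Nf\to f$ in $H^{1/2}$, where $P_N$ is the $L^2$-projection onto $V_N$ and $P_Nf\in V_N\subset W$. Expanding $f-P_Nf=\sum_{N'\ge N}Q_{N'}f$ in the wavelet decomposition \eqref{eq:waveletexp} and using the triangle inequality, it suffices to bound each detail term by $\|Q_{N'}f\|_{H^{1/2}}\le C\,2^{N'/2}\|Q_{N'}f\|_{L^2}$ (a scale-wise Bessel estimate) together with the vanishing-moment decay $\|Q_{N'}f\|_{L^2}\le C\,2^{-N'K}$ valid for smooth $f$; the geometric series $\sum_{N'\ge N}2^{N'(\frac12-K)}$ then tends to $0$.

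The main obstacle is the scale-wise $H^{1/2}$ Bessel estimate: it requires the dilated wavelets to form a uniformly bounded family in $H^{1/2}$, which holds only when the Daubechies wavelet lies in $H^{s}$ for some $s>\tfrac12$, and this is exactly the point at which the regularity hypothesis $K\ge 6$ (preserved by the tensor-product construction) is invoked. I would also note a cleaner alternative that sidesteps the quantitative estimate entirely: showing that the $\fh_L$-orthogonal complement of the complex span is trivial. Indeed, if $\eta\in\fh_L$ satisfies $\langle\eta,\phi^{(\vep_N)}_x\rangle_L=0$ for all $x,N$, then the $\ell^2$ sequence $\gamma_m^{-1}\overline{\hat q_\eta}-i\overline{\hat p_\eta}$ is $L^2$-orthogonal to the $L^2$-total family $\{\phi^{(\vep_N)}_x\}$ and hence vanishes, giving $\hat q_\eta=-i\gamma_m\hat p_\eta$; imposing the reality of $q_\eta$ and $p_\eta$ then forces $\hat p_\eta\equiv 0$ and so $\eta=0$. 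This second route uses only $L^2$-completeness and reality, which suggests that the stronger regularity $K\ge 6$ is in fact reserved for the finer, local strong-operator statements of Section \ref{sec:density} rather than for this global density.
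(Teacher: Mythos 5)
Your main line of argument is correct and is essentially the paper's proof: both reduce the claim, via the cancellation of the cross terms in \eqref{eq:spcyl}, to density of the real span $\bigcup_N V_N$ separately in $H^{-1/2}(\TT_L^d)$ (immediate from $L^2$-completeness of the MRA) and in $H^{1/2}(\TT_L^d)$ (the genuine content, where $K\ge 6$ enters). The only differences are cosmetic: the paper obtains the $H^{1/2}$-convergence of the wavelet expansion by citing the equivalence of the $B^{1/2}_{22}$ Besov norm with the $H^{1/2}$ norm, whereas you unpack this into an explicit Bernstein--Jackson estimate ($\|Q_{N'}f\|_{H^{1/2}}\lesssim 2^{N'/2}\|Q_{N'}f\|_{L^2}$ plus vanishing-moment decay), and the paper works on $\RR^d$ and then periodizes with a partition of unity --- a detour that also prepares the localized density statement used in Theorem \ref{thm:localalgebras} --- while you work directly with the periodized system.

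Your ``cleaner alternative,'' however, does not prove the lemma, and the conclusion you draw from it (that $K\ge 6$ is not needed here) is wrong. The condition $\langle\eta,\phi^{(\vep_N)}_x\rangle_L=0$ for all $x,N$ characterizes the orthogonal complement of the closed span of $\{\phi^{(\vep_N)}_x\}$ \emph{with respect to the complex structure of $\fh_L$}, which is $q+ip\mapsto -\gamma_m p+i\gamma_m^{-1}q$, not pointwise multiplication by $i$. Triviality of that complement therefore yields density of the span of $\{\phi^{(\vep_N)}_x\}\cup\{i\gamma_m^{-1}\phi^{(\vep_N)}_x\}$ --- a strictly larger (and smoother, in the $p$-slot) subspace than the one actually produced by the embeddings $R^N_\infty$ in \eqref{eq:jLN}, namely $\{q+ip: q,p\in\Span_\RR\{\phi^{(\vep_N)}_x\}\}$. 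The latter is only a real subspace of $\fh_L$, so the correct annihilator condition is $\Re\langle\eta,\phi^{(\vep_N)}_x\rangle_L=0$ and $\Re\langle\eta,i\phi^{(\vep_N)}_x\rangle_L=0$, which decouples into $\sum_k\gamma_m(k)^{-1}\overline{\hat q_\eta(k)}\hat\phi^{(\vep_N)}_x(k)=0$ and $\sum_k\gamma_m(k)\overline{\hat p_\eta(k)}\hat\phi^{(\vep_N)}_x(k)=0$. The first reduces to $L^2$-totality because $\gamma_m^{-1}\hat q_\eta\in\ell^2$, but the second does not: for $\eta\in\fh_L$ one only knows $\gamma_m^{1/2}\hat p_\eta\in\ell^2$, so $\gamma_m\hat p_\eta$ need not be square-summable, and one is thrown back on totality of $\{\phi^{(\vep_N)}_x\}$ in $H^{1/2}$ --- precisely the statement that requires the regularity hypothesis. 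So the apparent economy of the orthocomplement argument comes from silently enlarging the subspace, and $K\ge 6$ is indeed doing work in this lemma, not only in the local strong-operator statements that follow.
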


\begin{proof}
As the norms are equivalent, $\fh_L$ can be identified, as a real Hilbert space, with the direct sum of real Sobolev spaces $H_\RR^{-\frac{1}{2}}(\TT_L^d)\oplus H_\RR^{\frac{1}{2}}(\TT_L^d)$. With the notation of Section \ref{sec:waveletscaling},
we then recall that if $f \in H_\RR^{\pm\frac12}(\RR^d)$
 then for any $\ell_0 \geq 0$,
\begin{align}\label{eq:sobolevexp}
f = \sum_{n\in\ZZ^d}\langle\phi_{\ell_0,n},f\rangle\phi_{\ell_0,n} + \sum_{\ell\geq \ell_0}\sum_{n\in\ZZ^d}\langle\psi_{\ell,n},f\rangle \psi_{\ell,n}
\end{align}
where $\langle \cdot, \cdot \rangle$ denotes the standard $L^2(\RR^d)$ inner product, and this series converges also in $H^{\pm\frac12}_\RR$
 for sufficiently regular wavelets. Convergence in $H_\RR^{-\frac12}(\RR^d)$ comes from  convergence in $L^2(\RR^d)$ and convergence in $H_\RR^{\frac12}(\RR^d)$  comes with $K\ge 6$ by \cite[Corollary 9.1, Example 9.1]{WASA} since the $B^{\frac12}_{22}$ Besov norm  is equivalent to the $H_\RR^{\frac12}(\RR^d)$ norm.

 Let $f\in H_\RR^{\pm\frac{1}{2}}(\RR^{d})$ have support contained in $(-L+\delta,L-\delta)^{d}$ for some $\delta > 0$.
Then, since $\supp \phi_{\ell,n} = 2^{-\ell}(\supp \phi +n)$, and analogously for $\psi_{\ell,n}$, we can find a sufficiently large $\ell_0$
such that the non-zero components in the expansion \eqref{eq:sobolevexp} of the rescaled function $f_\vep = f(\vep \cdot)$ all have supports in $(-\vep^{-1}(L-\delta/2),\vep^{-1}(L-\delta/2))^{d}$.
Moreover, there holds the further expansion (actually a finite linear combination)
\[
\psi_{\ell,n} = \sum_{k \in \ZZ^d} \langle \phi_{\ell+1,k},\psi_{\ell,n}\rangle \phi_{\ell+1,k},
\]
and again due to the support properties of $\psi_{\ell,n}$ the only non-zero coefficients involve functions $\phi_{\ell+1,m}$ whose support is contained in $(-\vep^{-1}L,\vep^{-1}L)^d$. Summing up, we obtain that $f_\vep$ can be approximated, in the $H^{\pm \frac 1 2}_{\RR}(\RR^d)$ norm, by finite linear combinations of functions  $\phi_{j,n}$ with support contained in $(-\vep^{-1}L,\vep^{-1}L)^d$. Recalling now that $\phi^{(\vep_N)}_x$ is the $2L$-periodization of $\vep^{-d/2}\phi_{N,\vep_N^{-1} x}(\vep^{-1}\cdot)$, we conclude that $f$ can be approximated in $H_\RR^{\pm 1/2}(\TT_L^d)$ by finite real linear combinations of the functions $\phi^{(\vep_N)}_x$, $x \in \Lambda_N$, $N \geq \ell_0$.

Note that if $f\in H_\RR^{\pm\frac{1}{2}}(\TT_L^d)$ has a support contained in a shifted torus,
then $f$ can be approximated as above because the wavelet basis generated by $\{\phi^{(\vep_N)}_x\,:\, x \in \Lambda_N,\, N \geq \ell_0\}$
($\ell_0$ depends on the length of the support of $f$)
is invariant under dyadic translations.
For a general $f\in H_\RR^{\pm\frac{1}{2}}(\TT_L^d)$, we can take a smooth partition of unity on $\TT_L^d$
such that each support is contained in a shifted torus. Then $f$ can be written as a finite sum of functions whose supports are
contained in a shifted torus, and each of them can be approximated in the $H^s$-norm, hence so can $f$ itself. \hfill$\square$
\end{proof}

At this point we can show that our wavelet scaling limit gives the time-zero algebras of the usual massive free field on the cylinder.
\begin{theorem}\label{thm:localalgebras}
Let $K\ge 6$.
Let $\pi_{L,\infty}^{(\infty)}$, $\pi_{L}$ be the GNS representations
of $\cW_{\infty, L}$, $\cW(\fh_L)$ with respect to $\omega^{(\infty)}_{L,\infty}$, $\omega_L$,
and $\beta_L$ be the embedding $\cW_{\infty, L} \to \cW(\fh_L)$ as in Proposition \ref{prop:isoalg1}.
Then there is a unitary operator $V : \cH^{(\infty)}_{L,\infty} \to \cH_{L}$ such that
\[
 V \pi_{L,\infty}^{(\infty)}(W_N(\xi)) V^* = \pi_{L}(\beta_L(W_N(\xi)), \qquad \xi \in \fh_{N,L},\,N \in \NN.
\]
Furthermore, we have $ V \pi_{L,\infty}^{(\infty)}(\cW_{\infty, L}( S))'' V^* = \pi_{L}(\cW_L( S))''$ for any open set $ S \subset \TT_L^d$,  
and $V^*U_L(a,0)V$, $a \in \bigcup_{N \in \NN} \Lambda_N$, implements the spatial dyadic translations $\eta^{(\infty)}_{L|a}$ of $\cW_{\infty, L}$ in the representation $\pi_{L,\infty}^{(\infty)}$, where $U_L(a,0)$, $a \in \TT_L^d$, are the spatial
translations for $\pi_{L}$.
\end{theorem}

\begin{proof}
By Lemma \ref{lem:density}, and by the strong continuity of $\xi \mapsto \pi_{L}(W_{\mathrm{ct}}(\xi))$, we obtain $\pi_{L}(\beta_L(\cW_{\infty, L}))'' = \pi_{L}(\cW(\fh_L))'' = \cB(\cH_{L})$.
The unitary $V$ is given by the uniqueness of GNS representation induced by $\omega^{(\infty)}_{L,\infty}$, together with Lemma \ref{lem:isoalg2}
and the cyclicity of the GNS vector for $\pi_{L}(\beta_L(\cW_{\infty, L}))''  = \cB(\cH_{L})$.

As for local density, it is immediate to see that for any open set $ S \subset \TT_L^d$,
\[
 V \pi_{L,\infty}^{(\infty)}(\cW_{\infty, L}( S))'' V^* \subset \pi_{L}(\cW_L( S))''. 
\]

Let $ S\subset (-L,L)^d$ and
consider $\xi \in \fh_L$ such that $\mathrm{supp}\, \xi\subset  S$.
By arguing as in Lemma \ref{lem:density}, we can choose $N_0$ large enough such that we can approximate $\xi$ by 
finite linear combinations of scaling functions $\{\phi^{(\vep_N)}_x:\supp \phi^{(\vep_N)}_x\subset  S, N\geq N_0\}$. Moreover, as $R^N_\infty\delta^{(N)}_x = \vep_N^{-\frac 12} \phi^{(\vep_N)}_x$, $\mathrm{\supp}\,\phi_x^{(\vep_N)} \subset  S$ entails $W_N(\delta_x^{(N)}) \in \cW_{\infty, L}( S)$ and therefore,
by strong continuity of Weyl operators,
\[
\pi_{L}(W_{\mathrm{ct}}(\xi)) \in V \pi_{L,\infty}^{(\infty)}(\cW_{\infty, L}( S))'' V^*,
\]
hence we have the equality $V \pi_{L,\infty}^{(\infty)}(\cW_{\infty, L}( S))'' V^* = \pi_{L}(\cW_L( S))''$.

Let now $ S\subsetneq\TT^d_L$ be a general open region and $\xi \in\fh_L$ such that $\mathrm{supp}\, \xi \subset  S$.
Again by the argument of Lemma \ref{lem:density}, we can write $\xi$ as a finite sum of functions in $\fh_L$ each of which has support
in an open subset of $ S$ which can be brought inside the fundamental domain $(-L,L)^d$ by a translation in $\vep_{N_0} \ZZ^d$ with $N_0 \in \NN$ sufficiently large. The proof is then concluded by the above argument, together with the fact that the set $\{ \phi^{(\vep_N)}_x\,:\,x \in \Lambda_N, N \geq N_0\}$ is globally invariant under $\vep_{N_0}\ZZ^d$-translations.

Finally, it is clear that the map $R^N_\infty : \fh_{N,L} \to \fh_L$ in Eq.~\eqref{eq:jLN} intertwines the action of the dyadic translations on the lattice and on the torus $\TT^d_L$. Since $\beta_L$ is induced by $R^N_\infty$, this immediately entails the statement about the implementation of the dyadic translations by $V^*U_L(a,0)V$, $a \in \bigcup_{N \in \NN} \Lambda_N$.\hfill$\square$
\end{proof}

We stress the fact that by the above result, the dyadic translations of the inductive limit theory extend continuously to the whole torus translation group.

\subsubsection{Limit of dynamics}
\label{sec:limdyn}

The free lattice dynamics $\eta^{(N)}_{L}:\RR\curvearrowright\cW_{N,L}$ induced by the Hamiltonian $H^{(N)}_{L,0}$ \eqref{eq:freeham} results via second quantization from the harmonic time evolution, $\tau^{(N)}_{L} : \RR \curvearrowright \fh_{N,L}$, on the one-particle space (using the conventions of Section \ref{sec:freefield},
see e.g.\! \cite[(1.8.30)]{BattleWaveletsAndRenormalization}). In the momentum-space representation, i.e.~on $\hat{\fh}_{N,L}$, it is given by the multiplication operator: 
\begin{align}
\label{eq:freedyn}
\tau^{(N)}_{L|t}(\hat{\xi}) &\!=\![\cos(t \gamma_{\mu_{N}})\!+\!i\vep_{N}^{-1}\gamma_{\mu_{N}}^{-1}\sin(t \gamma_{\mu_{N}})] \Re \hat{\xi}\!+\!i[\cos(t \gamma_{\mu_{N}})\!+\!i\vep_{N}\gamma_{\mu_{N}} \sin(t \gamma_{\mu_{N}})]\Im \hat{\xi},
\end{align}
for $t\in\RR$, which is the analogue of the free (continuum) time evolution, $\tau_{L}:\RR\!\curvearrowright\fh_{L}$, given in~\eqref{eq:contdynamics} using $\gamma_{\mu_{N}}$ instead of $\gamma_{m}$.
As this evolution preserves the symplectic structure, $\sigma_{N,L}\circ\tau^{(N)}_{L|t} = \sigma_{N,L}$, we obtain $\eta^{(N)}_{L}$ via
\begin{align}
\label{eq:freedynind}
\eta^{(N)}_{L|t}(W_{N}(\hat{q},\hat{p})) & = W_{N}(\tau^{(N)}_{L|t}(\hat{q},\hat{p})), & t&\in\RR,
\end{align}
where we write $\xi \in \fh_{N,L}$ as the tuple $(\hat{q},\hat{p})$ as before. By construction, each initial state is preserved by the dynamics of the same scale, i.e. $\omega_{\mu_{N},0}\circ\eta^{(N)}_{L|t} = \omega_{\mu_{N},0}$, but the dynamics induced by $H^{(N+1)}_{L,0}$ does not preserve the image of $\alpha^{N}_{N+1}$. In other words, $\eta^{(N')}_{L}$ does not preserve $\alpha^{N}_{N'}(\cW_{N,L})\subset\cW_{N',L}$, $N'>N$. In terms of the one-particle spaces this follows from the fact that multiplying a periodically extended tuple of vectors $(\hat{q},\hat{p})\in\ltwo(\Gamma_{N},(2r_{N})^{-d}\mu_{\Gamma_{N}})$ by a function only periodic on $\Gamma_{N+1}$ does not preserve\footnote{See also Section \ref{sec:mom} for a scaling map scheme using sharp momentum cutoffs, where such a consistent dynamics induced by $H^{(N+1)}_{L,0}$ relative to $\alpha^{N}_{N+1}$ exists.} the periodicity on $\Gamma_{N}$.
Moreover, we need to analyze the convergence of the dynamics relative to inductive limit structure of $\cW_{\infty,L}$ and the GNS representation of $\omega^{(\infty)}_{L,\infty}$ because the lattice dynamics $\eta^{(N)}_{L}$ does not extend\footnote{In terms of the one-particle spaces, this issue manifests in the following way: The complete space  $\fh_{L}$ is isomorphic as real topological vector space to $H^{-\frac{1}{2}}(\TT^{d}_{L})\!\oplus\!H^{\frac{1}{2}}(\TT^{d}_{L})$ which makes it evident that the dynamics $\tau^{(N)}_{L}$ does not extend to $\fh_{L}$ because $\sF^{-1}[\gamma_{\mu_{N}}^{-1}\hat{g}]\!\notin\!H^{\frac{1}{2}}(\TT^{d}_{L})$ for $g \in H^{-\frac{1}{2}}(\TT^{d}_{L})$.} to the $\cW(\fh_{L})$ and the continuum dynamics $\eta_{L}$ is only defined with respect to $\pi^{(\infty)}_{L,\infty}$ (since $\tau_{L}$ does not preserve the subspace $\fh_{\infty,L}\subset\fh_{L}$).
To this end, we use \eqref{eq:jLN} to realize the map $\alpha^{N}_{\infty}$ explicitly. Because of \eqref{eq:scalingprod}, it is induced by
\begin{align}
\label{eq:continuumsmearingfourier}
R^{N}_{\infty} : \fh_{N,L} & \!\rightarrow \fh_{\infty,L}\subset\fh_{L}, & R^{N}_{\infty}(\hat{q},\hat{p}) & = \vep^{\frac{d}{2}}_{N}\hat{\phi}(\vep_{N}\!\ \cdot \!\ )(\hat{q},\hat{p})
\end{align}
in the Fourier space, where the abstract inductive limit $\fh_{\infty,L} = \varinjlim_{N}\fh_{N,L}$ was identified with the subspace of $\fh_L$ spanned by functions in the range of the maps $R^N_\infty$, $N \in \NN$ thanks to the consistency $R^{N'}_{\infty} \circ R^{N}_{N'} = R^{N}_{\infty}$ for $N'>N$ (shown in the proof of Proposition \ref{prop:isoalg1}) and the uniqueness of the inductive limit.
This allows us to investigate the convergence of the sequence of dynamics according to the general form given in \eqref{eq:limdyn} which can be facilitated at the level of one-particle spaces:
\begin{proposition}
\label{prop:limdyn}
For $N\in\NN_{0}$ and for $(\hat{q},\hat{p})\in\fh_{N,L}$, we have:
\begin{align*}
\Big\|R^{N'}_{\infty}(\tau^{(N')}_{L|t}(R^{N}_{N'}(\hat{q},\hat{p})))-\tau_{L|t}(R^{N}_{\infty}(\hat{q},\hat{p}))\Big\|_{L} & \rightarrow 0, \quad \text{ as }\quad N' \rightarrow \infty,
\end{align*}
uniformly on compact sets of $t\in\RR$.
\end{proposition}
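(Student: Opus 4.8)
The plan is to reduce everything to a comparison of multiplication operators in the momentum representation and then invoke dominated convergence. First I would pass to the symplectic variables $(\hat q,\hat p)$ of \eqref{eq:continuumsmearingfourier}, in which both the scaling maps and both time evolutions act diagonally in $k\in\frac\pi L\ZZ^d$. Recall that $R^N_\infty$ is periodic extension from $\Gamma_N$ to $\Gamma_\infty=\frac\pi L\ZZ^d$ followed by multiplication by the scalar $\vep_N^{\frac d2}\hat\phi(\vep_N\,\cdot)$, while $\tau^{(N')}_{L|t}$ acts as multiplication by a $2\times2$ symplectic matrix $T^{(N')}_t(k)$ built from the lattice dispersion $\gamma_{\mu_{N'}}$, which is $\Gamma_{N'}$-periodic.

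The key structural step is an algebraic reduction that makes the two scaling maps cancel. Since $\gamma_{\mu_{N'}}$ is $\Gamma_{N'}$-periodic, periodic extension intertwines $\tau^{(N')}_{L|t}$ with its lift $\tilde\tau^{(N')}_{L|t}$ to $\Gamma_\infty$, and the scalar factor $\hat\phi(\vep_{N'}\,\cdot)$ commutes with the matrix multiplication; hence $R^{N'}_\infty\circ\tau^{(N')}_{L|t}=\tilde\tau^{(N')}_{L|t}\circ R^{N'}_\infty$. Combining this with the consistency relation $R^{N'}_\infty\circ R^N_{N'}=R^N_\infty$ for $N'>N$ — which is exactly the telescoping of the scalar factors $2^{\frac d2}m_0(\vep_{N+n}\,\cdot)$ in \eqref{eq:oneparticlescalingiterated} via the iterated scaling equation \eqref{eq:unitscalingeqft}, as established in the proof of Proposition \ref{prop:isoalg1} — the quantity to be estimated collapses to
\[ R^{N'}_\infty(\tau^{(N')}_{L|t}(R^N_{N'}(\hat q,\hat p)))-\tau_{L|t}(R^N_\infty(\hat q,\hat p))=(\tilde\tau^{(N')}_{L|t}-\tau_{L|t})\,R^N_\infty(\hat q,\hat p). \]
Thus only the difference between the lattice and continuum dynamics, applied to the fixed decaying vector $R^N_\infty(\hat q,\hat p)\in\fh_{\infty,L}\subset\fh_L$, remains.

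Next I would establish pointwise convergence of the dynamics. Using the renormalization condition \eqref{eq:Latmassscaling} together with $2\vep_{N'}^{-2}(1-\cos(\vep_{N'}k_j))\to k_j^2$, one gets $\gamma_{\mu_{N'}}(k)\to\gamma_m(k)$ for each fixed $k$ (once $N'$ is large enough that $k\in\Gamma_{N'}$, so the periodization is immaterial), whence $T^{(N')}_t(k)\to T^{\mathrm{cont}}_t(k)$ entrywise, uniformly for $t$ in any compact interval because the entries depend on $\gamma$ only through $\cos(t\gamma)$ and $\gamma^{\pm1}\sin(t\gamma)$, which are Lipschitz. To promote this to $\fh_L$-norm convergence I would dominate the summand uniformly in $N'$: the bound $\gamma_{\mu_{N'}}(k)\le C(1+|k|)$ holds for all $N'$ and all $k$ (from $1-\cos\le\tfrac12(\cdot)^2$ inside $\Gamma_{N'}$, and $\gamma_{\mu_{N'}}\le C\vep_{N'}^{-1}\le C|k|$ in the periodized regime), together with $\gamma_{\mu_{N'}}\ge m$ for large $N'$; these bound the entries of $\tilde\tau^{(N')}_{L|t}-\tau_{L|t}$ by $C(T_0)(1+|k|)$ for $|t|\le T_0$. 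Weighting by the $\fh_L$ inner product \eqref{eq:spcyl} (factors $\gamma_m^{\pm1}(k)$) and by $|R^N_\infty(\hat q,\hat p)(k)|^2=\vep_N^d|\hat\phi(\vep_N k)|^2|(\hat q,\hat p)(k)|^2$, the summand is bounded by $C(1+|k|)\,\vep_N^d|\hat\phi(\vep_N k)|^2$, which is summable over $\frac\pi L\ZZ^d$ precisely under the same decay $|\hat\phi(k)|\le C(1+|k|)^{-\rho}$, $\rho>\tfrac{d+1}2$, that guaranteed finiteness of the limit state \eqref{eq:freegroundlimit} (so that $(1+|k|)^{1-2\rho}$ is summable). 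Dominated convergence then gives the claimed limit, and since the dominating function is $t$-independent on $|t|\le T_0$ while the pointwise convergence is uniform in $t$, the convergence is uniform on compact $t$-sets.

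The hard part will be the bookkeeping in the algebraic reduction: matching the several powers of $\vep_N$ coming from the canonical-dimension normalization \eqref{eq:canonicaldimension}, the Fourier normalization \eqref{eq:fourierscaling}, the factors $2^{\frac{d(N'-N)}2}$ in \eqref{eq:oneparticlescalingiterated}, and the $\vep_{N'}^{\pm1}$ appearing in the lattice evolution \eqref{eq:freedyn}, so that the scalar prefactors really do telescope into $\vep_N^{\frac d2}\hat\phi(\vep_N\,\cdot)$ and the residual operator is exactly the difference of dispersion relations. Relatedly, one must check that the bound on $\gamma_{\mu_{N'}}$ is genuinely uniform in $N'$ across the periodized region so that the dominating function is legitimate. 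Once that is in place, the analytic core — the pointwise limit of the dispersion relation plus dominated convergence against the wavelet decay — is routine.
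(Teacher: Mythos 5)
Your proposal is correct and follows essentially the same route as the paper's proof: after the telescoping of the scalar filter factors into $\vep_N^{d/2}\hat\phi(\vep_N\cdot)$ (which the paper performs implicitly via \eqref{eq:oneparticlescalingiterated} and \eqref{eq:scalingprod} when writing the norm as a single weighted sum over $\tfrac{\pi}{L}\ZZ^d$), both arguments reduce to pointwise convergence $\gamma_{\mu_{N'}}\to\gamma_m$ plus dominated convergence, with the same uniform two-sided bounds on $\gamma_{\mu_{N'}}$ relative to $\gamma_m$ and the same wavelet decay $|\hat\phi(k)|\leq C(1+|k|)^{-\rho}$, $\rho>\tfrac{d+1}{2}$, and the same Lipschitz/uniform-continuity argument for uniformity on compact $t$-intervals.
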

\begin{proof}
As $\gamma_{\mu_{N'}}\!\rightarrow\!\gamma_{m}$ pointwise by \eqref{eq:Latmassscaling}, we use dominated convergence relative to the measure space $(\tfrac{\pi}{L}\ZZ^{d}\!,\!(2L)^{-d}\mu_{\frac{\pi}{L}\ZZ^{d}}\!)$ and the decay properties of $\hat{\phi}$. Moreover, we have an integrability estimate uniform in $t\in\RR$ (using \eqref{eq:oneparticlescalingiterated} and \eqref{eq:scalingprod} to obtain an overall integrable factor):
\begin{align*}
 & \Big\|R^{N'}_{\infty}(\tau^{(N')}_{L|t}(R^{N}_{N'}(\hat{q},\hat{p})))-\tau_{L|t}(R^{N}_{\infty}(\hat{q},\hat{p}))\Big\|^{2}_{L} \\
 & = \frac{\vep_{N}^{d}}{(2L)^{d}}\sum_{k\in\frac{\pi}{L}\ZZ^{d}}|\hat{\phi}(\vep_{N}k)|^{2}\left(\gamma_{m}(k)^{-1}\left|(\cos(\gamma_{\mu_{N'}}(k)t)-\cos(\gamma_{m}(k)t))\hat{q}(k) \right.\right. \\[-0.5cm]
 &\hspace{5cm} \left.-(\gamma_{\mu_{N'}}(k)\sin(\gamma_{\mu_{N'}}(k)t)-\gamma_{m}(k)\sin(\gamma_{m}(k)t))\hat{p}(k)\right|^{2} \\
 &\hspace{4.5cm} + \gamma_{m}(k)\left|(\cos(\gamma_{\mu_{N'}}(k)t)-\cos(\gamma_{m}(k)t))\hat{p}(k) \right. \\[-0.125cm]
 &\hspace{5cm} \left.\left. + (\gamma_{\mu_{N'}}(k)^{-1}\sin(\gamma_{\mu_{N'}}(k)t)-\gamma_{m}(k)^{-1}\sin(\gamma_{m}(k)t))\hat{q}(k)\right|^{2}\right) \\
 & \leq \frac{4\vep_{N}^{d}}{(2L)^{d}}\sum_{k\in\frac{\pi}{L}\ZZ^{d}}\gamma_{m}(k)|\hat{\phi}(\vep_{N}k)|^{2}\bigg((\gamma_{m}(k)^{-1}|\hat{q}(k)|+(1+c')^{\frac{1}{2}}|\hat{p}(k)|)^{2} \\[-0.5cm]
 &\hspace{5cm} +\left(|\hat{p}(k)|+\frac{(1-c'')^{-\frac{1}{2}}+1}{2}m^{-1}|\hat{q}(k)|\right)^{2}\bigg) \\
 & < \infty,
\end{align*}
where from the second to the third line, we use the fact that the integrand in the second line is pointwise dominated by the integrand in the third line. This estimate uses the basic inequalities $(1-c'')^{\frac{1}{2}}m\leq\gamma_{\mu_{N'}}$ and $\gamma_{\mu_{N'}}\leq(1+c')^{\frac{1}{2}}\gamma_{m}$ valid uniformly in $k \in \frac\pi L \ZZ^d$ for sufficiently large $N'\in\NN_{0}$ and constants $0<c',c''<1$.
Now, we can exploit the uniform continuity of $\cos$ and $\sin$ to conclude that the convergence is uniform on compact sets of $t\in\RR$. More precisely, we have for $|t|\leq T<\infty$:
\begin{align*}
 & \sup_{|t|\leq T}\Big\|R^{N'}_{\infty}(\tau^{(N')}_{L|t}(R^{N}_{N'}(\hat{q},\hat{p})))-\tau_{L|t}(R^{N}_{\infty}(\hat{q},\hat{p}))\Big\|^{2}_{L} \\
 &\leq\frac{\vep_{N}^{d}}{(2L)^{d}}\sum_{k\in\frac{\pi}{L}\ZZ^{d}}|\hat{\phi}(\vep_{N}k)|^{2}\left( \sup_{|t|\leq T}\bigg\{\gamma_{m}(k)^{-1}\left|(\cos(\gamma_{\mu_{N'}}(k)t)-\cos(\gamma_{m}(k)t))\hat{q}(k) \right.\right.  \\[-0.5cm]
 &\hspace{5cm} \left.-(\gamma_{\mu_{N'}}(k)\sin(\gamma_{\mu_{N'}}(k)t)-\gamma_{m}(k)\sin(\gamma_{m}(k)t))\hat{p}(k)\right|^{2}\bigg\} \\
 &\hspace{4.2cm} +  \sup_{|t|\leq T}\bigg\{\gamma_{m}(k)\left|(\cos(\gamma_{\mu_{N'}}(k)t)-\cos(\gamma_{m}(k)t))\hat{p}(k) \right. \\[-0.375cm]
 &\hspace{5cm} \left.\left. + (\gamma_{\mu_{N'}}(k)^{-1}\sin(\gamma_{\mu_{N'}}(k)t)-\gamma_{m}(k)^{-1}\sin(\gamma_{m}(k)t))\hat{q}(k)\right|^{2}\bigg\}\right).
\end{align*}
Now, because the preceding estimate is independent of $t$, we can use the dominated convergence theorem, the fact that the pointwise supremum of a sequence of measurable function is again measurable, and that for every $k\in\frac{\pi}{L}\ZZ^{d}$,
\begin{align*}
\lim_{N'\rightarrow\infty} \sup_{|t|\leq T} & |\cos(\gamma_{\mu_{N'}}(k)t)-\cos(\gamma_{m}(k)t)| = 0, \\
\lim_{N'\rightarrow\infty} \sup_{|t|\leq T} & |\gamma_{\mu_{N'}}(k)^{-1}\sin(\gamma_{\mu_{N'}}(k)t)-\gamma_{m}(k)^{-1}\sin(\gamma_{m}(k)t)| = 0
\end{align*}
by the uniform continuity of $\sin$ and $\cos$, to conclude the proof. \hfill$\square$
\end{proof}

With $V : \cH^{(\infty)}_{L,\infty} \to \cH_{L}$ the unitary operator in Thm.~\ref{thm:localalgebras} and $\{\eta_{L|t}\}$ the automorphisms of $\cW(\fh_L)$ realizing the free dynamics, it is clear that 
\[
U_L(0,t) V \pi^{(\infty)}_{L,\infty}(W) V^* U_L(0,t) =  \pi_{L}(\eta_{L|t}(\beta_L(W))), \qquad W \in \cW_{\infty,L}, 
\]
so we can use $V$ to transport the continuum free dynamics on $\cH^{(\infty)}_{L,\infty}$ as $U_{L|t} := V^* U_L(0,t) V$, and 
by a slight abuse of notation we will also use $\eta_{L|t}$ to denote $\mathrm{Ad}_{U_{L|t}} $.

\begin{corollary}
\label{cor:limdyn}
Given $N\in\NN_{0}$, $\{\pi^{(\infty)}_{L,\infty}(\alpha^{N'}_{\infty}(\eta^{(N')}_{L|t}(\alpha^{N}_{N'}(W))))\}_{N'>N}$ converges strongly to $\eta_{L|t}(\pi^{(\infty)}_{L,\infty}(\alpha^{N}_{\infty}(W)))$ for all $W\in\cW_{N,L}$ and uniformly on compact sets of $t\in\RR$.
\end{corollary}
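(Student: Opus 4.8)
The plan is to reduce the statement to Weyl operators, transport the whole problem into the continuum Fock representation $\pi_L$ by means of the unitary $V$ of Theorem~\ref{thm:localalgebras}, and then feed the one-particle convergence of Proposition~\ref{prop:limdyn} into the strong continuity of the Weyl map of a regular representation. First I would note that every map occurring on the left is contractive: $\alpha^N_{N'}$ and $\alpha^{N'}_\infty$ are injective $*$-homomorphisms, $\eta^{(N')}_{L|t}$ is a $*$-automorphism, and $\pi^{(\infty)}_{L,\infty}$ is a representation, so
\[
\big\|\pi^{(\infty)}_{L,\infty}\big(\alpha^{N'}_\infty(\eta^{(N')}_{L|t}(\alpha^N_{N'}(W)))\big)\big\| \le \|W\|
\]
uniformly in $N'$ and $t$, and likewise for the limit operator. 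Since the span of $\{W_N(\xi)\}_{\xi\in\fh_{N,L}}$ is norm-dense in $\cW_{N,L}$, a standard $\epsilon/3$ argument — approximate $W$ by a finite linear combination $W_0$ of Weyl operators and bound the two outer terms by $\|W-W_0\|\,\|\Psi\|$, which is independent of $N'$ and $t$ — reduces the claim, with its uniformity on compact $t$-intervals intact, to the case $W=W_N(\xi)$.

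For a single Weyl operator I would conjugate by $V$. Using $V\,\pi^{(\infty)}_{L,\infty}(W_N(\xi))\,V^* = \pi_L(\beta_L(W_N(\xi)))$, the relation $\beta_L\circ\alpha^{N'}_\infty=\beta^{N'}_L$ with $\beta^{N'}_L(W_{N'}(\zeta))=W_{\mathrm{ct}}(R^{N'}_\infty(\zeta))$ from Proposition~\ref{prop:isoalg1}, the covariance $\eta^{(N')}_{L|t}(W_{N'}(\zeta))=W_{N'}(\tau^{(N')}_{L|t}\zeta)$, and $\eta_{L|t}=\mathrm{Ad}_{V^*U_L(0,t)V}$ together with $\mathrm{Ad}_{U_L(0,t)}\pi_L(W_{\mathrm{ct}}(\zeta))=\pi_L(W_{\mathrm{ct}}(\tau_{L|t}\zeta))$, one computes that the conjugated operators are exactly continuum Weyl operators,
\[
V\,\pi^{(\infty)}_{L,\infty}\big(\alpha^{N'}_\infty(\eta^{(N')}_{L|t}(\alpha^N_{N'}(W_N(\xi))))\big)\,V^* = \pi_L(W_{\mathrm{ct}}(\xi_{N'})), \qquad \xi_{N'} := R^{N'}_\infty(\tau^{(N')}_{L|t}(R^N_{N'}(\xi))),
\]
\[
V\,\eta_{L|t}\big(\pi^{(\infty)}_{L,\infty}(\alpha^N_\infty(W_N(\xi)))\big)\,V^* = \pi_L(W_{\mathrm{ct}}(\xi_\infty)), \qquad \xi_\infty := \tau_{L|t}(R^N_\infty(\xi)).
\]
As $V$ is unitary, it suffices to establish strong convergence $\pi_L(W_{\mathrm{ct}}(\xi_{N'}))\to\pi_L(W_{\mathrm{ct}}(\xi_\infty))$, uniformly on compact $t$-sets.

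This is where Proposition~\ref{prop:limdyn} enters: it provides precisely $\|\xi_{N'}-\xi_\infty\|_{\fh_L}\to 0$, uniformly on compact $t$-sets. I would combine this with the strong continuity of $\xi\mapsto\pi_L(W_{\mathrm{ct}}(\xi))$. Concretely, the canonical commutation relations and $\omega_L(W_{\mathrm{ct}}(\zeta))=e^{-\frac14\|\zeta\|_L^2}$ give the vacuum estimate
\[
\big\|(\pi_L(W_{\mathrm{ct}}(\xi_{N'}))-\pi_L(W_{\mathrm{ct}}(\xi_\infty)))\Omega_L\big\|^2 = 2 - 2\,\Re\Big(e^{\frac{i}{2}\sigma_L(\xi_{N'},\xi_\infty)}\,e^{-\frac14\|\xi_{N'}-\xi_\infty\|_L^2}\Big),
\]
and since $|\sigma_L(\xi_{N'},\xi_\infty)|\le\|\xi_{N'}-\xi_\infty\|_L\,\|\xi_\infty\|_L$, the whole right-hand side is controlled by $\|\xi_{N'}-\xi_\infty\|_L$ and hence tends to $0$ uniformly on compact $t$-sets. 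The same computation applied to the total set $\{\pi_L(W_{\mathrm{ct}}(\zeta))\Omega_L\}_{\zeta\in\fh_L}$, together with the uniform bound $\|\pi_L(W_{\mathrm{ct}}(\cdot))\|=1$, upgrades this to strong convergence on all of $\cH_L$; undoing the conjugation by $V$ yields the corollary. The genuinely nontrivial analytic input is Proposition~\ref{prop:limdyn}, which is already in hand, so the remaining work is bookkeeping. The one point that deserves attention is threading uniformity on compact $t$-intervals through every reduction: it survives the passage to Weyl operators because the approximation errors there are $t$-independent, and it survives the strong-continuity step because the vacuum estimate depends on $t$ only through $\|\xi_{N'}-\xi_\infty\|_L$, which converges uniformly.
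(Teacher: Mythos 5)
Your proposal is correct and follows essentially the same route as the paper's proof: reduce to Weyl operators by density and uniform boundedness, conjugate by the unitary $V$ of Theorem~\ref{thm:localalgebras} to land on continuum Weyl operators, and convert the uniform one-particle convergence of Proposition~\ref{prop:limdyn} into strong operator convergence via a quantitative Weyl-operator estimate. The only cosmetic difference is that the paper states the estimate directly on the total set of vectors $\pi_{L}(W_{\mathrm{ct}}(R^{M}_{\infty}(\zeta)))\Omega_{L}$ as a Lipschitz bound (citing \cite[Proposition 5.2.29]{BratteliOperatorAlgebrasAnd2}), whereas you compute the vacuum case explicitly and note the extension; both hinge on controlling the phase $\sigma_L(\xi_{N'},\xi_\infty)$ and the Gaussian factor by $\|\xi_{N'}-\xi_\infty\|_{\fh_L}$.
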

\begin{proof}
This will follow from Proposition \ref{prop:limdyn} if we prove that Weyl operators in the representation $\pi^{(\infty)}_{L,\infty}$ viewed as maps from the Hilbert space $\fh_{L}$ to $B(\cH_{L})$ are Lipschitz maps, where $B(\cH_{L})$ is given the strong operator topology.
 To this end, we first observe that by Theorem \ref{thm:localalgebras},
\begin{align*}
 & \left\|\left[\pi^{(\infty)}_{L,\infty}(\alpha^{N'}_{\infty}(\eta^{(N')}_{L|t}(\alpha^{N}_{N'}(W))))-\eta_{L|t}(\pi^{(\infty)}_{L,\infty}(\alpha^{N}_{\infty}(W)))\right]\pi^{(\infty)}_{L,\infty}(W_{M}(\zeta))\Omega^{(\infty)}_{L,\infty}\right\| \\
= & \left\|\left[\pi_{L}(\beta_{L}(\alpha^{N'}_{\infty}(\eta^{(N')}_{L|t}(\alpha^{N}_{N'}(W)))))-\pi_{L}(\eta_{L|t}(\beta_{L}(\alpha^{N}_{\infty}(W))))\right]\pi_{L}(W_{\mathrm{ct}}(R^{M}_{\infty}(\zeta)))\Omega_{L}\right\|
\end{align*}
for $\zeta\in\fh_{M,L}$ and some $M\in\NN_{0}$. Here, $\Omega^{(\infty)}_{L,\infty}$ and $\Omega_{L}$ are the GNS vectors of $\omega^{(\infty)}_{L,\infty}$ and $\omega_{L}$. Next, we specify $W$ to be of the form $W = W_{N}(\xi)$ for $\xi\in\fh_{N,L}$ to write:
\begin{align*}
 & \left\|\left[\pi_{L}(\beta_{L}(\alpha^{N'}_{\infty}(\eta^{(N')}_{L|t}(\alpha^{N}_{N'}(W)))))-\pi_{L}(\eta_{L|t}(\beta_{L}(\alpha^{N}_{\infty}(W))))\right]\pi_{L}(W_{\mathrm{ct}}(R^{M}_{\infty}(\zeta)))\Omega_{L}\right\| \\
= & \left\|\left[\pi_{L}(W_{\mathrm{ct}}(R^{N'}_{\infty}(\tau^{(N')}_{L|t}(R^{N}_{N'}(\xi)))))-\pi_{L}(W_{\mathrm{ct}}(\tau_{L|t}(R^{N}_{\infty}(\xi))))\right]\pi_{L}(W_{\mathrm{ct}}(R^{M}_{\infty}(\zeta)))\Omega_{L}\right\| \\
= & \left\|\left[\pi_{L}(W_{\mathrm{ct}}(\xi^{(t)}_{N,N'}))-\pi_{L}(W_{\mathrm{ct}}(\xi^{(t)}_{N,\infty}))\right]\pi_{L}(W_{\mathrm{ct}}(R^{M}_{\infty}(\zeta)))\Omega_{L}\right\|,
\end{align*}
with the short hands $\xi^{(t)}_{N,N'}$ and $\xi^{(t)}_{N,\infty}$ for $R^{N'}_{\infty}(\tau^{(N')}_{L|t}(R^{N}_{N'}(\xi)))$ and $\tau_{L|t}(R^{N}_{\infty}(\xi))$ in the last line. By using the Weyl relations \eqref{eq:CCR} and the explicit form \eqref{eq:vaccyl} of the expectation values of Weyl operators in the state $\omega_{L}$, we find (cp. \cite[Proposition 5.2.29]{BratteliOperatorAlgebrasAnd2}):
\begin{align*}
 & \left\|\left[\pi_{L}(W_{\mathrm{ct}}(\xi^{(t)}_{N,N'}))-\pi_{L}(W_{\mathrm{ct}}(\xi^{(t)}_{N,\infty}))\right]\pi_{L}(W_{\mathrm{ct}}(R^{M}_{\infty}(\zeta)))\Omega_{L}\right\| \\
\leq & (\|R^{M}_{\infty}(\zeta)\|_{L}+\tfrac{1}{2}\|\xi^{(t)}_{N,\infty}\|_{L})\|\xi^{(t)}_{N,N'}-\xi^{(t)}_{N,\infty}\|_{L} + 2\left(1-e^{-\frac{1}{4}\|\xi^{(t)}_{N,N'}-\xi^{(t)}_{N,\infty}\|_{L}^{2}}\right) \\
\leq & (\|R^{M}_{\infty}(\zeta)\|_{L}+\tfrac{1}{2}\|\xi^{(t)}_{N,\infty}\|_{L}+\tfrac{1}{2})\|\xi^{(t)}_{N,N'}-\xi^{(t)}_{N,\infty}\|_{L} \\
= & (\|R^{M}_{\infty}(\zeta)\|_{L}+\tfrac{1}{2}\|R^{N}_{\infty}(\xi)\|_{L}+\tfrac{1}{2})\|\xi^{(t)}_{N,N'}-\xi^{(t)}_{N,\infty}\|_{L},
\end{align*}
where we use the unitary implementation of $\tau_{L}$ on $\fh_{L}$. Thus, we obtain the anticipated Lipschitz estimate. Now, Proposition \ref{prop:limdyn} states that for arbitrary $0<T<\infty$:
\begin{align*}
\lim_{N'\rightarrow\infty}\sup_{|t|\leq T}\left\|\xi^{(t)}_{N,N'}-\xi^{(t)}_{N,\infty}\right\|_{L} & = 0,
\end{align*}
and, thus, we have,
\begin{align*}
\lim_{N'\rightarrow\infty}\sup_{|t|\leq T}\left\|(\pi_{L}(W_{\mathrm{ct}}(\xi^{(t)}_{N,N'}))-\pi_{L}(W_{\mathrm{ct}}(\xi^{(t)}_{N,\infty})))\Psi\right\| & = 0,
\end{align*}
for the total set of vectors $\{\Psi\in\cH_{L}\!\ |\!\ \Psi = \pi_{L}(W_{\mathrm{ct}}(R^{M}_{\infty}(\zeta)))\Omega_{L}, \zeta\in\fh_{M,L}, M\in\NN_{0}\}$. Finally, the result follows from combining this with the fact that the Weyl operators $W=W_{N}(\xi)$, $\xi\in\fh_{N,L}$, form a total set of uniformly bounded operators in $\cW_{N,L}$. \hfill$\square$
\end{proof}

\begin{remark}
\label{rem:freedynfock}
We can identify $\fh_{L}$ with a subspace of $\ltwo(\tfrac{\pi}{L}\ZZ^{d}, (2L)^{-d}\mu_{\frac{\pi}{L}\ZZ^{d}})^{\oplus 2}$ via the Fourier transform and the dispersion relation $\gamma_{m}$:
\begin{align*}
(\hat{f},\hat{g}) & \mapsto (\gamma_{m}^{-\frac{1}{2}}\hat{f},\gamma_{m}^{\frac{1}{2}}\hat{g}). 
\end{align*}
This way $\tau_{L}$ is given by multiplication with the matrix-valued function $\exp(i\gamma_{m}\sigma_{2})$, where $\sigma_{2}$ is the second Pauli matrix. Clearly, we can implement a dynamics given by $\gamma_{\mu_{N}}$ on $\ltwo(\tfrac{\pi}{L}\ZZ^{d}, (2L)^{-d}\mu_{\frac{\pi}{L}\ZZ^{d}})^{\oplus 2}$ using multiplication with $\exp(i\gamma_{\mu_{N}}\sigma_{2})$. But, this is not the action of $\tau^{(N)}_{L}$ on the subspace $\fh_{N,L}\subset\fh_{L}$ and, thus, the convergence of the bounded matrix-functions
\begin{align*}
\exp(i\gamma_{\mu_{N}}\sigma_{2}) & \rightarrow \exp(i\gamma_{m}\sigma_{2}), & N & \rightarrow \infty,
\end{align*}
is not the convergence of dynamics sought after.
\end{remark}
By construction, $\eta_{L}:\RR\curvearrowright\pi^{(\infty)}_{L,\infty}(\cW_{\infty, L})''$ equals the usual time evolution of the continuum free scalar field
and it is implemented by a unitary group $U_{L|t}=e^{itH_{L}}$ with the (renormalized) free continuum Hamiltonian $H_{L}$ as its generator. Explicitly, $H_{L}$ is the second quantization of the generator $h_{L}$ of $\tau_{L}$ on its natural domain \cite{BuchholzCausalIndependenceAnd}. Since $\gamma_{m}$ is the free relativistic dispersion relation of mass $m$, we conclude that $\eta_{L}$ has propagation speed $c=1$, and we obtain a causal net of (spacetime) local von Neumann algebras for suitable $\cO\subset\RR\times\TT^{d}_{L}$ \cite{GlimmQuantumFieldTheory}:
\begin{align}
\label{eq:causalnet}
\cA_{L}(\cO) & = \left(\bigcup_{t\in\RR}\eta_{L|t}(\pi^{(\infty)}_{L,\infty}(\cW_{\infty, L}(\cO(t)))'')\right)'',
\end{align}
where $\cO(t) = \{x\!\ |\!\ (t,x)\in\cO\}\subset\TT^{d}_{L}$.

Let us summarize our results on the scaling limit of the free ground states of the lattice scalar field by the following theorem:

\begin{theorem}
\label{thm:main}
Let $K\ge 6$.
Let $m, \{\mu_{N}\}, \{\cW_{N,L}\}, \{\alpha_{N'}^N\}, \{\omega_{L,M}^{(N)}\}$ as before.
The scaling limit $\omega_{L,\infty}^{(\infty)}$ of $\{\omega_ {L, M}^{(N)}\}$
gives rise to a continuum time-zero net of local algebras which is unitarily
equivalent to the time-zero net of the free field with mass $m$ on the torus $\TT^{d}_{L}$.
\end{theorem}

\begin{remark}
\label{rem:limdyntriangle}
In view of Figures \ref{fig:trianglerg} and \ref{fig:statetrianglerg}, we note that while we used the scaling limit procedure to construct the horizontal sequences $\{\omega^{(N)}_{L,M}\}_{M\in\NN_{0}}$ to arrive at the limit state $\omega^{(\infty)}_{L,\infty}$, we only considered the diagonal sequence of Hamiltonian $H^{(N)}_{L,0}$, or more precisely their associated unitary groups $\eta^{(N)}_{L}$, to obtain the Hamiltonian $H_{L}$ associated with the continuum time evolution $\eta_{L}$.
\end{remark}
\begin{remark}
\label{rem:limdynlat}
There is yet another perspective on the convergence of dynamics because of the specific structure of the scaling maps $\alpha^{N}_{N+1}$: At each finite scale $N$, the dynamics $\eta^{(N)}_{L}$ is unitarily implemented in the GNS representation of $\omega^{(N)}_{L,0}$, and because of the von Neumann uniqueness theorem there is an associated strongly-continuous, unitary one-parameter group $U^{(N)}:\RR\rightarrow\cU(\cH^{(\infty)}_{N,L})$ on the GNS Hilbert space $\cH^{(\infty)}_{N,L}$ of $\omega^{(N)}_{L,\infty}$. In Section \ref{sec:MERA} below, we show that the scaling maps $\alpha^{N}_{N+1}$ extend as normal, unital $^*$-morphisms from $B(\cH_{N,L})$ to $B(\cH_{N+1,L})$, which in turn yields their extension from $B(\cH^{(\infty)}_{N,L})$ to $B(\cH^{(\infty)}_{N+1,L})$. This implies that each $U^{(N)}$ is represented in $\cU(\cH^{(\infty)}_{L,\infty})$ by\footnote{The asymptotic unital $^*$-morphisms $\alpha^{N}_{\infty}:B(\cH_{N,L})\rightarrow B_{\infty}$ is not necessarily normal, cf.~Section \ref{sec:MERA} for the notation.}:
\begin{align*}
U^{(N)}_{L|t} & = \pi^{(\infty)}_{L,\infty}(\alpha^{N}_{\infty}(U^{(N)}_{t})), & t & \in\RR.
\end{align*}
Therefore, as an alternative to the convergence considered in Corollary \ref{cor:limdyn}, one could analyze the convergence:
\begin{align*}
U^{(N)}_{L|t} & \longrightarrow U_{L|t}, & N & \longrightarrow \infty.
\end{align*}
\end{remark}

\paragraph{Locality by Lieb-Robinson bounds.}
Interestingly, there is another way to show a weaker form of locality via Lieb-Robinson bounds \cite{OsborneContinuumLimitsOf, NachtergaeleQuasiLocalityBounds}. To this end, we observe that the finite-scale dynamics $\eta^{(N)}_{L}$ extends to the $C^{*}$-inductive limit $\cW_{\infty, L}$:
For $N\in\NN_{0}$ we consider the periodic extension of $\gamma_{\mu_{N}}$ from $\Gamma_{N}$ to $\Gamma_{\infty}=\tfrac{\pi}{L}\ZZ^{d}$, which we also denote by $\gamma_{\mu_{N}}$. Then, we extend $\tau^{(N)}_{L}$ to $\fh_{M,L}$ for all $M>N$ by
\begin{align*}
\tau^{(N)}_{L|t}(\hat{q},\hat{p}) & = (\cos(\gamma_{\mu_{N}}t)\hat{q}\!-\!\gamma_{\mu_{N}}\sin(\gamma_{\mu_{N}}t)\hat{p}, \quad \cos(\gamma_{\mu_{N}}t)\hat{p}\!+\!\gamma_{\mu_{N}}^{-1}\sin(\gamma_{\mu_{N}}t)\hat{q}), & (\hat{q},\hat{p})\in\fh_{M,L},
\end{align*}
which is well-defined because the restriction of $\gamma_{\mu_{N}}$ to $\Gamma_{M}\subset\Gamma_{\infty}$ is well-defined by periodicity. Comparing the extension for $M'>M>N$, we find the consistency relation:
\begin{align*}
\tau^{(N)}_{L|t}(R^{M}_{M'}(\hat{q},\hat{p})) & = R^{M}_{M'}(\tau^{(N)}_{L|t}(\hat{q},\hat{p})), \qquad (\hat q,\hat p) \in \fh_{M,L}
\end{align*}
which is due to the multiplicative character of $R^{M}_{M'}$ in momentum space, see \eqref{eq:oneparticlescalingiterated}. Therefore, we can extend $\tau^{(N)}_{L}$ to the inductive-limit symplectic space $\fh_{\infty,L}$ by:
\begin{align}
\label{eq:finiteinddyn1p} \nonumber
\tau^{(N)}_{L|t}\!(R^{M}_{\infty}(\hat{q},\hat{p})\!) & \!=\! \vep_{M}^{\frac{d}{2}}\hat{\phi}(\vep_{M}\!\ \cdot \!\ )(\cos(\gamma_{\mu_{N}}t)\hat{q}\!-\!\gamma_{\mu_{N}}\sin(\gamma_{\mu_{N}}t)\hat{p},\!\ \cos(\gamma_{\mu_{N}}t)\hat{p}\!+\!\gamma_{\mu_{N}}^{-1}\sin(\gamma_{\mu_{N}}t)\hat{q}) \\ \nonumber
& = R^{M}_{\infty}(\cos(\gamma_{\mu_{N}}t)\hat{q}\!-\!\gamma_{\mu_{N}}\sin(\gamma_{\mu_{N}}t)\hat{p},\cos(\gamma_{\mu_{N}}t)\hat{p}\!+\!\gamma_{\mu_{N}}^{-1}\sin(\gamma_{\mu_{N}}t)\hat{q}) \\
& = R^{M}_{\infty}(\tau^{(N)}_{L|t}(\hat{q},\hat{p})),
\end{align}
for any $M>N$.

This extension of $\tau^{(N)}_{L}$ to $\fh_{\infty,L}$ provides another justification for considering the convergence expressed in Proposition \ref{prop:limdyn} because it implies that the convergence also holds on $\fh_{\infty,L}\subset\fh_{L}$ consistently over all scales. An explicit computation shows that the extension of $\tau^{(N)}_{L}$ is symplectic for each $M>N$, i.e.~$\sigma_{M,L}(\tau^{(N)}_{L|t}(q,p),\tau^{(N)}_{L|t}(q',p'))=\sigma_{M,L}((q,p),(q',p'))$, which permits the extension of $\eta^{(N)}_{L}$ to $\cW_{\infty,L}$ as a $^*$-automorphism satisfying
\begin{align}
\label{eq:finiteinddyn}
\eta^{(N)}_{L|t}(\alpha^{M}_{\infty}(W_{M}(q,p))) & = \alpha^{M}_{\infty}(\eta^{(N)}_{L|t}(W_{M}(q,p))) =  \alpha^{M}_{\infty}(W_{M}(\tau^{(N)}_{L|t}(q,p))),
\end{align}
by general principles \cite{EvansQuantumSymmetriesOn}. Now, we can invoke the Lieb-Robinson bounds for harmonic lattice systems \cite{NachtergaeleLRBoundsHarmonic} adapted to the Hamiltonian $H^{(N)}_{L,0}$ and the Weyl algebra $\cW_{N,L}$:
\begin{lemma}
\label{lem:liebrob}
For arbitrary subsets $X,Y\subset\Lambda_{N}$, all volume parameters $L=\vep_{N}r_{N}>0$, any one-particle vector $\xi,\xi'\in\fh_{N,L}$ with supports $\supp\xi\subset X$, $\supp\xi'\subset Y$, and any $\delta>0$, we have:
\begin{align*}
\left\|\left[\eta^{(N)}_{L|t}(W_{N}(\xi)),W_{N}(\xi')\right]\right\| & \leq C_{N}\|\xi|\|_{\sup}\|\xi'\|_{\sup}\sum_{x\in X,y\in Y}e^{-\frac{\delta}{\vep_{N}}\left(d_{N}(x,y)-\frac{1}{2}c_{\mu_{N}}\max\left\{\frac{2}{\delta},e^{\frac{\delta}{2}+1}\right\}|t|\right)},
\end{align*}
where
\begin{align*}
d_{N}(x,y) & = \sum^{d}_{j=1}\min_{n_j\in\ZZ}|x_{j}-y_{j}+2Ln_{j}|, & x,y&\in\Lambda_{N},
\end{align*}
is the $1$-distance on the torus $\TT^{d}_{L}$ restricted to $\Lambda_{N}$, $\|\!\ \cdot\!\ \|_{\sup} = \sup_{\Lambda_{N}}|\!\ \cdot\!\ |$, and
\begin{align*}
c_{\mu_{N}} & = (\mu_{N}^{2}+2d)^{\frac{1}{2}}, & C_{N} & = 2+c_{\mu_{N}}e^{\frac{\delta}{2}}+c_{\mu_{N}}^{-1}.
\end{align*}
\end{lemma}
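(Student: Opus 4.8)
The plan is to adapt the harmonic-lattice strategy of \cite{NachtergaeleLRBoundsHarmonic}, using that on a Weyl algebra the norm of a commutator of Weyl operators is controlled directly by the one-particle symplectic form. First I would combine the Weyl relations \eqref{eq:CCR} with the identity $\eta^{(N)}_{L|t}(W_N(\xi)) = W_N(\tau^{(N)}_{L|t}(\xi))$ to write
\[
W_N(\tau^{(N)}_{L|t}(\xi))\,W_N(\xi') = e^{-i\sigma_{N,L}(\tau^{(N)}_{L|t}(\xi),\xi')}\,W_N(\xi')\,W_N(\tau^{(N)}_{L|t}(\xi)),
\]
so that, since Weyl operators are unitary and $|e^{i\theta}-1|\le|\theta|$,
\[
\big\|\big[\eta^{(N)}_{L|t}(W_N(\xi)),W_N(\xi')\big]\big\| = \big|e^{-i\sigma_{N,L}(\tau^{(N)}_{L|t}(\xi),\xi')}-1\big| \le \big|\sigma_{N,L}(\tau^{(N)}_{L|t}(\xi),\xi')\big|.
\]
This reduces the entire statement to an off-diagonal decay estimate for the one-particle evolution.

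Next I would expand this symplectic pairing in position space. Writing $\xi=(q,p)$, $\xi'=(q',p')$ and inserting the explicit form of $\tau^{(N)}_{L|t}$ into \eqref{eq:symplecticpq}, the quantity $\sigma_{N,L}(\tau^{(N)}_{L|t}(\xi),\xi')$ becomes a sum of terms $\vep_N^d\sum_{x,y}\langle\delta^{(N)}_y,F(\gamma_{\mu_N})\delta^{(N)}_x\rangle\,(\cdots)(x)(\cdots)(y)$, where $F$ ranges over the three functions $\cos(t\gamma_{\mu_N})$, $\gamma_{\mu_N}\sin(t\gamma_{\mu_N})$ and $\gamma_{\mu_N}^{-1}\sin(t\gamma_{\mu_N})$. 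Estimating the lattice amplitudes by $\|\xi\|_{\infty,L}$ and $\|\xi'\|_{\infty,L}$ and restricting the sums to $x\in X$, $y\in Y$ reduces everything to bounding the matrix elements $\langle\delta^{(N)}_y,F(\gamma_{\mu_N})\delta^{(N)}_x\rangle$; these three functions are precisely what produce the three summands $2$, $c_{\mu_N}e^{\delta/2}$ and $c_{\mu_N}^{-1}$ of $C_N$ (the cosine term occurring twice).

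The core step is the decay estimate for these matrix elements. The essential observation is that the coupling matrix $h:=\gamma_{\mu_N}^2$ is, in position space, a nearest-neighbour operator: $h_{xy}=0$ unless $d_N(x,y)\le\vep_N$, with diagonal $\vep_N^{-2}\mu_N^2$ and hopping $\vep_N^{-2}$, so its weighted row sum satisfies $\sup_x\sum_y e^{\delta d_N(x,y)/\vep_N}|h_{xy}| = \vep_N^{-2}(\mu_N^2+2d\,e^{\delta}) \le \vep_N^{-2}e^{\delta}c_{\mu_N}^2$. Each $F$ is given by an absolutely convergent power series in $h$, e.g. $\cos(t\gamma_{\mu_N})=\sum_n\frac{(-1)^n t^{2n}}{(2n)!}h^n$ and $\gamma_{\mu_N}^{-1}\sin(t\gamma_{\mu_N})=\sum_n\frac{(-1)^n t^{2n+1}}{(2n+1)!}h^n$. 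Submultiplicativity of the weighted $\ell^1$-norm, a consequence of the triangle inequality for $d_N$, then gives for each $n$
\[
e^{\delta d_N(x,y)/\vep_N}\,\big|\langle\delta^{(N)}_y,h^n\delta^{(N)}_x\rangle\big| \le \big(\vep_N^{-2}e^{\delta}c_{\mu_N}^2\big)^n,
\]
and resumming the cosine/sine series produces a $\cosh$/$\sinh$ of argument $\vep_N^{-1}e^{\delta/2}c_{\mu_N}|t|$. Writing $v_\delta := \max\{2/\delta,\,e^{\delta/2+1}\}$, this yields an off-diagonal bound $\big|\langle\delta^{(N)}_y,F(\gamma_{\mu_N})\delta^{(N)}_x\rangle\big|\lesssim e^{-\frac{\delta}{\vep_N}(d_N(x,y)-\frac12 c_{\mu_N}v_\delta|t|)}$, i.e. exponential decay outside a light cone of velocity controlled by $c_{\mu_N}$. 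Assembling the three contributions with their prefactors and summing over $x\in X$, $y\in Y$ produces the claimed inequality.

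I expect the main obstacle to be pinning down the precise velocity constant $\tfrac12 c_{\mu_N}\max\{2/\delta,\,e^{\delta/2+1}\}$. Two competing estimates for the tails of the power series are available: the weighted-norm bound above, whose $\cosh(a|t|)\le e^{a|t|}$ step is efficient for small $\delta$ and yields the $2/\delta$ branch, and a cruder operator-norm estimate on $\langle\delta^{(N)}_y,h^n\delta^{(N)}_x\rangle$ combined with the support constraint $n\ge d_N(x,y)/\vep_N$ and a Stirling estimate of $(2n)!$, which produces the factor $e$ in the $e^{\delta/2+1}$ branch; the two regimes must be matched. In addition, the odd series $\gamma_{\mu_N}^{\pm1}\sin(t\gamma_{\mu_N})$, with their shifted indices and the reciprocal weight $\gamma_{\mu_N}^{-1}$, have to be handled so that the prefactors emerge exactly as $c_{\mu_N}e^{\delta/2}$ and $c_{\mu_N}^{-1}$. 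These are bookkeeping optimizations; the conceptual content is already captured by the finite-range, nearest-neighbour structure of $h$.
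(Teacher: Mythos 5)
Your argument is correct and is essentially the paper's: the paper proves this lemma by simply citing \cite[Theorem 3.1]{NachtergaeleLRBoundsHarmonic} and ``reinstating the lattice spacing,'' and what you have written is a faithful reconstruction of the proof of that cited theorem --- reduction of the commutator norm to $|\sigma_{N,L}(\tau^{(N)}_{L|t}(\xi),\xi')|$ via the Weyl relations, followed by weighted-$\ell^1$ decay estimates for the kernels of $\cos(t\gamma_{\mu_N})$ and $\gamma_{\mu_N}^{\pm 1}\sin(t\gamma_{\mu_N})$ exploiting the nearest-neighbour structure of $\gamma_{\mu_N}^2$, with the three kernels producing exactly the three summands $2$, $c_{\mu_N}e^{\delta/2}$, $c_{\mu_N}^{-1}$ of $C_N$. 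The one loose end you correctly flag --- combining the weighted-norm tail bound with the support-constraint/Stirling bound to land on the precise velocity $\tfrac12 c_{\mu_N}\max\{2/\delta,\,e^{\delta/2+1}\}$ --- is resolved in the cited reference exactly along the lines you indicate, so nothing essential is missing.
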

\begin{proof}
The estimate is a direct consequence of the estimate given in \cite[Theorem 3.1]{NachtergaeleLRBoundsHarmonic} after reinstating the lattice spacing $\vep_{N}$. \hfill$\square$
\end{proof}
The estimate given in the Lemma leads to a speed of propagation $c'\ge 1$:
\begin{proposition}
\label{prop:LRalgind}
Let $K\ge 2$.
Provided the renormalization condition \eqref{eq:Latmassscaling} holds, there exists a constant $1 \le c' \le d^\frac12\max\{\frac2 \delta, e^{\frac\delta 2 + 1}\}$, the \textit{scaling-limit Lieb-Robinson velocity}, such that for $ S'\cap S_{c'T}=\emptyset$ with $ S_{c'T} = \{x\!\ |\!\ \textup{dist}(x, S)<c'T\}$,
\begin{align*}
\lim_{N\rightarrow\infty}\left\|\left[\eta^{(N)}_{L|t}(W),W'\right]\right\| & = 0,
\end{align*}
for all $W\in\cW_{\infty, L}(S), W'\in\cW_{\infty, L}(S')$ exponentially fast and uniformly for $|t|<T$.
\end{proposition}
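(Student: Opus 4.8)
The plan is to collapse the inductive-limit commutator onto a genuine finite-scale commutator to which Lemma \ref{lem:liebrob} applies, and then to show that the resulting Lieb--Robinson estimate vanishes as $N\to\infty$. Since $\eta^{(N)}_{L|t}$ is a $^*$-automorphism, hence isometric, and the Weyl operators are total in norm, I would first establish the bound for Weyl generators and recover the general case by density at the very end. So let $W=\alpha^{M}_{\infty}(W_{M}(\xi))$ and $W'=\alpha^{M}_{\infty}(W_{M}(\xi'))$ be images of Weyl operators at a common finite scale $M$, with $\supp\xi\subset\Lambda_{M}(S)$ and $\supp\xi'\subset\Lambda_{M}(S')$. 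For $N\geq M$ I would push these up to scale $N$ via $\alpha^{M}_{\infty}=\alpha^{N}_{\infty}\circ\alpha^{M}_{N}$, writing $W=\alpha^{N}_{\infty}(W_{N}(\xi_{N}))$ with $\xi_{N}=R^{M}_{N}\xi$ and likewise $\xi'_{N}=R^{M}_{N}\xi'$. The crucial point is that, by the support-increase bound $r_{\max}\vep_{M}(1-2^{-(N-M)})$ of Section \ref{sec:localg} together with the boundary convention defining $\Lambda_{M}(S)$, the lattice supports of $\xi_{N}$ and $\xi'_{N}$ remain inside $S$ and $S'$ respectively, \emph{uniformly} in $N$.

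Next I would use the extension \eqref{eq:finiteinddyn} of $\eta^{(N)}_{L}$ to $\cW_{\infty,L}$, which at scale $N$ gives $\eta^{(N)}_{L|t}(\alpha^{N}_{\infty}(W_{N}(\xi_{N})))=\alpha^{N}_{\infty}(\eta^{(N)}_{L|t}(W_{N}(\xi_{N})))$, together with the fact that $\alpha^{N}_{\infty}$ is injective, hence isometric, to obtain
\[
\left\|[\eta^{(N)}_{L|t}(W),W']\right\|=\left\|[\eta^{(N)}_{L|t}(W_{N}(\xi_{N})),W_{N}(\xi'_{N})]\right\|.
\]
This is exactly the quantity estimated by Lemma \ref{lem:liebrob}, applied with $X=\supp\xi_{N}\subset S\cap\Lambda_{N}$ and $Y=\supp\xi'_{N}\subset S'\cap\Lambda_{N}$.

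It then remains to analyse the $N\to\infty$ behaviour of the resulting bound, tracking each factor. The prefactor $C_{N}$ stays bounded, and $c_{\mu_{N}}=(\mu_{N}^{2}+2d)^{1/2}\to 2\sqrt{d}$ by the renormalization condition \eqref{eq:Latmassscaling} (which forces $\mu_{N}^{2}\to 2d$), so the velocity $\tfrac12 c_{\mu_{N}}\max\{\tfrac2\delta,e^{\delta/2+1}\}$ converges to $d^{1/2}\max\{\tfrac2\delta,e^{\delta/2+1}\}$; this limiting value is what I would take for $c'$ (the lower bound $c'\geq 1$ merely records that the limiting dynamics propagates at the relativistic speed $1$). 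The norms $\|\xi_{N}\|_{\infty,L},\|\xi'_{N}\|_{\infty,L}$ grow at most like $(2^{d/2}\sum_{n}|h_{n}|)^{N}$ and the number of lattice pairs $|X|\,|Y|$ at most like $2^{2dN}$, i.e. single-exponentially in $N$. On the other hand, using $d_{N}(x,y)\geq|x-y|_{2}\geq\textup{dist}(S,S')$ and the hypothesis $\textup{dist}(S,S')>c'T$, together with $\tfrac12 c_{\mu_{N}}\max\{\tfrac2\delta,e^{\delta/2+1}\}\to c'$, the exponent $d_{N}(x,y)-\tfrac12 c_{\mu_{N}}\max\{\tfrac2\delta,e^{\delta/2+1}\}|t|$ is bounded below by a fixed $\rho/2>0$ for all $|t|\leq T$ once $N$ is large. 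Hence each summand is dominated by $e^{-(\delta/\vep_{N})\rho/2}=e^{-\delta\rho\,2^{N}/(2\vep)}$, a doubly-exponential decay in $N$ that overwhelms the single-exponential prefactor growth; the whole bound tends to $0$ exponentially fast and uniformly for $|t|\leq T$, and an $\epsilon/3$-argument using isometry of $\eta^{(N)}_{L|t}$ promotes this to arbitrary $W\in\cW_{\infty,L}(S)$, $W'\in\cW_{\infty,L}(S')$.

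The main obstacle I anticipate is not the convergence itself but the bookkeeping of the two competing exponential scales: one must ensure that the geometric gap $\rho=\textup{dist}(S,S')-c'T>0$ survives the passage to the limit despite the finite-scale velocity overshooting its limit, $\tfrac12 c_{\mu_{N}}\max\{\tfrac2\delta,e^{\delta/2+1}\}>c'$ for \emph{every} finite $N$ (because $\mu_{N}>\sqrt{2d}$ forces $c_{\mu_{N}}>2\sqrt d$). This is precisely why $c'$ must be taken equal to the limiting velocity $d^{1/2}\max\{\tfrac2\delta,e^{\delta/2+1}\}$ and why the estimate only holds for $N$ sufficiently large; a sharper, asymptotically optimal Lieb--Robinson velocity would be required to push $c'$ down to the physical value $c=1$.
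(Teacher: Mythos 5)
Your proposal is correct and follows essentially the same route as the paper: reduce the inductive-limit commutator to a finite-scale one via the extension \eqref{eq:finiteinddyn} and the isometry of the injective morphisms $\alpha^{N}_{\infty}$, apply Lemma \ref{lem:liebrob}, use \eqref{eq:Latmassscaling} to send $c_{\mu_{N}}\to 2d^{1/2}$ and fix $c'$ at the limiting velocity, and let the $e^{-\delta\,\cdot/\vep_{N}}$ decay (doubly exponential in $N$) kill everything else, finishing by totality of the Weyl operators. The one place you diverge is the control of the prefactors: the paper shows that $\|R^{N}_{M}(q_{\xi})\|_{\infty,L}$ actually \emph{converges} (to $\|\hat{\phi}^{(\vep_{N})}\hat{q_{\xi}}\|_{L^{1},L}$, via the momentum-space factorization \eqref{eq:oneparticlescalingiterated} and the $\alpha=0$ case of the estimates \eqref{eq:sobolevseq}--\eqref{eq:regularityestimate} from Lemma \ref{lem:freegroundconv}), and converts $\vep_{M}^{2d}\sum_{x\in X,y\in Y}$ into $\vol(S)\vol(S')$, leaving only a polynomial $\vep_{M}^{-k}$ against the exponential; you instead use the crude single-exponential bound $(2^{d/2}\sum_{n}|h_{n}|)^{N}$ on the sup-norms and $2^{2dN}$ on the pair count. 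Both work, since double-exponential decay dominates either way; your version is more elementary and does not invoke the wavelet regularity analysis at this step, while the paper's sharper bound makes the eventual estimate cleaner (volumes times a manifestly vanishing factor). Your explicit remark that the finite-scale velocity overshoots its limit, so the geometric gap $\textup{dist}(S,S')-c'T$ only controls the exponent for $N$ large, is a point the paper leaves implicit and is handled correctly.
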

\begin{proof}
First, we consider two elementary Weyl operators $W, W'\in\cW_{\infty, L}$. By construction there exist $N, N'\in\NN_{0}$ and $\xi\in\fh_{N,L}$, $\xi'\in\fh_{N',L}$ such that $W = \alpha^{N}_{\infty}(W_{N}(\xi))$ and $W'=\alpha^{N'}_{\infty}(W_{N'}(\xi'))$ (without loss of generality we may assume $N\leq N'$).
For any $M\geq N'\geq N$ we have:
\begin{align*}
\left[\eta^{(M)}_{L|t}(W),W'\right] & = \alpha^{M}_{\infty}\left(\left[\eta^{(M)}_{L|t}(W_{M}(R^{N}_{M}(\xi))),W_{M}(R^{N'}_{M}(\xi'))\right]\right).
\end{align*}
Since $\alpha^{M}_{\infty}$ is an injective $C^{*}$-homomorphism, we can apply the estimate of Lemma \ref{lem:liebrob}:
\begin{align*}
\left\|\left[\eta^{(M)}_{L|t}(W),W'\right]\right\| & \leq C_{M}\|R^{N}_{M}(\xi)\|_{\sup}\|R^{N'}_{M}(\xi')\|_{\sup}\!\!\!\!\sum_{x\in X,y\in Y}\!\!\!\!\!e^{-\frac{\delta}{\vep_{M}}\left(d_{M}(x,y)-\frac{1}{2}c_{\mu_{M}}\max\left\{\frac{2}{\delta},e^{\frac{\delta}{2}+1}\right\}|t|\right)},
\end{align*}
and by the renormalization conditions, we have
\[
\lim_{M\rightarrow\infty}c_{\mu_{M}}=2d^{\frac{1}{2}},\quad \lim_{M\rightarrow\infty}C_{M}=2+2d^{\frac{1}{2}}e^{\frac{\delta}{2}}+2^{-1}d^{-\frac{1}{2}}=C_{d}.
\]

Now, if $\supp R^{N}_{\infty}(q_{\xi},p_{\xi})\subset S$ and $\supp R^{N'}_{\infty}(q_{\xi'},p_{\xi'})\subset S'$, we can apply the estimate with $\textup{dist}_{1}( S, S')\leq\lim_{M\rightarrow\infty}d_{M}(x,y)<\infty$, where $\textup{dist}_{1}( S, S')$ is the $1$-distance between subsets of $\TT^{d}_{L}$. Moreover, by \eqref{eq:canonicaldimension},
\begin{align*}
\|R^{N}_{M}(\xi)\|_{\sup} & = \vep_{M}^{\frac{d}{2}}\|\vep_{M}^{\frac{1}{2}}R^{N}_{M}(q_{\xi})+i\vep_{M}^{-\frac{1}{2}}R^{N}_{M}(p_{\xi})\|_{\sup}, \\
\|R^{N'}_{M}(\xi')\|_{\sup} & = \vep_{M}^{\frac{d}{2}}\|\vep_{M}^{\frac{1}{2}}R^{N'}_{M}(q_{\xi'})+i\vep_{M}^{-\frac{1}{2}}R^{N'}_{M}(p_{\xi'})\|_{\sup},
\end{align*}
which can be combined with \eqref{eq:sobolevseq} \& \eqref{eq:regularityestimate} for $\alpha=0$ in Lemma \ref{lem:freegroundconv} to show:
\begin{align*}
\lim_{M\rightarrow\infty}\|R^{N}_{M}(q_{\xi})\|_{\sup} & \leq \|\hat{\phi}^{(\vep_{N})}\hat{q_{\xi}}\|_{L^1,L} < \infty, &  \lim_{M\rightarrow\infty}\|R^{N}_{M}(p_{\xi})\|_{\sup} & \leq \|\hat{\phi}^{(\vep_{N})}\hat{p_{\xi}}\|_{L^1,L} < \infty,
\end{align*}
for all $N\in\NN_{0}$, and similarly for $\xi'$, where $\|\!\ .\!\ \|_{L^1,L}$ is the norm of $L^1(\tfrac{\pi}{L}\ZZ^{d},(2L)^{-d}\mu_{\frac{\pi}{L}\ZZ^{d}})$.
Thus, we find (in the limit $M\rightarrow\infty$):
\begin{align*}
\left\|\left[\eta^{(M)}_{L|t}(W),W'\right]\right\| & \lesssim C_{d}\!\ \vep_{M}^{-d}\!\ \left(\vep_{M}^{\frac{1}{2}}\|\hat{\phi}^{(\vep_{N})}\hat{q_{\xi}}\|_{L^1,L}+\vep_{M}^{-\frac{1}{2}}\|\hat{\phi}^{(\vep_{N})}\hat{p_{\xi}}\|_{L^1,L}\right) \\
& \hspace{1cm}\times\left(\vep_{M}^{\frac{1}{2}}\|\hat{\phi}^{(\vep_{N})}\hat{q_{\xi'}}\|_{L^1,L}+\vep_{M}^{-\frac{1}{2}}\|\hat{\phi}^{(\vep_{N})}\hat{p_{\xi'}}\|_{L^1,L}\right) \\
& \hspace{1cm}\times\vol( S)\vol( S')\!\ e^{-\frac{\delta}{\vep_{M}}\left(\textup{dist}_{1}( S, S')-d^{\frac{1}{2}}\max\left\{\frac{2}{\delta},e^{\frac{\delta}{2}+1}\right\}|t|\right)}. \\
& \rightarrow 0,
\end{align*}
where the expressions $\vol(S)$, $\vol(S')$ arise by collecting an overall factor $\vep_{M}^{2d}$ in front of the sum $\sum_{x\in X,y\in Y}$ and choosing $X = \Lambda_{M}\cap S$, $Y = \Lambda_{M}\cap S'$:
\begin{align*}
\vol( S) & = \lim_{M\rightarrow\infty}\vep_{M}^{d}\sum_{x\in\Lambda_{M}\cap S}1, & \vol( S') & = \lim_{M\rightarrow\infty}\vep_{M}^{d}\sum_{y\in\Lambda_{M}\cap S'}1.
\end{align*}
Using the equivalence of $\textup{dist}_{1}$ and $\textup{dist}$, i.e.~$\textup{dist}\leq\textup{dist}_{1}\leq d^{\frac{1}{2}}\textup{dist}$, and $\lim_{M\rightarrow\infty}\vep_{M}^{-k}e^{-\frac{\delta}{\vep_{M}}} = 0$ for any $k\in\ZZ$, the statement is proved for elementary Weyl operators and their finite linear combinations with the scaling-limit Lieb-Robinson velocity bounded by:
\begin{align*}
1\leq\max\left\{\frac{2}{\delta},e^{\frac{\delta}{2}+1}\right\} & \leq c'\leq d^{\frac{1}{2}}\max\left\{\frac{2}{\delta},e^{\frac{\delta}{2}+1}\right\}.
\end{align*}
Then, because $\eta^{(M)}_{L}$ is a group of $^*$-automorphisms of $\cW_{\infty,L}$ and  Weyl operators form a bounded total set in $\cW_{\infty,L}(S)$, $\cW_{\infty,L}(S')$, we obtain the desired statement. \hfill$\square$
\end{proof}
\begin{corollary}
\label{cor:LRstrongind}
Let $K\ge 2$.
With the scaling-limit Lieb-Robinson velocity $c'$ from Proposition \ref{prop:LRalgind} and $ S'\cap S_{c'T}=\emptyset$ with $ S_{c'T} = \{x\!\ |\!\ \textup{dist}(x, S)<c'T\}$,
\begin{align*}
\sotlim_{N\rightarrow\infty}\left[\eta^{(N)}_{L|t}(W),W'\right] & = 0,
\end{align*}
for all $W\in\cW_{\infty, L}(S), W'\in\cW_{\infty, L}(S')$ exponentially fast and uniformly for $|t|<T$ relative to the representation given by $\omega^{(\infty)}_{L,\infty}$. Moreover, for all such $t\in\RR$ and for all $W\in\pi^{(\infty)}_{L,\infty}(\cW_{\infty, L}(S))'', W'\in\pi^{(\infty)}_{L,\infty}(\cW_{\infty, L}(S'))''$:
\begin{align*}
\left[\eta_{L|t}(W),W'\right] & = 0.
\end{align*}
\end{corollary}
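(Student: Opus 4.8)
Here I would prove the corollary as two consequences of the two results just established: the finite-scale Lieb--Robinson estimate of Proposition \ref{prop:LRalgind} and the convergence of dynamics of Corollary \ref{cor:limdyn}. For the first assertion, I would simply transport Proposition \ref{prop:LRalgind} into the representation: since $\pi^{(\infty)}_{L,\infty}$ is a $*$-homomorphism and hence norm non-increasing, the norm convergence $\|[\eta^{(N)}_{L|t}(W),W']\| \to 0$ (exponentially fast, uniformly for $|t| < T$) forces $\|\pi^{(\infty)}_{L,\infty}([\eta^{(N)}_{L|t}(W),W'])\| \to 0$ at the same rate, and a fortiori $\sotlim_N [\eta^{(N)}_{L|t}(W),W'] = 0$ in $\pi^{(\infty)}_{L,\infty}$. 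No extra work is needed here.

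For the second assertion, the plan is to first prove $[\eta_{L|t}(\pi^{(\infty)}_{L,\infty}(A)),\pi^{(\infty)}_{L,\infty}(B)] = 0$ for generating Weyl operators $A \in \cW_{\infty,L}(S)$, $B \in \cW_{\infty,L}(S')$, and then to pass to the weak closures. Writing $A = \alpha^N_\infty(W)$ with $W \in \cW_{N,L}$ supported in $S$, the identity \eqref{eq:finiteinddyn} lets me rewrite the expression in Corollary \ref{cor:limdyn} as $\pi^{(\infty)}_{L,\infty}(\eta^{(N')}_{L|t}(A))$, where $\eta^{(N')}_{L|t}$ now denotes the extension of the finite-scale dynamics to $\cW_{\infty,L}$; thus Corollary \ref{cor:limdyn} reads $\pi^{(\infty)}_{L,\infty}(\eta^{(N')}_{L|t}(A)) \to \eta_{L|t}(\pi^{(\infty)}_{L,\infty}(A))$ strongly as $N' \to \infty$. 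Since Weyl operators are unitaries, the nets involved are uniformly bounded, so multiplication by the fixed operator $\pi^{(\infty)}_{L,\infty}(B)$ is strongly continuous from either side and I may interchange the commutator with the strong limit: $[\eta_{L|t}(\pi^{(\infty)}_{L,\infty}(A)),\pi^{(\infty)}_{L,\infty}(B)] = \slim_{N'} \pi^{(\infty)}_{L,\infty}([\eta^{(N')}_{L|t}(A),B])$. The right-hand side vanishes in norm by Proposition \ref{prop:LRalgind} (for $S' \cap S_{c'T} = \emptyset$ and $|t| < T$), so by uniqueness of strong limits the commutator is zero; linearity and norm-continuity of $\pi^{(\infty)}_{L,\infty}$, of $\eta_{L|t}$ and of the commutator then extend this to all of $\pi^{(\infty)}_{L,\infty}(\cW_{\infty,L}(S))$ against $\pi^{(\infty)}_{L,\infty}(\cW_{\infty,L}(S'))$.

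To reach the von Neumann algebras, I would use that $\eta_{L|t} = \Ad U_{L|t}$ is implemented by a unitary and is therefore a normal $*$-automorphism of $\cB(\cH^{(\infty)}_{L,\infty})$, with normal inverse $\eta_{L|-t}$. The previous step shows $\eta_{L|t}(\pi^{(\infty)}_{L,\infty}(\cW_{\infty,L}(S))) \subset \pi^{(\infty)}_{L,\infty}(\cW_{\infty,L}(S'))'$, and the latter commutant is a von Neumann algebra; hence $\eta_{L|-t}(\pi^{(\infty)}_{L,\infty}(\cW_{\infty,L}(S'))')$ is a von Neumann algebra containing $\pi^{(\infty)}_{L,\infty}(\cW_{\infty,L}(S))$, thus containing its bicommutant. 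Applying $\eta_{L|t}$ back yields $\eta_{L|t}(\pi^{(\infty)}_{L,\infty}(\cW_{\infty,L}(S))'') \subset \pi^{(\infty)}_{L,\infty}(\cW_{\infty,L}(S'))' = (\pi^{(\infty)}_{L,\infty}(\cW_{\infty,L}(S'))'')'$, which is exactly $[\eta_{L|t}(W),W'] = 0$ for all $W \in \pi^{(\infty)}_{L,\infty}(\cW_{\infty,L}(S))''$ and $W' \in \pi^{(\infty)}_{L,\infty}(\cW_{\infty,L}(S'))''$. The only genuinely delicate point is the interchange of limits in the middle step: one must ensure that the strong limit coming from Corollary \ref{cor:limdyn} (which replaces $\eta^{(N')}_{L|t}$ by $\eta_{L|t}$) is compatible with the norm limit coming from Proposition \ref{prop:LRalgind} (which kills the commutator), and this is precisely what the uniform boundedness of Weyl operators and the joint strong continuity of multiplication on bounded sets guarantee.
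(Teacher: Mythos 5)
Your proposal is correct and follows essentially the same route as the paper's (very terse) proof: the first assertion is transported from the norm estimate of Proposition \ref{prop:LRalgind} via the contractivity of $\pi^{(\infty)}_{L,\infty}$, and the second combines the strong convergence of the extended finite-scale dynamics (Corollary \ref{cor:limdyn} together with \eqref{eq:finiteinddyn}) with the vanishing norm of the commutators, then passes to weak closures using that $\eta_{L|t}=\Ad U_{L|t}$ is a normal automorphism. Your write-up merely makes explicit the interchange of the strong and norm limits and the bicommutant step that the paper leaves implicit.
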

\begin{proof}
The statement about SOT (strong operator topology) convergence is immediate from the norm convergence of Proposition \ref{prop:LRalgind}. The commutativity for weak closures follows from the SOT convergence of $\eta^{(N)}_{L}$ to $\eta_{L}$ and the SOT continuity of $\eta_{L}$. \hfill$\square$
\end{proof}
\begin{remark}
\label{rem:LRvelocity}
Note, that even in the case $\textup{dist}_{1}=d^{\frac{1}{2}}\textup{dist}$, the (optimal) scaling-limit Lieb-Robinson velocity is $c'=\tfrac{2}{\delta_{0}}\sim 3.59$, given by the Lambert W function,
\begin{align*}
\tfrac{\delta_{0}}{2}e^{\frac{\delta_{0}}{2}} & = e^{-1},
\end{align*}
which is still far from the correct propagation speed $c=1$.
\end{remark}

\subsubsection{Infinite volume limit}
\label{sec:therm}
Here we compare the finite volume theory and the infinite volume theory.
Given a bounded open set $ S \subset \RR^d$, one has $ S \Subset (-L,L)^d$ for sufficiently large $L$. 
The $C^*$-algebra $\cW_{\infty, L}( S)$ actually does not depend on $L$ (possibly $L=\infty$). Indeed, with the notation  $\fh_{N,L}( S) := \ltwo(\Lambda_{N}( S))$ (cf. Sec.~\ref{sec:localg}), $\beta_L (\cW_{\infty, L}( S)) \subset \cW(\fh_L)$ is generated by Weyl operators $W_{\mathrm{ct}}(\xi)$, $\xi \in \fh_{\infty,L}( S) := \bigcup_N R^N_\infty \fh_{N,L}( S)$, and 
each $\xi \in \fh_{\infty,L}( S)$ can be also thought as a compactly supported function on $\RR^d$,
and $\sigma_L(\xi,\eta) = \sigma_\infty(\xi,\eta)$ for $\xi, \eta\in \fh_L( S)$ (see \eqref{eq:sympcont}, \eqref{eq:sympcontinf}).
Therefore, $\cW_{\infty, L}( S)$ can also be 
identified with the $C^*$-subalgebra of the infinite volume free field $C^*$-algebra $\cW(\fh_\infty)$ generated
by the Weyl operators $W_{\mathrm{ct}}(\xi)$, $\xi \in \fh_{\infty,L}( S)$. 

With this embedding  and Lemma \ref{lem:isoalg2} in mind, we can consider on $\cW_{\infty, L}( S)$ the family of continuum states $\{ \omega_L\}_{L > 0}$,
and the limit $L\to \infty$.
\begin{proposition}
For each $\xi \in \fh_{\infty,L}( S)$ we have
\[
\lim_{L \to +\infty} \omega_L(W_{\mathrm{ct}}(\xi)) = \omega_{\infty}(W_{\mathrm{ct}}(\xi)),
\]
and $\{\omega_L\}$ converges to $\omega_{\infty}$ in the weak* topology on $\cW_{\infty, L}( S)$. 
\end{proposition}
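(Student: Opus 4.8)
The plan is to reduce the statement to the convergence of a Riemann sum to an integral, the only point of substance being a uniform tail estimate supplied by the Fourier decay of the scaling function. Fix $\xi \in \fh_{\infty,L}(S)$ and, as explained before the statement, regard it as a fixed compactly supported function on $\RR^d$ with $\supp\xi \subset S \Subset (-L,L)^d$ for all sufficiently large $L$. By \eqref{eq:vaccyl} and \eqref{eq:vacuummink}, $\omega_L(W_{\mathrm{ct}}(\xi)) = e^{-\frac14\|\xi\|_L^2}$ and $\omega_\infty(W_{\mathrm{ct}}(\xi)) = e^{-\frac14\|\xi\|_\infty^2}$, so by continuity of $\exp$ it suffices to show $\|\xi\|_L^2 \to \|\xi\|_\infty^2$. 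Setting
\[
F(k) := \left|\gamma_m^{-1/2}(k)\,\hat q_\xi(k) + i\,\gamma_m^{1/2}(k)\,\hat p_\xi(k)\right|^2,
\]
and recalling from the remark after \eqref{eq:minkft} that $\hat q_\xi(k),\hat p_\xi(k)$ are unambiguously defined on $k \in \tfrac\pi L\ZZ^d$ whether $\xi$ is viewed on $\TT^d_L$ or on $\RR^d$, the definitions \eqref{eq:spcyl} and \eqref{eq:scalarmink} become
\[
\|\xi\|_L^2 = \frac{1}{(2\pi)^d}\Big(\frac\pi L\Big)^{\!d}\sum_{k\in\frac\pi L\ZZ^d} F(k), \qquad \|\xi\|_\infty^2 = \frac{1}{(2\pi)^d}\int_{\RR^d} F(k)\,dk,
\]
so that $\|\xi\|_L^2$ is precisely the Riemann sum for $\|\xi\|_\infty^2$ on the cubic lattice $\tfrac\pi L\ZZ^d$ of mesh $\tfrac\pi L$.

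First I would record the regularity and decay of $F$. Since $\xi$ is a finite linear combination of rescaled and translated Daubechies scaling functions, the bound $|\hat\phi(k)|\le C(1+|k|)^{-\rho}$ with $\rho > \tfrac{d+1}2$ (valid for $K\ge 2$, as already used after \eqref{eq:freegroundlimit}) gives $|\hat q_\xi(k)|,|\hat p_\xi(k)| \le C'(1+|k|)^{-\rho}$; together with $\gamma_m(k)\ge m>0$ this shows that $F$ is continuous and satisfies
\[
F(k) \le C''(1+|k|)^{-(2\rho-1)}, \qquad 2\rho-1 > d,
\]
so $F \in C(\RR^d)\cap L^1(\RR^d)$.

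The main analytic step is then the Riemann-sum convergence, which I would carry out by dominated convergence after writing the sum as an integral of a step function. Let $\kappa_L(k)$ be the corner in $\tfrac\pi L\ZZ^d$ of the mesh cube containing $k$ and put $F_L(k):=F(\kappa_L(k))$, so that $\big(\tfrac\pi L\big)^d\sum_k F(k)=\int_{\RR^d}F_L\,dk$. By continuity $F_L\to F$ pointwise, while $|\kappa_L(k)-k|\le \tfrac\pi L\sqrt d\le \pi\sqrt d$ for $L\ge 1$ and the decay bound give the $L$-independent dominating function $F_L(k)\le C''\big(1+\max(0,|k|-\pi\sqrt d)\big)^{-(2\rho-1)}\in L^1(\RR^d)$. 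Hence $\int F_L\to\int F$, i.e.\ $\|\xi\|_L^2\to\|\xi\|_\infty^2$ and $\omega_L(W_{\mathrm{ct}}(\xi))\to\omega_\infty(W_{\mathrm{ct}}(\xi))$. I expect this uniform tail control to be the only genuine obstacle, and it is exactly what the decay exponent $\rho>\tfrac{d+1}2$ of the Daubechies scaling function provides.

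To conclude with weak* convergence on $\cW_{\infty,L}(S)$, I would use that the linear span of the $W_{\mathrm{ct}}(\xi)$, $\xi\in\fh_{\infty,L}(S)$, is norm-dense in $\cW_{\infty,L}(S)$ and that each $\omega_L$, $\omega_\infty$ is a state, hence of norm $1$. For $A\in\cW_{\infty,L}(S)$ and a finite linear combination $B$ of such Weyl operators with $\|A-B\|<\varepsilon$, linearity and the convergence just proved give $|\omega_L(A)-\omega_\infty(A)|\le 2\varepsilon + |\omega_L(B)-\omega_\infty(B)|$, whose middle term tends to $0$; as $\varepsilon>0$ is arbitrary this yields $\omega_L\to\omega_\infty$ weak*.
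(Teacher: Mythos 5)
Your proof is correct and follows essentially the same route as the paper's: both reduce the claim to the convergence of the Riemann sums $\tfrac{1}{(2L)^d}\sum_{k\in\frac{\pi}{L}\ZZ^d}F(k)$ to $\tfrac{1}{(2\pi)^d}\int_{\RR^d}F$, using the continuity of $\hat q_\xi,\hat p_\xi$ and the integrability of $F$ guaranteed by the Fourier decay of the scaling function, and then obtain weak* convergence by norm-density of the span of Weyl operators. Your dominated-convergence formulation with the shifted majorant $C''(1+\max(0,|k|-\pi\sqrt d))^{-(2\rho-1)}$ in fact makes explicit the uniform-in-$L$ tail control that the paper's sketch ("restrict to a sufficiently large compact interval with a small error") leaves implicit, so it is, if anything, a slightly more careful rendering of the same argument.
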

\begin{proof}
It is enough to compute the exponent which appears in the expectation values:
\[
\lim_{L \to +\infty} \tfrac{1}{(2L)^d}\sum_{k\in\frac\pi L\ZZ^d}\bigg|\frac{\hat q_\xi(k)}{\gamma_m^{1/2}(k)}+i\gamma_m^{1/2}(k)\hat p_\xi(k)\bigg|^2 = 
\tfrac{1}{(2\pi)^d} \int_{\RR^d} d^dk\,\bigg|\frac{\hat q_\xi(k)}{\gamma_m^{1/2}(k)}+i\gamma_m^{1/2}(k)\hat p_\xi(k)\bigg|^2.
\]
Indeed, if the right-hand side is integrable, one can restrict it to a sufficiently large compact interval with a small error, and the corresponding restriction of the sum on the left hand side is then a corresponding Riemann sum, which is convergent in the $L \to \infty$ limit due to the regularity of $\hat q_\xi,\hat p_\xi$.
The weak$^*$ convergence follows by approximation in norm. \hfill$\square$
\end{proof}

We now prove that the GNS representations with respect to the vacuum states with different volumes $L$ are locally equivalent.
This enables us to identify locally the finite volume algebras in a cylinder and in the Minkowski space, showing that we can obtain the local algebras of the infinite volume continuum free field through our finite volume scaling limit procedure.

Following the steps of \cite[Appendix A]{ContiMorsella}, we first show that the two representations are quasi-equivalent 
by the main theorem in \cite{ArakiYamagami}, then this is actually unitary equivalence because local algebras are type III factors.
The assumptions of \cite{ArakiYamagami} are satisfied if one proves the following facts.
\begin{enumerate}
\item The topologies induced by the two-point functions on the one-particle spaces are equivalent.
\item The  kernels of the quadratic forms associated with the difference of the two point functions on the torus and on $\RR^d$ ($Q_+$ and $Q_-$ in Lemma~\ref{lem:diff2}) are sufficiently regular. This entails the Hilbert-Schmidt property of the Araki-Yamagami operator.
\end{enumerate}

We show them in Lemma \ref{lem:normequivalence} and Lemma \ref{lem:diff2}, respectively.
As a preparation for the equivalence of topologies we first prove two versions of the Young inequality relating continuous $L^2$-norms with discrete convolutions and vice versa.

\begin{lemma}\label{lem:young}
Given $f \in \cS(\RR)$, $g \in L^2(\RR)$, there holds
\begin{align*}
\int_{\RR} dk\, \bigg| \sum_{p \in \frac\pi L\ZZ} |f(k-p)| |g(p)| \bigg|^2 &\leq \max_{k \in [0,\frac \pi L]} \| f_k\|_{\ell^1(\frac \pi L \ZZ)} \|f\|_{L^1(\RR)} \|g\|_{\ell^2(\frac\pi L \ZZ)}^2,\\
\sum_{k \in \frac\pi L \ZZ} \left| \int_{\RR}dp\, |f(k-p)| |g(p)| \right|^2 &\leq \max_{k \in [0,\frac \pi L]} \| f_k\|_{\ell^1(\frac \pi L \ZZ)} \|f\|_{L^1(\RR)} \|g\|_{L^2(\RR)}^2,
\end{align*}
where $f_k(p) := f(k-p)$, $k, p \in \RR$ and $\| h\|^s_{\ell^s(\frac \pi L \ZZ)} = \sum_{p \in \frac \pi L\ZZ} |h(k-p)|^s$, $s=1,2$.
\end{lemma}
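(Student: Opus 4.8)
The plan is to prove both estimates by the same elementary device: a Cauchy--Schwarz splitting of the kernel $|f|$ into $|f|^{1/2}\cdot|f|^{1/2}$, followed by a Tonelli interchange of the sum and the integral, and a uniform bound on the one remaining $\ell^1$-type factor coming from the $\tfrac\pi L$-periodicity of $k\mapsto\sum_{p\in\frac\pi L\ZZ}|f(k-p)|$. The Schwartz regularity of $f$ guarantees that all the lattice sums $\sum_{p\in\frac\pi L\ZZ}|f(k-p)|$ converge absolutely and that $\max_{k\in[0,\frac\pi L]}\|f_k\|_{\ell^1(\frac\pi L\ZZ)}<\infty$, so every interchange below is justified.

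For the first inequality I would first fix $k\in\RR$ and apply the discrete Cauchy--Schwarz inequality to the sum over $p\in\frac\pi L\ZZ$, writing $|f(k-p)|\,|g(p)| = |f(k-p)|^{1/2}\cdot\big(|f(k-p)|^{1/2}|g(p)|\big)$, which gives $\big(\sum_p |f(k-p)||g(p)|\big)^2 \le \big(\sum_p|f(k-p)|\big)\big(\sum_p|f(k-p)||g(p)|^2\big)$. The first factor is exactly $\|f_k\|_{\ell^1(\frac\pi L\ZZ)}$, and since $k\mapsto\|f_k\|_{\ell^1(\frac\pi L\ZZ)}$ is $\tfrac\pi L$-periodic it is bounded by $\max_{k\in[0,\frac\pi L]}\|f_k\|_{\ell^1(\frac\pi L\ZZ)}$. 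Integrating over $k\in\RR$ and using Tonelli to exchange $\int_\RR dk$ with $\sum_p$ in the second factor, together with $\int_\RR|f(k-p)|\,dk=\|f\|_{L^1(\RR)}$, collects the constant $\|f\|_{L^1(\RR)}\|g\|_{\ell^2(\frac\pi L\ZZ)}^2$ and yields the claim.

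The second inequality is proved symmetrically, with the roles of the continuous and discrete variables interchanged. Here I would fix $k\in\frac\pi L\ZZ$ and apply the continuous Cauchy--Schwarz inequality to the integral over $p\in\RR$, producing the factor $\int_\RR|f(k-p)|\,dp=\|f\|_{L^1(\RR)}$ and the remainder $\int_\RR|f(k-p)||g(p)|^2\,dp$. Summing over $k\in\frac\pi L\ZZ$ and exchanging $\sum_k$ with $\int_\RR dp$ by Tonelli, the surviving kernel factor becomes $\sum_{k\in\frac\pi L\ZZ}|f(k-p)|$, which as a $\tfrac\pi L$-periodic function of $p$ is again bounded uniformly by $\max_{k\in[0,\frac\pi L]}\|f_k\|_{\ell^1(\frac\pi L\ZZ)}$; this leaves $\max_{k\in[0,\frac\pi L]}\|f_k\|_{\ell^1(\frac\pi L\ZZ)}\,\|f\|_{L^1(\RR)}\,\|g\|_{L^2(\RR)}^2$, as required.

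There is no genuine analytic obstacle here: the only points requiring a little care are the justification of the Tonelli interchanges (immediate, since all integrands are nonnegative and finite by the Schwartz decay of $f$) and the observation that the two lattice sums $\sum_{p}|f(k-p)|$ and $\sum_{k}|f(k-p)|$ are both periodic sums of $|f|$ over a shifted copy of $\frac\pi L\ZZ$, hence controlled by the \emph{same} constant $\max_{k\in[0,\frac\pi L]}\|f_k\|_{\ell^1(\frac\pi L\ZZ)}$. This uniform-in-shift bound is precisely what makes the constant on the right-hand side symmetric between the two estimates.
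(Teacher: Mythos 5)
Your proposal is correct and follows essentially the same route as the paper: the Cauchy--Schwarz splitting $|f|=|f|^{1/2}\cdot|f|^{1/2}$, the uniform bound on the $\frac{\pi}{L}$-periodic lattice sum, and the Tonelli interchange are exactly the steps used there, with the second inequality obtained by swapping the roles of sum and integral. Your explicit remark that $\sum_{k\in\frac{\pi}{L}\ZZ}|f(k-p)|$ is again a periodic sum over a shifted lattice, hence controlled by the same constant, makes precise a point the paper leaves implicit.
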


\begin{proof}
There holds
\[\begin{split}
\int_{\RR} dk\, \bigg| \sum_{p \in \frac\pi L\ZZ} |f(k-p)| |g(p)| \bigg|^2 &= \int_{\RR} dk\, \bigg| \sum_{p \in \frac\pi L\ZZ} |f(k-p)|^{1/2} |g(p)| |f(k-p)|^{1/2} \bigg|^2\\
&\leq \int_{\RR} dk \bigg[\sum_{p \in \frac\pi L \ZZ} |f(k-p)| |g(p)|^2\bigg] \bigg[\sum_{p \in \frac \pi L\ZZ} |f(k-p)|\bigg],
\end{split}\]
as the last expression contains more positive terms than the previous one.
Since $f \in \cS(\RR)$, the function $k \mapsto \sum_{p \in \frac \pi L\ZZ} |f(k-p)| =  \| f_k\|_{\ell^1(\frac \pi L \ZZ)}$ is $\frac\pi L$-periodic and continuous (the series converges uniformly in $k \in [0,\frac \pi L]$), and therefore $\sum_{p \in \frac \pi L\ZZ} |f(k-p)| \leq \max_{k \in [0,\frac \pi L]} \| f_k\|_{\ell^1(\frac \pi L \ZZ)} < \infty$ for all $k \in \RR$. Moreover
\[
\int_{\RR} dk \bigg[\sum_{p \in \frac\pi L \ZZ} |f(k-p)| |g(p)|^2\bigg]  = \sum_{p \in \frac \pi L \ZZ} |g(p)|^2 \int_{\RR} dk\, |f(k-p)| = \|g\|_{\ell^2(\frac\pi L \ZZ)}^2  \|f\|_{L^1(\RR)},
\]
which proves the first inequality. The second one is proven in the same way, exchanging the roles of sums and integrals. \hfill$\square$
\end{proof}

Let us show the first of the conditions for unitary equivalence.
\begin{lemma}\label{lem:normequivalence}
For $S \Subset (-L,L)$ the norms $\|\cdot\|_L$ and $\|\cdot\|_\infty$ defined by~\eqref{eq:spcyl} and \eqref{eq:scalarmink} respectively,
are equivalent on $\fh_{\infty,L}( S)$.
\end{lemma}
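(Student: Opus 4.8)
The plan is to reduce both norms to a comparison between an $\ell^2$-sum over the dual lattice $\tfrac\pi L\ZZ^d$ and an $L^2(\RR^d)$-integral of the \emph{same} integrand, and then to pass between the two using the Young-type inequalities of Lemma \ref{lem:young}. Writing $a(k):=\gamma_m^{-1/2}(k)\hat q_\xi(k)$ and $b(k):=\gamma_m^{1/2}(k)\hat p_\xi(k)$, the integrand appearing in both \eqref{eq:spcyl} and \eqref{eq:scalarmink} is $|a(k)+ib(k)|^2 = |a(k)|^2+|b(k)|^2-2\Im\!\big(b(k)\overline{a(k)}\big)$. Since $q_\xi,p_\xi$ are real, $\hat q_\xi,\hat p_\xi$ are Hermitian and $\gamma_m$ is even, so $a(-k)=\overline{a(k)}$ and $b(-k)=\overline{b(k)}$; the cross term is thus odd under $k\mapsto -k$ and cancels in pairs when summed over $\tfrac\pi L\ZZ^d$ or integrated over $\RR^d$ (and vanishes at $k=0$). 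I would therefore reduce the statement to proving, separately for $c\in\{a,b\}$, a two-sided bound $c_1\int_{\RR^d}|c|^2\,dk \le \sum_{k\in\frac\pi L\ZZ^d}|c(k)|^2 \le c_2\int_{\RR^d}|c|^2\,dk$ with constants depending on $L,m,S$. Here every $\xi\in\fh_{\infty,L}(S)$ is a finite combination of continuous, compactly supported scaling functions supported in $S$, so $\hat q_\xi,\hat p_\xi$ are entire and all the objects below are well defined pointwise.

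The engine for both inequalities is a reproducing formula built from a fixed cutoff. Since $S\Subset(-L,L)^d$, I would pick $\chi\in C^\infty_0((-L,L)^d)$ with $\chi\equiv 1$ on a neighbourhood of $S$, so that $q_\xi=\chi q_\xi$ and $p_\xi=\chi p_\xi$. Multiplying in position space and transforming yields the continuous identity $\hat q_\xi(k)=\frac{1}{(2\pi)^d}\int_{\RR^d}\hat\chi(k-k')\hat q_\xi(k')\,dk'$; and, using that $\supp q_\xi$ lies in the fundamental domain so that its $2L$-periodization reproduces $q_\xi$ and has Fourier coefficients $\hat q_\xi(k')$, $k'\in\tfrac\pi L\ZZ^d$, the discrete identity $\hat q_\xi(k)=\frac{1}{(2L)^d}\sum_{k'\in\frac\pi L\ZZ^d}\hat\chi(k-k')\hat q_\xi(k')$, both valid for all $k\in\RR^d$. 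Rewriting these in terms of $a$ (and analogously $b$) via $\hat q_\xi=\gamma_m^{1/2}a$ introduces the kernel $\gamma_m^{-1/2}(k)\hat\chi(k-k')\gamma_m^{1/2}(k')$, whose modulus I would dominate by a single convolution kernel $f(k-k')$.

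The one place needing care is this kernel bound. Because $\gamma_m(k)=(k^2+m^2)^{1/2}$ is $1$-Lipschitz and bounded below by $m$, we have $\gamma_m(k')/\gamma_m(k)\le 1+m^{-1}|k-k'|$, hence $\gamma_m^{-1/2}(k)\gamma_m^{1/2}(k')\le(1+m^{-1}|k-k'|)^{1/2}$ and therefore $|\gamma_m^{-1/2}(k)\hat\chi(k-k')\gamma_m^{1/2}(k')|\le f(k-k')$ with $f:=C_m(1+|\cdot|)^{1/2}|\hat\chi|$. As $\hat\chi\in\cS(\RR^d)$, one has $f\in\cS(\RR^d)\subset L^1(\RR^d)$ with finite periodization, so $f$ is an admissible kernel for Lemma \ref{lem:young} (in its straightforward $d$-dimensional form, whose proof is identical to the stated one-dimensional one). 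Feeding the continuous reproducing formula into the second inequality of Lemma \ref{lem:young} with $g=|a|$ gives $\sum_{k\in\frac\pi L\ZZ^d}|a(k)|^2\le C\int_{\RR^d}|a|^2\,dk$, while feeding the discrete reproducing formula into the first inequality with $g=|a|$ sampled on $\tfrac\pi L\ZZ^d$ gives $\int_{\RR^d}|a|^2\,dk\le C\sum_{k\in\frac\pi L\ZZ^d}|a(k)|^2$; interchanging the roles of $\gamma_m^{\pm1/2}$ treats $b$ identically. Combining the four inequalities with the cross-term cancellation yields the equivalence of $\|\cdot\|_L$ and $\|\cdot\|_\infty$ on $\fh_{\infty,L}(S)$. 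The main obstacle is thus purely the bookkeeping of the $\gamma_m$-weights across the two frequencies $k,k'$ and checking that the resulting polynomial factor is absorbed by the Schwartz decay of $\hat\chi$; once $f$ is identified as a Schwartz kernel, Lemma \ref{lem:young} does the rest.
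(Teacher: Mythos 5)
Your proof is correct and follows essentially the same route as the paper's: the same reduction to the $\gamma_m^{\pm 1}$-weighted norms, the same cutoff $\chi$ producing the continuous and discrete reproducing identities, and the same application of Lemma \ref{lem:young}. The only (cosmetic) difference is that you transfer the weight via the single symmetric bound $\gamma_m(k')/\gamma_m(k)\le 1+m^{-1}|k-k'|$, whereas the paper uses subadditivity of $\gamma_m^{1/2}$ for the $+$ case and a separate ratio estimate for the $-$ case; both yield a Schwartz kernel and the argument goes through identically.
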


\begin{proof}
Setting, for a function $\xi : \RR \to \RR$,
\[
\|\xi \|_{L,\pm}^2 := \frac1{2L} \sum_{k \in \frac \pi L \ZZ} \gamma_m(k)^{\pm 1} | \hat \xi(k)|^2, \qquad \|\xi \|_{\infty,\pm}^2 := \frac1{2\pi} \int_{\RR}dk\, \gamma_m(k)^{\pm 1} | \hat \xi(k)|^2,
\]
it is sufficient to show that on $\fh_{\infty,L}( S)$ the norm $\| \cdot\|_{L,\pm}$ is equivalent to $\|\cdot\|_\pm$ respectively.

To this end, let $\xi \in \fh_{\infty,L}( S)$ be a real function, and let $\chi \in C_c^\infty((-L,L))$ be such that $\chi(x)=1$ for all $x \in \supp \xi$. Since $\supp \xi \subset (-L,L)$, we can define its $2L$-periodization
\[
P\xi(x) := \sum_{n \in \ZZ} \xi(x+2Ln) = \sum_{k \in \frac\pi L \ZZ} \hat \xi(k) e^{ikx},
\]
and thanks to $\sum_{k \in \frac\pi L \ZZ} |\hat \xi(k)| < +\infty$ we obtain, for all $k \in \RR$,
\begin{align*}
\hat \xi(k) & = \widehat{\chi P\xi}(k) = \int_\RR\! dx\, \chi(x) P\xi(x) e^{-ikx} = \!\!\!\sum_{p \in \frac \pi L \ZZ}\!\!\!\hat\xi(p)\!\!\!\int_{\RR}\! dx\, \chi(x) e^{-i(k-p)x} = \!\!\!\sum_{p \in \frac \pi L \ZZ} \!\!\!\hat\chi(k-p) \hat\xi(p). 
\end{align*}
Now since $k \mapsto \gamma_m(k)$ is a convex function,
\[
\gamma_m(k) = \gamma_m(k-p+p) \leq \frac12(\gamma_m(2(k-p))+\gamma_m(2p)) = \gamma_{\frac m 2}(k-p)+\gamma_{\frac m 2}(p) \leq \gamma_m(k-p)+\gamma_m(p),
\]
which entails $\gamma_m(k)^{\frac12} \leq \gamma_m(k-p)^{\frac12}+\gamma_m(p)^{\frac12}$. As a consequence
\[
\gamma_m(k)^{\frac12} |\hat \xi(k)| \leq \sum_{p \in \frac \pi L \ZZ} \gamma_m(k-p)^{\frac 12} |\hat \chi(k-p)| |\hat\xi(p)| + \sum_{p \in \frac \pi L \ZZ}  |\hat \chi(k-p)| \gamma_m(p)^{\frac 12}|\hat\xi(p)| ,
\]
and in view of the fact that $\hat \chi, \gamma_m^{\frac 12}\hat \chi \in \cS(\RR)$, from the first inequality of Lemma \ref{lem:young} the (squared) $L^2(\RR)$-norms of the two terms in the right hand side can be estimated by $C \| \hat \xi\|^2_{\ell^2(\frac\pi L \ZZ)} \leq 2LC \|\xi\|^2_{L,+}$  and by $C\| \gamma_m^{1/2}\hat \xi \|^2_{\ell^2(\frac\pi L \ZZ)} =2L C \|\xi\|^2_{L,+}$ respectively, for a suitable constant $C > 0$. This implies
\[
\| \xi\|^2_{\infty,+} =\frac1{2\pi} \| \gamma_m^{1/2} \hat \xi\|_{L^2(\RR)}^2 \leq \frac{4LC}{\pi} \|\xi\|_{L,+}^2.
\]
Conversely, we can also write, for all $k \in \frac \pi L \ZZ$,
\[\begin{split}
\gamma_m(k)^{\frac12}|\hat \xi(k)| &= \gamma_m(k)^{\frac12}|\widehat{\chi \xi}(k)| \leq \gamma_m(k)^{\frac12} \int_{\RR}dp\, |\hat\chi(k-p) | |\hat \xi(p)|\\
&\leq  \int_{\RR}dp\, \gamma_m(k-p)^{\frac 12} |\hat \chi(k-p)| |\hat\xi(p)| + \int_{\RR} dp\,|\hat \chi(k-p)| \gamma_m(p)^{\frac 12}|\hat\xi(p)| ,
\end{split}\]
and using now the second inequality of the previous Lemma we obtain, arguing as above,
\[
\|\xi\|_{L,+}^2 = \frac1{2L} \|\gamma_m^{1/2} \hat \xi\|_{\ell^2(\frac\pi L \ZZ)}^2 \leq \frac{4\pi C}{L} \| \xi\|_{\infty,+}^2,
\]
which shows the required equivalence of $\|\cdot\|_{L,+}$ and $\|\cdot\|_{\infty,+}$.

In order to prove the equivalence of the norms $\|\cdot\|_{L,-}$, $\|\cdot\|_{\infty,-}$, we start by observing that
\[\begin{split}
\frac{\gamma_m(p)^{1/2}}{\gamma_m(k)^{1/2}} &\leq \!C\!\left[ \frac{1+|p|^2}{1+|k|^2}\right]^{1/4} \!\!\!\leq \!C'\!\left[\frac{1+|p|}{1+|k|}\right]^{1/2} \!\!\!\leq \!C'\!\left[\frac{1+|k|+|p-k|}{1+|k|}\right]^{1/2} \!\!\!\leq \!C'(1+|k-p|)^{1/2},
\end{split}\]
for suitable constants $C, C'>0$, where in the second inequality we used the fact that the function $k \mapsto (1+|k|^2)^{1/2}/(1+|k|)$ is bounded, continuous, non-vanishing on $\RR$ and converges to 1 for $|k| \to +\infty$. We obtain then as above the inequality
\[
\frac1{\gamma_m(k)^{1/2}} |\hat \xi(k)| \leq C'\sum_{p \in \frac\pi L \ZZ} (1+|k-p|)^{1/2}|\hat\chi(k-p)| \frac{1}{\gamma_m(p)^{1/2}}|\hat \xi(p)|,
\]
and the analogous one with the integral over $p \in \RR$ replacing the sum. Using then the fact that $k \mapsto (1+|k|)^{1/2}|\hat \chi(k)|$ is in $\cS(\RR)$ the equivalence of the norms is proven by the same argument based on the previous lemma as above. \hfill$\square$
\end{proof}

\begin{lemma}\label{lem:diff2}
For $ S \Subset (-L ,L)$ there are smooth functions 
$Q_\pm$ on $(-2L,2L)$ such that, for all real functions $\xi, \eta \in \fh_{\infty,L}( S)$
\begin{align}\label{eq:Q-}
\int_\RR dk\, \frac1{\gamma_m(k)} \overline{\hat \xi(k)}\hat \eta(k) - \frac\pi L \sum_{k \in \frac\pi L \ZZ}\frac1{\gamma_m(k)} \overline{\hat \xi(k)}\hat \eta(k)  &= \int_{ S \times  S} dx dy\, Q_-(x-y) \xi(x) \eta(y),
\end{align}
\begin{align}\label{eq:Q+}
\int_\RR dk\, \gamma_m(k) \overline{\hat \xi(k)}\hat \eta(k) - \frac\pi L \sum_{k \in \frac\pi L \ZZ}\gamma_m(k) \overline{\hat \xi(k)}\hat \eta(k)  &= \int_{ S \times  S} dx dy\, Q_+(x-y) \xi(x) \eta(y).
\end{align}
\end{lemma}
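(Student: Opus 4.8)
The plan is to realize both quadratic forms on the left-hand sides of \eqref{eq:Q-}, \eqref{eq:Q+} as convolution integral operators and to identify $Q_\pm$ with the difference between the continuum kernel and its torus counterpart. Write $g_-(k)=\gamma_m(k)^{-1}$, $g_+(k)=\gamma_m(k)$, and let $G_\pm(z)=\frac1{2\pi}\int_\RR g_\pm(k)e^{ikz}\,dk$ be the inverse Fourier transform of $g_\pm$. Using that $\xi,\eta$ are real (so $\overline{\hat\xi(k)}=\int\xi(x)e^{ikx}\,dx$) and Fubini, I would first rewrite
\begin{align*}
\int_\RR dk\, g_\pm(k)\overline{\hat\xi(k)}\hat\eta(k) &= 2\pi\int_{\RR\times\RR} dx\,dy\,\xi(x)\eta(y)\,G_\pm(x-y),\\
\frac\pi L\sum_{k\in\frac\pi L\ZZ} g_\pm(k)\overline{\hat\xi(k)}\hat\eta(k) &= 2\pi\int_{\RR\times\RR} dx\,dy\,\xi(x)\eta(y)\sum_{n\in\ZZ}G_\pm(x-y+2Ln),
\end{align*}
the second line being the Poisson summation identity $\frac\pi L\sum_{k\in\frac\pi L\ZZ}g_\pm(k)e^{ikz}=2\pi\sum_{n\in\ZZ}G_\pm(z+2Ln)$, i.e.\ the torus kernel is the $2L$-periodization of the continuum kernel $2\pi G_\pm$. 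Subtracting, the singular diagonal term $n=0$ cancels and the difference is governed by
\begin{align*}
Q_\pm(z) &= -2\pi\sum_{n\neq 0}G_\pm(z+2Ln);
\end{align*}
since $\xi,\eta$ are supported in $ S$, the integration collapses to $ S\times S$, giving exactly \eqref{eq:Q-}, \eqref{eq:Q+}.

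Next I would establish the regularity and decay of $G_\pm$ that make $Q_\pm$ smooth. Both $g_\pm(k)=\gamma_m(k)^{\pm1}$ extend to analytic functions on the strip $\{|\Im k|<m\}$, whose nearest singularities are the branch points $k=\pm im$ of $\sqrt{k^2+m^2}$, and they grow at most polynomially there. Shifting the contour of integration then shows that $G_\pm$ is smooth for $z\neq 0$ and that $G_\pm$ together with all its derivatives decays like $e^{-m'|z|}$ for every $m'<m$. Concretely $G_-(z)=\frac1\pi K_0(m|z|)$, and since $\gamma_m(k)=(k^2+m^2)\gamma_m(k)^{-1}$ one has $G_+=(m^2-\partial_z^2)G_-$ away from the origin, so $G_+$ inherits the same smoothness and exponential decay on $z\neq 0$.

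Finally I would assemble $Q_\pm$ and check the claimed smoothness on $(-2L,2L)$. For $z\in(-2L,2L)$ and $n\neq 0$ the argument $z+2Ln$ stays bounded away from $0$, so each summand $G_\pm(z+2Ln)$ is smooth, and the exponential decay of $G_\pm$ and its derivatives forces the series $-2\pi\sum_{n\neq0}G_\pm(z+2Ln)$ to converge in $C^\infty$ uniformly on compact subsets of $(-2L,2L)$; hence $Q_\pm\in C^\infty((-2L,2L))$, its only singularities sitting at $z\in 2L\ZZ\setminus\{0\}$. The hypothesis $ S\Subset(-L,L)$ is used precisely here: it confines $x-y$ to a compact subset of $(-2L,2L)$, so $Q_\pm(x-y)$ is always evaluated in the smooth regime. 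I expect the main obstacle to be the rigorous justification of the Poisson summation and the Fubini interchanges for $g_+=\gamma_m$, which grows at infinity: this is where the regularity $K\ge 6$ of the Daubechies scaling functions enters, since $\xi,\eta\in\fh_{\infty,L}( S)$ are finite combinations of the $\phi^{(\vep_N)}_x$ and their resulting polynomial Fourier decay guarantees that $h(k)=g_\pm(k)\overline{\hat\xi(k)}\hat\eta(k)$ is integrable with summable Fourier samples $\hat h(2mL)=2\pi\iint \xi(x)\eta(y)\,G_\pm(x-y-2mL)\,dx\,dy$, legitimizing the term-by-term manipulation.
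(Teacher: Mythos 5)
Your proposal is correct and follows essentially the same route as the paper: Poisson summation identifies the torus kernel with the $2L$-periodization of the continuum kernel $2\pi G_\pm$ (explicitly $G_-(z)=\tfrac1\pi K_0(m|z|)$ and $G_+=(m^2-\partial_z^2)G_-$, i.e.\ the $-\tfrac m{|z|}K_1(m|z|)$ kernel), the singular $n=0$ term cancels in the difference, and the exponential decay of $K_0$, $K_1$ and their derivatives gives $C^\infty$ convergence of the remaining series on $(-2L,2L)$. One small correction: the hypothesis $K\ge 6$ is not what legitimizes the Poisson summation here --- the Fourier decay $|\hat\phi(k)|\le C(1+|k|)^{-\rho}$ with $\rho>1$ available already for $K\ge 2$ suffices for the integrability and summability you need, while $K\ge 6$ is only required elsewhere (for the $H^{1/2}$ convergence in the density lemma).
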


\begin{proof}
By \cite[Formula 9.6.21]{AbramowitzStegun}
\[
\int_\RR dk\, \frac1{\gamma_m(k)} \overline{\hat \xi(k)}\hat \eta(k) = 2\int_{ S \times  S}dxdy\,K_0(m|x-y|)\xi(x)\eta(y)
\]
where $K_0(z)$, $z > 0$, is the modified Bessel function of order zero, which is positive, smooth for $z \neq 0$  and satisfies
\[
K_0(z) \leq e^{-z}
\]
for $z$ sufficiently large.  Moreover, the Poisson summation formula claims that
$$\tfrac{\pi}{L}\sum_{k\in\ZZ} \delta\left(x-k\tfrac{\pi}{L}\right)=\sum_{k\in\mathbb Z} e^{i2L  k x}$$
in the distributional sense, and therefore, 
since $k \mapsto  \frac1{\gamma_m(k)} \overline{\hat \xi(k)}\hat \eta(k)$ decreases fast enough for $|k| \to +\infty$ \cite[Corollary VII.~2.6]{SteinWeiss}, 
\[\begin{split}
\tfrac{\pi}{L}\!\!\!\sum_{k \in \frac\pi L \ZZ}\!\!\!\frac{\overline{\hat \xi(k)}\hat \eta(k)}{\gamma_m(k)} &= 
\sum_{n \in \ZZ}\!\ \int\limits_{\RR}\!dk\,\frac{\overline{\hat \xi(k)}\hat \eta(k)}{\gamma_m(k)} e^{-i2Lnk} = 2\!\!\!\int\limits_{ S \times  S}\!\!\!dxdy\,\bigg[\sum_{n \in \ZZ} K_0(m|x-y-2Ln|)\bigg]\xi(x) \eta(y)
\end{split}\]
where the sum inside the integral converges uniformly on $ S \times  S$ thanks to the exponential decay of $K_0$. Therefore, looking at \eqref{eq:Q-} we find the following expression for $Q_-$
\[
Q_-(x) = \sum_{n \in \ZZ\setminus\{0\}} 2 \,K_0(m|x-2Ln|)
\]
The series is term by term smooth in $(-2L,2L)$ since it does not contain the term with $n=0$. Moreover, one can see that the  $s$-th derivative of $K_0$ can be written in terms of $K_0$ and $K_0'$, by the following formulas (cf.~\cite[sec. 9.6]{AbramowitzStegun}) 
\[
K_0'(z) = - K_1(z),\qquad    z^2 K''_0(z)+zK_0(z)'-z^2K_0(z)=0\]
and one gets therefore
\[\qquad K_0^{(s)}(z) = P_{s-2}(1/z) K_0(z) + R_{s-1}(1/z) K_0'(z), \quad s \geq 2,
\]
where $P_s$, $R_s$ are polynomials.  As a consequence, since also $|K_1(z)|\leq e^{-z}$ for $z\rightarrow +\infty$, $K_0^{(s)}$ is exponentially decaying for $z \to +\infty$ too, and therefore the term by term $s$-th derivative of the series defining $Q_-$ is uniformly convergent on $ S\times S$, showing that $Q_-$ is smooth on $(-2L,2L)$.

Concerning~\eqref{eq:Q+}, by the argument in the proof of \cite[Lemma A.5]{ContiMorsella}, there holds
\[
\int_\RR dk\, \gamma_m(k) \overline{\hat \xi(k)}\hat \eta(k) = -\int_{ S \times  S} dxdy\, \frac m{|x-y|} K_1(m|x-y|) \xi(x)\eta(y),
\] 
and therefore, by a similar argument as the above one,
\[
Q_+(x) = -\sum_{n \in \ZZ\setminus\{0\}} \frac m{|x-2Ln|}K_1(m|x-2Ln|)
\]
is smooth in $(-2L,2L)$. \hfill$\square$
\end{proof}

\begin{theorem}\label{thm:quasiequivalence}
Let $K \ge 6$.
For each bounded open set $ S \Subset (-L,L)^d$, the finite volume representation $\pi^{(\infty)}_{L,\infty} \cong \pi_L$ of $\cW_{\infty, L}( S)$ is unitarily equivalent to the infinite volume representation $\pi_{\infty}$.
\end{theorem}
\begin{proof}
Assume $d=1$. Consider the completion of the space of real functions in $\fh_L( S)$ (thought as compactly supported functions on $\RR$) with respect to  the norms $\|\cdot \|_{\infty,\pm}$, and, on these spaces, the operators $Q_\pm$ defined by
\[
\langle \xi, Q_\pm \eta \rangle_{\infty,\pm} = \int_{ S \times  S} dxdy\,Q_\pm(x-y) \xi(x) \eta(y).
\]
Then using Lemma \ref{lem:diff2} and the argument of the proofs of Lemmas A.4 and A.6 of \cite{ContiMorsella}, one verifies that $Q_\pm$
are trace-class, and this, together with Lemma~\ref{lem:normequivalence}, implies the quasiequivalence statement by \cite{ArakiYamagami}.

Now, as we take $K\ge 6$ and the density result (Lemma \ref{lem:density}) is a local property and hence
holds also for $L=\infty$, we have $\pi_{\infty}(\cW_{\infty, L}( S))'' = \pi_{\infty}(\cW_{L}( S))''$, and the latter is a type III factor
\cite{Araki64}.
Then also $\pi_{L,\infty}^{(\infty)}(\cW_{\infty, L}( S))''$ is a type III factor. 
Therefore the last statement follows directly form \cite[Sect. 5.6.6]{DixmierC}.

If $d > 1$, Lemmas~\ref{lem:young} and \ref{lem:normequivalence} and their proofs continue to hold with the obvious modifications. Concerning lemma~\ref{lem:diff2}, the smoothness of the kernels $Q_\pm$ around $x = 0$ can be proven by \cite[IX.46]{ReedMethodsOfModern2}, providing the smoothness of the infinite volume 2-point function in spatial directions away from the origin, and by a Payley-Wiener argument, providing its exponential decay. \hfill$\square$
\end{proof}

\begin{remark}\label{rmk:consistence}
The spatial isomorphism defined in Theorem \ref{thm:quasiequivalence} depends in principle on $ S$ and $L$ with $ S \Subset (-L,L)^d$, and should be denoted by $\theta_{ S,L}$. However for $\xi \in \fh_L( S)$ there holds
$$\theta_{ S, L}(\pi_{L}(W_{\mathrm{ct}}(\xi)))=\pi_{\infty}(W_{\mathrm{ct}}(\xi)),$$
where on the left side $\xi$ is considered as a function on $\TT^d_L$ and on the right side as a  compactly supported function on $\RR^d$. As a consequence, given tori $\TT^d_L$ and $\TT^d_{L'}$ with $L<L'$ and $ S\Subset(-L,L)^d$, then 
$$\theta_{ S,L}(\pi_{L}(\cW_L( S))'')=\theta_{ S,L'}(\pi_{L'}(\cW_{L'}( S))'').$$ 
Similarly, if $ S \subset  S' \Subset (-L,L)^d$ then the consistency property $\theta_{ S',L} \upharpoonright \pi_{L}(\cW_L( S))'' = \theta_{ S,L}$ holds. Because of this, we will use the simplified notation $\theta$ when no confusion can arise.
\end{remark}

In view of the above remark, it also follows that $\theta$ locally intertwines the action of space translations on the torus and on $\RR^d$, namely
\[
{ \theta \Ad U_L(a,0) \upharpoonright \cA_L( S) = \Ad U_\infty(a,0) \,\theta \upharpoonright \cA_L( S),}
\]
where $ S \subset \RR^d$ is a bounded open set, $a\in\RR^d$ and $L$ is large enough that the closures of $ S$ and $ S+a$ are both contained in $(-L,L)^d$.

A similar statement also holds for time translations. Indeed, in view of the fact that the solutions of the Klein-Gordon equation on the torus have propagations speed $c=1$, with $\cB_1(0)$ the unit ball centered at the origin, there holds 
\[
\Ad U_L(0,t) (\cA_L( S)) = \cA_L( S + |t| \cB_1(0))
\] 
whenever $ S$ is a bounded open set, $t\in\RR$ and $L$ is large enough that $ S$ and $ S+t B_1(0)$  are both contained in $(-L,L)^d$. This clearly entails
$$
\theta\,\mathrm{Ad}\, U_L(0,t)\upharpoonright \cA_L( S) = \mathrm{Ad}\,U_\infty(0,t)\, \theta \upharpoonright \cA_L( S).
$$

\section{Other scaling-map schemes}
\label{sec:otherscaling}
From the perspective of the wavelet scaling map of Section \ref{sec:waveletscaling} being a generalization of the block-spin scaling map, we discuss the relations between other scaling map schemes in this section: In Section \ref{sec:blockspinlim}, further details on the block-spin case.
In Section \ref{sec:real}, a singular real-space scheme leading to point-like localization of lattice operators in the scaling limit.
In Section \ref{sec:mom}, a momentum-space scheme implementing sharp momentum cutoff. In Section \ref{sec:irvsuv}, the distinction between our scaling limit construction and the scaling algebra approach by Buchholz and Verch \cite{BuchholzScalingAlgebrasAnd1, BuchholzScalingAlgebrasAnd2}. In Section \ref{sec:MERA}, the connection of our approach with the multi-scale entanglement renormalization ansatz \cite{VidalAClassOf, EvenblyAlgorithmsForEntanglement}.

The second and third of these schemes can be formulated analogously to the wavelet scaling map at the one-particle level as in \eqref{eq:waveletscale} in Definition \ref{def:waveletscaling}. The main difference consists in the use of low-pass filters $\{h_{n}\}_{n\in\ZZ^{d}}$ concentrated at $n=0$ (point-like localization) or with scale-dependent full support, $h_{n} = h^{(N)}_{n}$, $\supp h^{(N)} = \ZZ^{d}_{2r_{N}}$ (sharp momentum cutoff). In contrast with the wavelet scaling map this results in the drawback that these schemes do not directly lead to local algebras in terms of lattice operators in the scaling limit. More specifically,
the block spin renormalization does not have enough regularity to include the momentum operator, 
point-like localization leads to sharply localized field-like objects and the scaling limit of initial states can only be defined in a distributional sense necessitating a Wightman reconstruction, while sharp momentum cutoff leads to a scaling limit in terms of fully non-local algebras also requiring an indirect reconstruction of local operators.

\subsection{The block-spin transformation case}
\label{sec:blockspinlim}
The expression \eqref{eq:freegroundlimit} does not define limit states $\omega^{(N)}_{L,\infty}$ on $\cW_{N,L}$ if the renormalization group is defined by the block-spin scaling map \eqref{eq:blockspinscale}. Nevertheless, it is still possible to interpret \eqref{eq:freegroundlimit} in a distributional sense. To this end, we exploit the fact that the states $\omega_{\mu_{N+M},M} = \omega^{(N)}_{L,M}$ are quasi-free and, thus, are determined by their two-point functions:
\begin{align*}
W^{(N)}_{L,M|\Phi\Phi}(x,y)& = \omega^{(N)}_{L,M}(\alpha^{N}_{N+M}(\vep_{N}^{-\frac{1+d}{2}}\Phi_{N}(x)\vep_{N}^{-\frac{1+d}{2}}\Phi_{N}(y))) \\
 & = \tfrac{1}{2 (2r_{N})^{d}}\hspace{-0.2cm}\sum_{k\in\Gamma_{N+M}}\hspace{-0.2cm}\vep_{N}^{-1}\gamma_{\mu_{N+M}}(k)^{-1} \\[-0.3cm]
 & \hspace{2.75cm}\vep_{N+M}^{2d}\hspace{-0.4cm}\sum_{x',y'\in\Lambda_{N+M}}\hspace{-0.4cm}\vep_{N}^{-d}\chi_{[0,\vep_{N})^{d}}(x'-x)\vep_{N}^{-d}\chi_{[0,\vep_{N})^{d}}(y'-y)e^{ik\cdot(x'-y')}, \\[0.1cm]
W^{(N)}_{L,M|\Pi\Pi}(x,y)& = \omega^{(N)}_{L,M}(\alpha^{N}_{N+M}(\vep_{N}^{\frac{1-d}{2}}\Pi_{N}(x)\vep_{N}^{\frac{1-d}{2}}\Pi_{N}(y))) \\
 & = \tfrac{1}{2(2r_{N})^{d}}\hspace{-0.2cm}\sum_{k\in\Gamma_{N+M}}\hspace{-0.2cm}\vep_{N}\gamma_{\mu_{N+M}}(k) \\[-0.3cm]
 & \hspace{2.75cm}\vep_{N+M}^{2d}\hspace{-0.4cm}\sum_{x',y'\in\Lambda_{N+M}}\hspace{-0.4cm}\vep_{N}^{-d}\chi_{[0,\vep_{N})^{d}}(x'-x)\vep_{N}^{-d}\chi_{[0,\vep_{N})^{d}}(y'-y)e^{ik\cdot(x'-y')}, \\[0.1cm]
W^{(N)}_{L,M|\Phi\Pi}(x,y)& = \omega^{(N)}_{L,M}(\alpha^{N}_{N+M}(\vep_{N}^{-\frac{1+d}{2}}\Phi_{N}(x)\vep_{N}^{\frac{1-d}{2}}\Pi_{N}(y))) \\
 & = \tfrac{i}{2 (2r_{N})^{d}}\hspace{-0.1cm}\sum_{k\in\Gamma_{N+M}}\hspace{-0.1cm}\vep_{N+M}^{2d}\hspace{-0.4cm}\sum_{x',y'\in\Lambda_{N+M}}\hspace{-0.4cm}\vep_{N}^{-d}\chi_{[0,\vep_{N})^{d}}(x'-x)\vep_{N}^{-d}\chi_{[0,\vep_{N})^{d}}(y'-y)e^{ik\cdot(x'-y')},
\end{align*}
But, these expression are well-defined for arbitrary $x,y\in\TT_{L}^{d}$ and consistent with respect to the inclusions $\Lambda_{N}\subset\Lambda_{N'}\subset\TT_{L}^{d}$ for $N<N'$ as a consequence of the scaling equation \eqref{eq:unitscalingeq}. Then, as a consequence of the Paley-Wiener-Schwartz characterization of the (continuum) Fourier transform of functions $f,f'\in C^{\infty}(\TT_{L}^{d})$, the expression
\begin{align*}
W^{(N)}_{L,M|\Phi\Phi}(f,f') & = \tfrac{1}{2 (2r_{N})^{d}}\hspace{-0.2cm}\sum_{k\in\Gamma_{N+M}}\hspace{-0.2cm}\vep_{N}^{-1}\gamma_{\mu_{N+M}}(k)^{-1} \\[-0.35cm]
 & \hspace{2cm}\vep_{N+M}^{2d}\hspace{-0.4cm}\sum_{x',y'\in\Lambda_{N+M}}\hspace{-0.4cm}\vep_{N}^{-d}(\chi_{[0,\vep_{N})^{d}}\ast f)(x')\vep_{N}^{-d}(\chi_{[0,\vep_{N})^{d}}\ast f')(y')e^{ik\cdot(x'-y')}, \\
W^{(N)}_{L,M|\Pi\Pi}(f,f') & = \tfrac{1}{2 (2r_{N})^{d}}\hspace{-0.2cm}\sum_{k\in\Gamma_{N+M}}\hspace{-0.2cm}\vep_{N}\gamma_{\mu_{N+M}}(k) \\[-0.35cm]
 & \hspace{2cm}\vep_{N+M}^{2d}\hspace{-0.4cm}\sum_{x',y'\in\Lambda_{N+M}}\hspace{-0.4cm}\vep_{N}^{-d}(\chi_{[0,\vep_{N})^{d}}\ast f)(x')\vep_{N}^{-d}(\chi_{[0,\vep_{N})^{d}}\ast f')(y')e^{ik\cdot(x'-y')}, \\
W^{(N)}_{L,M|\Phi\Pi}(f,f') & = \tfrac{i}{2 (2r_{N})^{d}}\hspace{-0.2cm}\sum_{k\in\Gamma_{N+M}}\hspace{-0.2cm}\vep_{N+M}^{2d}\hspace{-0.4cm}\sum_{x',y'\in\Lambda_{N+M}}\hspace{-0.4cm}\vep_{N}^{-d}(\chi_{[0,\vep_{N})^{d}}\ast f)(x')\vep_{N}^{-d}(\chi_{[0,\vep_{N})^{d}}\ast f')(y')e^{ik\cdot(x'-y')},
\end{align*}
admit well-defined limits for $M\rightarrow\infty$ (by a Riemann-sum argument):
\begin{align*}
W^{(N)}_{L,\infty|\Phi\Phi}(f,f') & = \tfrac{1}{2(2L)^{d}}\hspace{-0.2cm}\sum_{k\in\frac{\pi}{L}\ZZ^{d}}\hspace{-0.2cm}\vep_{N}^{-1+d}\gamma_{m}(k)^{-1} \\[-0.35cm]
 & \hspace{2cm}\int_{\RR^{2d}}\hspace{-0.4cm}d^{d}x'd^{d}y'\vep_{N}^{-d}(\chi_{[0,\vep_{N})^{d}}\ast f)(x')\vep_{N}^{-d}(\chi_{[0,\vep_{N})^{d}}\ast f')(y')e^{ik\cdot(x'-y')}, \\
W^{(N)}_{L,\infty|\Pi\Pi}(f,f') & = \tfrac{1}{2 (2L)^{d}}\hspace{-0.2cm}\sum_{k\in\frac{\pi}{L}\ZZ^{d}}\hspace{-0.2cm}\vep_{N}^{1+d}\gamma_{m}(k) \\[-0.35cm]
 & \hspace{2cm}\int_{\RR^{2d}}\hspace{-0.4cm}d^{d}x'd^{d}y'\vep_{N}^{-d}(\chi_{[0,\vep_{N})^{d}}\ast f)(x')\vep_{N}^{-d}(\chi_{[0,\vep_{N})^{d}}\ast f')(y')e^{ik\cdot(x'-y')}, \\
W^{(N)}_{L,\infty|\Phi\Pi}(f,f') & = \tfrac{i}{2 (2L)^{d}}\hspace{-0.2cm}\sum_{k\in\frac{\pi}{L}\ZZ^{d}}\hspace{-0.2cm}\vep_{N}^{d}\hspace{-0.15cm}\int_{\RR^{2d}}\hspace{-0.2cm}d^{d}x'd^{d}y'\vep_{N}^{-d}(\chi_{[0,\vep_{N})^{d}}\ast f)(x')\vep_{N}^{-d}(\chi_{[0,\vep_{N})^{d}}\ast f')(y')e^{ik\cdot(x'-y')}.
\end{align*}
Finally, we can take the limit $N\rightarrow\infty$, and we obtain the scaling limit of the two-point functions (after adjusting the scaling dimension),
\begin{align}
\label{eq:cyl2ptff}
W^{(\infty)}_{L,\infty|\Phi\Phi}(f,f') & = \lim_{N\rightarrow\infty}\vep_{N}^{1-d}W^{(N)}_{m,\infty|\Phi\Phi}(f,f') = \tfrac{1}{2(2L)^{d}}\hspace{-0.2cm}\sum_{k\in\frac{\pi}{L}\ZZ^{d}}\hspace{-0.2cm}\gamma_{m}(k)^{-1}\hat{f}(-k)\hat{f'}(k),\\
\label{eq:cyl2ptmm}
W^{(\infty)}_{L,\infty|\Pi\Pi}(f,f') & = \lim_{N\rightarrow\infty}\vep_{N}^{-(1+d)}W^{(N)}_{m,\infty|\Pi\Pi}(f,f') = \tfrac{1}{2(2L)^{d}}\hspace{-0.2cm}\sum_{k\in\frac{\pi}{L}\ZZ^{d}}\hspace{-0.2cm}\gamma_{m}(k)\hat{f}(-k)\hat{f'}(k), \\
\label{eq:cyl2ptfm}
W^{(\infty)}_{L,\infty|\Phi\Pi}(f,f') & = \lim_{N\rightarrow\infty}\vep_{N}^{-d}W^{(N)}_{m,\infty|\Phi\Pi}(f,f') = \tfrac{i}{2(2L)^{d}}\hspace{-0.2cm}\sum_{k\in\frac{\pi}{L}\ZZ^{d}}\hspace{-0.2cm}\hat{f}(-k)\hat{f'}(k),
\end{align}
as distributions over $C^{\infty}(\TT_{L}^{d})$ (with respect to the weak* topology), which are the Wightman two-point functions $W_{L}$ of the free vacuum state $\omega_{L}$ of continuum scalar field on $\TT_{L}^{d}$.

In summary, we find that, although the block-spin scaling map does not define a scaling limit on the algebra $\cW_{\infty, L}$, it is still possible to recover the vacuum state of mass $m$ of the continuum scalar field via its Wightman two-point functions $W^{(\infty)}_{\infty,L}$ defined as scaling limits of the two-point functions $W^{(N)}_{L,M}$ associated with $\omega^{(N)}_{L,M}$ on $\cW_{N,L}$. We state this as a theorem:
\begin{theorem}
\label{thm:blockspinscale}
Given a sequence of lattice ``masses'', $\{\mu_{N}\}_{N}$, satisfying \eqref{eq:Latmassscaling} for some $m>0$ and the block-spin scaling map, the sequences of two-point functions $\{W^{(N)}_{L,M}\}_{M\in\NN_{0}}$ associated with the states $\{\omega^{(N)}_{L,M}\}_{M\in\NN_{0}}$ on $\cW_{N,L}$ for all scales $N\in\NN_{0}$, converge in a scale-coherent way to the free (time-zero) two-point functions of mass $m$ of the continuum scalar field $(\cW_{L},\omega_{L})$ (see Section \ref{subsec:cont}).
\end{theorem}
\begin{proof}
It is well-known that \eqref{eq:cyl2ptff} -- \eqref{eq:cyl2ptfm} are the free (time-zero) two-point functions of mass $m$ over $C^{\infty}(\TT_{L}^{d})$. \hfill$\square$
\end{proof}

\subsection{Real-space renormalization: Point-like localization}
\label{sec:real}
The point-localization renormalization group is a precise formulation of the idea that the algebras $\fA_{N}(x)$, $x\in\Lambda_{N}$, correspond to sharply localized operators at $x\in\RR^{d}$ in the continuum. The definition of point-like localized operators is achieved at the level of lattice fields $\Phi_{N}$ and momenta $\Pi_{N}$ by the natural inclusion of lattices $\Lambda_{N}\subset\Lambda_{N'}$ for $N<N'$ as subset of $\RR^{d}$.

\begin{definition}
\label{def:pointlikescaling}
The point-localization renormalization group $\{\alpha^{N}_{N'}\}_{N<N'}$ is the inductive family of $^*$-linear maps defined by the point-localization scaling map between one-particle Hilbert spaces:
\begin{align*}
R^{N}_{N+1} : \fh_{N,L} & \longrightarrow \fh_{N+1,L},
\end{align*}
where
\begin{align}
\label{eq:pointlikescale}
R^{N}_{N+1}(q,p)(x') & = 2^{d}(q,p)(x')\chi_{\Lambda_{N}\subset\Lambda_{N+1}}(x'), & N & \in\NN_{0}
\end{align}
and
\begin{align*}
R^{N}_{N'} & = R^{N'-1}_{N'}\circ R^{N'-2}_{N'-1}\circ ... \circ R^{N}_{N+1}, & N & < N'.
\end{align*}
$\Lambda_{N}\subset\Lambda_{N+1}$ is the canonical inclusion as subsets of $\RR^{d}$, and $\chi_{\Lambda_{N}\subset\Lambda_{N+1}}$ is the corresponding characteristic function. The right-hand side of \eqref{eq:pointlikescale} is to be understood as an extension of $q,p$ by the zero-function on $\Lambda_{N+1}\setminus\Lambda_{N}$.
\end{definition}
Applying the lattice Fourier transform and its inverse, we find the point-localization scaling map is given by periodic extension in momentum space:
\begin{align*}
R^{N}_{N+1}(\hat{q},\hat{p})(k') & = \vep_{N+1}^{\frac{d}{2}}\!\!\!\sum_{x'\in\Lambda_{N+1}}\!\!\!2^{d}(q,p)(x')\chi_{\Lambda_{N}\subset\Lambda_{N+1}}(x')e^{-ik'\cdot x'} 
 = 2^{\frac{d}{2}}(\hat{q},\hat{p})(k').
\end{align*}
In contrast with the previous scaling maps and as a consequence of the rescaling by $2^{d}$ in \eqref{eq:pointlikescale}, we find that $R^{N}_{N'}$ is only a symplectic homothety for $N<N'$:
\begin{align*}
\sigma_{N'}\circ(R^{N}_{N'}\times R^{N}_{N'}) & = 2^{d(N'-N)}\sigma_{N}.
\end{align*}
Therefore, the induced map $\alpha^{N}_{N'}:\cW_{N,L}\rightarrow\cW_{N',L}$ is a $^*$-compatible linear map but not a $^*$-homomorphism. Nevertheless, we can rescale the symplectic form $\sigma_{N}$ determining the Weyl relations to show that for a given $N\in\NN_{0}$ the pullbacks of ground states $\omega_{\mu_{N'}}$ along $\alpha^{N}_{N'}$ for $N<N'$, which we use to define the scaling limit, yield a sequence of states on $\cW_{N,L}$.

\begin{lemma}
\label{lem:realscale}
Given a symplectic homothety $R^{N}_{N'}:\fh_{N,L}\rightarrow\fh_{N',L}$ for $N<N'$, the induced $^*$-linear map $\alpha^{N}_{N'}:\cW_{N,L}\rightarrow\cW_{N',L}$ becomes a $^*$-homomorphism after rescaling the symplectic form according to:
\begin{align*}
\tilde{\sigma}_{N'} & = \lambda_{N,N'}^{-1}\sigma_{N'},
\end{align*}
where $\lambda_{N,N'}\neq0$ is the scaling factor of $R^{N}_{N'}$, i.e. $\sigma_{N'}\circ(R^{N}_{N'}\times R^{N}_{N'}) = \lambda_{N,N'}\sigma_{N}$
\end{lemma}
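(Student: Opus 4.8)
The plan is to reduce the statement to Proposition \ref{prop:secondquantization} by exhibiting $R^{N}_{N'}$ as a \emph{genuine} symplectic map once the target one-particle space is equipped with the rescaled form $\tilde\sigma_{N'} = \lambda_{N,N'}^{-1}\sigma_{N'}$. First I would recall that $\alpha^{N}_{N'}$ is the $^*$-linear map determined by $\alpha^{N}_{N'}(W_{N}(\xi)) = \tilde W_{N'}(R^{N}_{N'}\xi)$, where $\tilde W_{N'}$ denote the generators of $\cW(\fh_{N',L},\tilde\sigma_{N'})$, subject to the relations $\tilde W_{N'}(\zeta)\tilde W_{N'}(\zeta') = e^{-\frac i2\tilde\sigma_{N'}(\zeta,\zeta')}\tilde W_{N'}(\zeta+\zeta')$. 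Since $\tilde\sigma_{N'}$ differs from the nondegenerate form $\sigma_{N'}$ only by the nonzero factor $\lambda_{N,N'}^{-1}$, it is itself a nondegenerate symplectic form, so $\cW(\fh_{N',L},\tilde\sigma_{N'})$ is a bona fide Weyl algebra in the sense of Section \ref{sec:quant}.

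The key computation is then immediate from the homothety property $\sigma_{N'}(R^{N}_{N'}\xi,R^{N}_{N'}\eta) = \lambda_{N,N'}\sigma_{N}(\xi,\eta)$: one finds
\[
\tilde\sigma_{N'}(R^{N}_{N'}\xi,R^{N}_{N'}\eta) = \lambda_{N,N'}^{-1}\sigma_{N'}(R^{N}_{N'}\xi,R^{N}_{N'}\eta) = \sigma_{N}(\xi,\eta),
\]
so that $R^{N}_{N'}$ is symplectic as a map $(\fh_{N,L},\sigma_{N}) \to (\fh_{N',L},\tilde\sigma_{N'})$; it is moreover injective, because a form-preserving map into a space carrying a nondegenerate form has trivial kernel. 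Proposition \ref{prop:secondquantization} then applies verbatim and yields that $\alpha^{N}_{N'}$ is an injective $^*$-homomorphism into $\cW(\fh_{N',L},\tilde\sigma_{N'})$. Equivalently, I would verify the homomorphism property directly, checking that the phase $e^{-\frac i2\tilde\sigma_{N'}(R^{N}_{N'}\xi,R^{N}_{N'}\eta)}$ produced by multiplying the two images collapses exactly to the phase $e^{-\frac i2\sigma_{N}(\xi,\eta)}$ of the source relation, and that $W_{N}(\xi)^{*} = W_{N}(-\xi)$ is respected by linearity of $R^{N}_{N'}$.

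The only point deserving care — rather than a genuine obstacle — is to confirm that passing to $\tilde\sigma_{N'}$ is a harmless modification, i.e.\ that $\cW(\fh_{N',L},\tilde\sigma_{N'})$ is $^*$-isomorphic to the original $\cW_{N',L}$ and not a new algebra. For $\lambda_{N,N'} > 0$, which is the relevant case here since $\lambda_{N,N'} = 2^{d(N'-N)}$ by \eqref{eq:pointlikescale}, this follows once more from Proposition \ref{prop:secondquantization} applied to the scalar rescaling $\zeta \mapsto \lambda_{N,N'}^{-1/2}\zeta$, which intertwines $\tilde\sigma_{N'}$ and $\sigma_{N'}$ and hence induces a $^*$-isomorphism $\cW(\fh_{N',L},\tilde\sigma_{N'}) \cong \cW_{N',L}$. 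I expect no real difficulty here: the content of the lemma is simply that a symplectic homothety becomes a symplectic map by absorbing its scaling factor into the symplectic form of the target, after which the functorial assignment of Proposition \ref{prop:secondquantization} does all the work.
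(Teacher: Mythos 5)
Your proof is correct and follows essentially the same route as the paper: the whole content is the one-line verification that $\tilde\sigma_{N'}\circ(R^{N}_{N'}\times R^{N}_{N'}) = \lambda_{N,N'}^{-1}\lambda_{N,N'}\sigma_{N} = \sigma_{N}$, after which Proposition \ref{prop:secondquantization} (compatibility with the Weyl relations) gives the $^*$-homomorphism property. Your additional remarks on injectivity and on the $^*$-isomorphism $\cW(\fh_{N',L},\tilde\sigma_{N'})\cong\cW_{N',L}$ via $\zeta\mapsto\lambda_{N,N'}^{-1/2}\zeta$ are correct but not needed for the statement as posed.
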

\begin{proof}
We only need to check the compatibility of $R^{N}_{N'}$ with the rescaled symplectic form $\tilde{\sigma}_{N'}$ as this implies the compatibility of $\alpha^{N}_{N'}$ with the Weyl relations \eqref{eq:CCR}:
\begin{align*}
\tilde{\sigma}_{N'}\circ(R^{N}_{N'}\times R^{N}_{N'}) & = \lambda_{N,N'}^{-1}\sigma_{N'}\circ(R^{N}_{N'}\times R^{N}_{N'}) = \lambda_{N,N'}^{-1}\lambda_{N,N'}\sigma_{N} = \sigma_{N}.
\end{align*}
\hfill$\square$
\end{proof}
We also recall a general result of Segal about states on Weyl algebras.
\begin{remark}[Scaled Fock states]
\label{rem:scaledfock}
We know from sections \ref{sec:latscalar} and \ref{sec:latvac}, that the symplectic structure $\sigma_{N}$ can be expressed as:
\begin{align*}
\sigma_{N}((q,p),(q',p')) & = -2\Im(\hat{\xi}^{(m)},\hat{\xi'}^{(m)})_{N}.
\end{align*}
Moreover, since $\fh_{N,L}\cong\CC^{2r_{N}}$ is finite dimensional, we know that
\begin{align*}
\left\|\hat{\xi}^{(m)}\right\|^{2}_{N} & = \tfrac{1}{2(2r_{N})^{d}}\sum_{k\in\Gamma_{N}}\left(\gamma^{(N)}_{m}(k)^{-1}|\hat{q}(k)|^{2} + \gamma^{(N)}_{m}(k)|\hat{p}(k)|^{2}\right), & \xi & \in\fh_{N,L},
\end{align*}
defines non-degenerate (for $m>0$) norm equivalent to $\|\ \cdot \ \|_{N}$. Thus, by a general theorem of Segal \cite{SegalMathematicalCharacterizationOf}, we have a one-parameter family of regular states on the Weyl algebra $\cW_{N,L}$:
\begin{align*}
\omega^{(\lambda_{N})}_{m,0}(W_{N}(\xi)) & = e^{-\frac{1}{2}\lambda_{N}\left\|\hat{\xi}^{(m)}\right\|^{2}_{N}}, & \lambda_{N} & \geq1.
\end{align*}
The lattice vacuum $\omega_{\mu_{N}}$ results from the extremal choice $\lambda_{N}=1$ and $\mu_{N}^{2} = \vep_{N}^{2}m^{2}+2d$, cp.~\eqref{eq:freeground} and \eqref{eq:freegroundmass}.
\end{remark}

From the lemma and the remark we infer the result: 
\begin{corollary}
\label{cor:realscale}
For $N,M\in\mathds{N}_{0}$, the pullback $\omega_{\mu_{N+M},M} = \omega_{\mu_{N+M}}\circ\alpha^{N}_{N+M}$ defines state on $\cW_{N,L}$ explicitly given by:
\begin{align*}
\omega_{\mu_{N+M}}(\alpha^{N}_{N+M}(W_{N}(\xi))) & = e^{-\frac{1}{4(2r_{N})^{d}}\sum_{k'\in\Gamma_{N+M}}\left(\gamma_{\mu_{N+M}}(k')^{-1}|\hat{q}(k')|^{2} + \gamma_{\mu_{N+M}}(k')|\hat{p}(k')|^{2}\right)},
\end{align*}
where by a slight abuse of notation $\hat{q}$ and $\hat{p}$ also denote their periodic extensions from $\Gamma_{N}$ to $\Gamma_{N+M}$.
\end{corollary}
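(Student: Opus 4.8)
The plan is to prove the two assertions separately: that $\omega_{\mu_{N+M},M}=\omega_{\mu_{N+M}}\circ\alpha^N_{N+M}$ is a genuine state on $\cW_{N,L}$, and that it has the displayed Gaussian form. For the explicit formula I would start from the closed expression \eqref{eq:freeground} for the lattice ground state $\omega_{\mu_{N+M}}$ on $\cW_{N+M,L}$ and feed in the momentum-space description of the point-localization map computed right after Definition \ref{def:pointlikescaling}. Iterating that single-step computation gives $R^N_{N+M}(\hat q,\hat p)=2^{\frac{dM}{2}}(\hat q,\hat p)$ on $\Gamma_{N+M}$, where $(\hat q,\hat p)$ on the right denotes the periodic extension from $\Gamma_N$ to $\Gamma_{N+M}$. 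Thus $\alpha^N_{N+M}(W_N(\xi))=W_{N+M}(R^N_{N+M}\xi)$ is evaluated by \eqref{eq:freeground} at this rescaled, periodically extended one-particle vector.

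The heart of the formula is then pure bookkeeping of scale factors. Writing out $\|\gamma_{\mu_{N+M}}^{\mp 1/2}R^N_{N+M}(\hat q,\hat p)\|^2_{\fh_{N+M,L}}$ with the norm \eqref{eq:hilLatft}, the factor $2^{dM}$ coming from $|R^N_{N+M}\xi|^2$ meets the prefactor $(2r_{N+M})^{-d}=2^{-dM}(2r_N)^{-d}$, the latter identity holding because $r_{N+M}=2^{M}r_N$. These two powers of $2^{dM}$ cancel exactly, leaving $(2r_N)^{-d}\sum_{k'\in\Gamma_{N+M}}\left(\gamma_{\mu_{N+M}}(k')^{-1}|\hat q(k')|^2+\gamma_{\mu_{N+M}}(k')|\hat p(k')|^2\right)$ in the exponent, which is precisely the claimed expression. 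Since $\gamma_{\mu_{N+M}}$ coincides with $\gamma^{(N+M)}_m$ under the parametrization $\mu_{N+M}^2=\vep_{N+M}^2m^2+2d$, no further adjustment is needed.

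For the statehood claim I would invoke Lemma \ref{lem:realscale} together with Remark \ref{rem:scaledfock}: after rescaling the symplectic form on $\cW_{N+M,L}$ by $\lambda_{N,N+M}^{-1}=2^{-dM}$ the homothety $R^N_{N+M}$ becomes genuinely symplectic, so $\alpha^N_{N+M}$ is a $*$-homomorphism into the rescaled Weyl algebra and the pullback of $\omega_{\mu_{N+M}}$ along it is again a state. Equivalently, and more transparently, one can verify positivity directly through the Segal bound: the quasi-free covariance $\mu_{N+M}$ of $\omega_{\mu_{N+M}}$ satisfies $\tfrac14\sigma_{N+M}(\zeta,\zeta')^2\leq\mu_{N+M}(\zeta,\zeta)\mu_{N+M}(\zeta',\zeta')$, and inserting $\zeta=R^N_{N+M}\xi$, $\zeta'=R^N_{N+M}\eta$ with the homothety identity $\sigma_{N+M}(R^N_{N+M}\xi,R^N_{N+M}\eta)=2^{dM}\sigma_N(\xi,\eta)$ yields $\tfrac14\sigma_N(\xi,\eta)^2\leq 2^{-2dM}\tilde\mu(\xi,\xi)\tilde\mu(\eta,\eta)\leq\tilde\mu(\xi,\xi)\tilde\mu(\eta,\eta)$ for the pulled-back covariance $\tilde\mu$, since $2^{-2dM}\leq 1$. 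I do not anticipate any serious obstacle: the whole argument rests on the homothety factor $2^{dM}$ being $\geq 1$, which makes the pulled-back covariance over-satisfy the Segal bound so that positivity is automatic, while the formula itself is a routine cancellation. The only point demanding genuine care is keeping the volume normalization $(2r_{N+M})^{-d}$ and the field-strength rescaling $2^{dM}$ consistently matched, so that the net exponent is expressed, as in the statement, with the $N$-level prefactor $(2r_N)^{-d}$ but the $(N+M)$-level momentum sum over $\Gamma_{N+M}$.
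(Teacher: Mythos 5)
Your proposal is correct and follows essentially the same route as the paper: the statehood claim rests, exactly as in the paper's one-line proof, on Lemma \ref{lem:realscale} and the Segal criterion of Remark \ref{rem:scaledfock}, with the homothety factor $\lambda_{N,N+M}=2^{dM}\geq 1$ making the pulled-back covariance over-satisfy the positivity bound. The only difference is that you additionally write out the bookkeeping for the explicit Gaussian formula (the cancellation of $2^{dM}$ against $(2r_{N+M})^{-d}=2^{-dM}(2r_N)^{-d}$), which the paper leaves implicit; that computation is correct.
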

\begin{proof}
Given $N,M\in\NN_{0}$, the lattice vacuum $\omega_{\mu_{N+M}}$ defines a scaled Fock state $\omega_{m,0}^{(\lambda_{N,N+M})}$ on $\cW_{N+M,L}$ with rescaled symplectic form $\tilde{\sigma}_{N+M} = \lambda_{N,N+M}^{-1}\sigma_{N+M}$, where $\lambda_{N,N+M} = 2^{dM}$ and $\mu_{N+M}^{2} = \vep_{N+M}^{2}m^{2}+2d$. \hfill$\square$
\end{proof}
\begin{remark}
\label{rem:deltascaling}
In view of the general wavelet scaling map scheme introduced in Section \ref{sec:waveletscaling}, the point-localization renormalization group arises from the extremal choice $h_{0}=2^{\frac{d}{2}}\delta_{n,0}$ as low-pass filter. That this choice should correspond to point-like localized field operators in the scaling limit is suggested by $\hat{\phi}(\!\ \cdot \!\ )=\prod^{\infty}_{N=1}m_{0}(2^{-N}k)=1$ which is the inverse Fourier transform of the delta distribution: $\phi^{(\vep)}_{0}=\vep^{\frac{d}{2}}\delta_{0}$. Therefore, we expect that the local operators in the continuum are only indirectly accessible through a Wightman-type reconstruction from correlation functions.
\end{remark}
In analogy with the results for the scaling limit in terms of block-spin transformation, the sequences $\{\omega_{\mu_{N+M},M} = \omega^{(N)}_{L,M}\}_{M\in\NN_{0}}$ for $N\in\NN_{0}$ do not converge as we can see from \eqref{eq:freegroundlimit} and the fact that $\hat{\phi}^{(\vep_{N})} = \vep_{N}^{\frac{d}{2}}$ according to remark \ref{rem:deltascaling}. But, we can use corollary \ref{cor:realscale} to compute the following two-point functions of the time-zero fields $\Phi_{N}$ and momenta $\Pi_{N}$ at level $N$, i.e. for any $x,y\in\Lambda_{N}$:
\begin{align*}
W^{(N)}_{L,M|\Phi\Phi}(x,y)& \!=\! \omega^{(N)}_{L,M}(\!\alpha^{N}_{N+M}(\!\vep_{N}^{-\frac{1+d}{2}}\!\Phi_{N}(x)\vep_{N}^{-\frac{1+d}{2}}\!\Phi_{N}(y)\!)\!) 
 \!=\! \tfrac{1}{2 (2r_{N})^{d}}\hspace{-0.2cm}\sum_{k\in\Gamma_{N+M}}\hspace{-0.25cm}\vep_{N}^{-1}\gamma_{\mu_{N+M}}(k)^{-1}e^{ik\cdot(x-y)}, \\[0.1cm]
W^{(N)}_{L,M|\Pi\Pi}(x,y)& \!=\! \omega^{(N)}_{L,M}(\!\alpha^{N}_{N+M}(\!\vep_{N}^{\frac{1-d}{2}}\Pi_{N}(x)\vep_{N}^{\frac{1-d}{2}}\Pi_{N}(y)\!)\!)
 \!=\! \tfrac{1}{2(2r_{N})^{d}}\hspace{-0.2cm}\sum_{k\in\Gamma_{N+M}}\hspace{-0.2cm}\vep_{N}\gamma_{\mu_{N+M}}(k)e^{ik\cdot(x-y)} \\[0.1cm]
W^{(N)}_{L,M|\Phi\Pi}(x,y)& \!=\! \omega^{(N)}_{L,M}(\!\alpha^{N}_{N+M}(\!\vep_{N}^{-\frac{1+d}{2}}\Phi_{N}(x)\vep_{N}^{\frac{1-d}{2}}\Pi_{N}(y)\!)\!) 
 \!=\! \tfrac{i}{2 (2r_{N})^{d}}\hspace{-0.2cm}\sum_{k\in\Gamma_{N+M}}\hspace{-0.3cm}e^{ik\cdot(x-y)},
\end{align*}
Since $\Gamma_{N}\stackrel{N\rightarrow\infty}{\longrightarrow}\tfrac{\pi}{L}\ZZ^{d}$ and comparing with the calculations in Section \ref{sec:blockspinlim}, we obtain a result analogous to Theorem \ref{thm:blockspinscale}:
\begin{theorem}
\label{thm:pointlikescale}
Given a sequence of lattice ``masses'', $\{\mu_{N}\}_{N\in\NN_{0}}$, satisfying \eqref{eq:Latmassscaling} for some $m>0$ and the point-localization scaling map, the sequences of two-point functions, $\{W^{(N)}_{L,M}\}_{M\in\NN_{0}}$, associated with the states $\{\omega^{(N)}_{L,M}\}_{M\in\NN_{0}}$ on $\cW_{N,L}$ for all scales $N\in\NN_{0}$, converge in a scale-coherent way to the free (time-zero) two-point functions of mass $m$ of the continuum scalar field $(\cW_{L},\omega_{L})$ (see Section \ref{subsec:cont}).
\end{theorem}
\begin{proof}
The proof works almost identical to that for the block-spin scaling map. The only notable difference is that the convolution with the characteristic function $\vep_{N}^{-d}\chi_{[0,\vep_{N})^{d}}$ is replaced by a convolution with the delta distribution $\delta_{0}$. \hfill$\square$
\end{proof}

\subsection{Momentum-space renormalization: Sharp momentum cutoff}
\label{sec:mom}
Since renormalization group schemes in momentum space have a long tradition \cite{Wilson-71-Renormalization2, FisherTheRenormalizationGroup, WilsonTheRenormalizationGroupKondo, ZinnJustinPhaseTransitionsAnd}, it is sensible to connect these to the general framework of Section \ref{sec:oaren}. While the point-localization scaling map in the previous subsection is induced by the natural inclusion of lattices, $\Lambda_{N}\subset\Lambda_{N'}$, in real space, the momentum-cutoff renormalization group is obtained from the inclusion of lattices, $\Gamma_{N}\subset\Gamma_{N'}$, in momentum space. This is made precise in the following definition and precisely corresponds to the idea of taking a partial trace with respect to high-momentum states when coarse graining from states at a fine scale $N'$ to a coarse scale $N$. Since the momentum cutoff has been used extensively in constructive quantum field theory \cite{GlimmQuantumFieldTheory}, we also discuss the limit of dynamics in this setting.

\begin{definition}
\label{def:momcutoff}
The momentum-cutoff renormalization group $\{\alpha^{N}_{N'}\}_{N<N'}$ is the inductive family of $^*$-homomorphisms defined by the momentum-cutoff scaling map between one-particle Hilbert spaces:
\begin{align*}
R^{N}_{N+1} : \fh_{N,L} & \longrightarrow \fh_{N+1,L},
\end{align*}
where
\begin{align}
\label{eq:momcutoffscale}
R^{N}_{N+1}(\hat{q},\hat{p})(k') & = 2^{\frac{d}{2}}(\hat{q},\hat{p})(k')\chi_{\Gamma_{N}\subset\Gamma_{N+1}}(k'), & N & \in\NN_{0}
\end{align}
and
\begin{align*}
R^{N}_{N'} & = R^{N'-1}_{N'}\circ R^{N'-2}_{N'-1}\circ ... \circ R^{N}_{N+1}, & N & < N'.
\end{align*}
$\Gamma_{N}\subset\Gamma_{N+1}$ is the canonical inclusion as subsets of $\tfrac{\pi}{L}\ZZ^{d}$, and $\chi_{\Gamma_{N}\subset\Gamma_{N+1}}$ is the corresponding characteristic function. The right-hand side of \eqref{eq:momcutoffscale} is to be understood as an extension of $\hat{q},\hat{p}$ by the zero-function on $\Gamma_{N+1}\setminus\Gamma_{N}$.
\end{definition}
As a trivial consequence of the rescaling by $2^{\frac{d}{2}}$, we have that $R^{N}_{N'}$ is symplectic for any $N<N'$:
\begin{align*}
\sigma_{N'}\circ(R^{N}_{N'}\times R^{N}_{N'}) & = \sigma_{N},
\end{align*}
Applying the lattice Fourier transform and its inverse, we find the momentum-cutoff scaling map is given by Fourier interpolation in real space:
\begin{align*}
R^{N}_{N+1}(q,p)(x') & = \vep_{N+1}^{-\frac{d}{2}}(2r_{N+1})^{-d}\sum_{k'\in\Gamma_{N+1}}2^{\frac{d}{2}}(\hat{q},\hat{p})(k')\chi_{\Gamma_{N}\subset\Gamma_{N+1}}(k')e^{ik'\cdot x'} \\
 & = \vep_{N}^{-\frac{d}{2}}(2r_{N})^{-d}\sum_{k\in\Gamma_{N}}(\hat{q},\hat{p})(k)e^{ik\cdot x'} = \sum_{x\in\Lambda_{N}}(q,p)(x)(2r_{N})^{-d}\sum_{k\in\Gamma_{N}}e^{-ik\cdot (x-x')}.
\end{align*}
The locality of the momentum-cutoff scaling map entails non-locality in real space, which is formally reflected by the use of completely delocalized smearing functions to relate the lattice fields and momenta with their continuum analogues, cp. \eqref{eq:blockfield} and \eqref{eq:blockmom}:
\begin{align}
\label{eq:ftblockfield}
\hat{\Phi}_{N}(k) & = \vep_{N}^{-\frac{d}{2}}\Phi(\sum_{x\in\Lambda_{N}}\chi_{x+[0,\vep_{N})^{d}}e^{-ik\cdot x}), & \hat{\Pi}_{N}(k) & = \vep_{N}^{-\frac{d}{2}}\Pi(\sum_{x\in\Lambda_{N}}\chi_{x+[0,\vep_{N})^{d}}e^{-ik\cdot x}).
\end{align}
\begin{remark}
\label{rem:momentumrg1}
We can translate the momentum-cutoff scaling map into a real-space scaling map by the lattice Fourier transform. This real-space scaling map can be cast into a form analogous to the wavelet scaling map \eqref{eq:waveletscale} with non-local and level-dependent coefficients:
\begin{align*}
h^{(N)}_{n} & = \left(2^{\frac{1}{2}}r_{N+1}\right)^{-d}\prod^{d}_{j=1}\frac{\sin\left(\frac{\pi}{2}n_{j}\right)}{\sin\left(\frac{\pi}{2r_{N+1}}n_{j}\right)}e^{\frac{i\pi n_{j}}{2r_{N+1}}}, &\!\! n\in\{-r_{N+1},...,r_{N+1}-1\}^{d}\subset\ZZ^{d}.
\end{align*}
While such a momentum space renormalization group can be used to determine the scaling limit as we show below, the non-locality of the filter makes it unefficient for controlling local operators.
\end{remark}
The scaling limit of the free lattice vacua \eqref{eq:freeground} can be investigated by the momentum-cutoff renormalization group in a manner similar to that of the wavelet method of Section \ref{sec:groundstate}:
\begin{align}
\label{eq:freegroundscalingmom}
\omega^{(N)}_{L,M}(W_{N}(\xi)) & = \omega_{\mu_{N+M},M}(W_{N}(\xi)) = e^{-\frac{1}{4}\left(\left|\left|\gamma_{\mu_{N+M}}^{-1/2}\hat{q}_{N}\right|\right|_{N,L}^{2}+\left|\left|\gamma_{\mu_{N+M}}^{1/2}\hat{p}_{N}\right|\right|_{N,L}^{2}\right)},
\end{align}
where, for fixed $N\in\NN_{0}$, the dispersion relation $\gamma_{\mu_{N+M}}$ is considered as a function on $\Gamma_{N}\subset\Gamma_{N+M}$ by restriction.
Again, choosing the initial states according to the renormalization condition \eqref{eq:Latmassscaling} leads to the continuum dispersion relation in the limit $M\rightarrow\infty$. Moreover, the pointwise convergence to the continuum dispersion relation implies the weak* convergence of \eqref{eq:freegroundscalingmom} to a limit $\omega^{(N)}_{L,\infty}$ which is given by the free vacuum state $\omega_{L}$ of mass $m$ of the continuum scalar field evaluated on Weyl operators with a sharp momentum cut-off at scale $k_{\max}\sim\vep_{N}^{-1}$:
\begin{align}
\label{eq:freegroundlimmom}
\omega^{(N)}_{L,\infty}(W_{N}(\xi)) & = e^{-\frac{1}{4}\left(\left|\left|\gamma_{m}^{-1/2}\hat{q}_{N}\right|\right|_{N,L}^{2}+\left|\left|\gamma_{m}^{1/2}\hat{p}_{N}\right|\right|_{N,L}^{2}\right)}.
\end{align}
Now, it is almost trivial to check that the sequence of limit states, $\{\omega^{(N)}_{L,\infty}\}_{N\in\NN_{0}}$, is projectively consistent for the momentum-cutoff renormalization group,
\begin{align*}
\omega^{(N)}_{L,\infty} & = \omega^{(N')}_{L,\infty}\circ\alpha^{N}_{N'}, & N<N', 
\end{align*}
which implies the existence of the scaling limit $\cW_{\infty, L}$:
\begin{align*}
\varprojlim_{N\in\mathds{N}_{0}}\omega^{(N)}_{L,\infty} & = \omega^{(\infty)}_{L,\infty}.
\end{align*}
In this way, we deduce the following result corresponding to theorems \ref{thm:localalgebras} and \ref{thm:blockspinscale}.
\begin{theorem}
\label{thm:momscale}
The GNS representation of $\cW_{\infty, L}$ given by the scaling limit $\omega^{(\infty)}_{L,\infty}$ constructed via the momentum-cutoff renormalization group is equivalent to the GNS representation of the continuum scalar field with respect to the free vacuum state $\omega_{L}$ restricted to the subalgebra generated by Weyl operators with finite support in momentum space.
\end{theorem}
\begin{proof}
By construction the inner product on the GNS Hilbert space $\cH^{(\infty)}_{L,\infty}$ is given in terms of the inner product of the inductive limit of one-particle Hilbert spaces $\fh_{\infty,L}$ with dense subspace $\bigcup_{N\in\mathds{N}_{0}}\fh_{N,L}$:
\begin{align}
\label{eq:momcutoffGNS}
 \omega^{(\infty)}_{L,\infty}(\alpha^{N}_{\infty}(W_{N}(\xi))^{*}\alpha^{N'}_{\infty}(W_{N'}(\xi'))) & = e^{-\frac{1}{2}\|\xi\|^{2}}e^{-\frac{1}{2}\|\xi'\|^{2}}e^{(\xi,\xi')} = (G(\xi),G(\xi')),
\end{align}
for $N,N'\in\mathds{N}_{0}$ with
\begin{align}
\label{eq:momcutoff1part}
(\xi,\xi') & = \tfrac{2^{-(d+1)}}{\sqrt{r_{N}r_{N'}}^{d}}\hspace{-0.25cm}\sum_{k\in\frac{\pi}{L}\mathds{Z}^{d}}\hspace{-0.15cm}\left(\hspace{-0.05cm}\tfrac{\overline{\hat{q}_{\xi}(k)}\hat{q}_{\xi'}(k)}{\gamma_{m}(k)}\hspace{-0.075cm}+\hspace{-0.075cm}\gamma_{m}(k)\overline{\hat{p}_{\xi}(k)}\hat{p}_{\xi'}(k)\hspace{-0.075cm}+\hspace{-0.075cm}i(\overline{\hat{p}_{\xi}(k)}\hat{q}_{\xi'}(k)\hspace{-0.075cm}-\hspace{-0.075cm}\overline{\hat{q}_{\xi}(k)}\hat{p}_{\xi'}(k)\!)\hspace{-0.075cm}\!\right),
\end{align}
where we slightly abuse notation to denote by $\hat{p}_{\xi}$, $\hat{q}_{\xi}$, $\hat{q}_{\xi'}$ and $\hat{p}_{\xi'}$ also their trivial extensions to the infinite lattice $\Gamma_{\infty}=\tfrac{\pi}{L}\mathds{Z}^{d}$ (cp.~\eqref{eq:normxiN}). On right-hand side of \eqref{eq:momcutoffGNS}, $G(\xi) = \exp(\xi)$ is the Glauber vector or coherent state associated with $\xi\in\bigcup_{N\in\mathds{N}_{0}}\fh_{N,L}$, and it is know that the map $G:\bigcup_{N\in\mathds{N}_{0}}\fh_{N,L}\rightarrow\cH^{(\infty)}_{L,\infty}$ is continuous and injective \cite{HoneggerPhotonsInFock, GuichardetSymmetricHilbertSpaces}. Now, \eqref{eq:momcutoffGNS} also shows that the GNS inner product of $\cH^{(\infty)}_{L,\infty}$ is identical to that of the symmetric Fock space, $\fF_{+}(\fh_{\infty,L})$, over $\fh_{\infty,L}$ endowed with \eqref{eq:momcutoff1part}. Because of this we define an embedding of $\iota_{L}:\fh_{\infty,L}\rightarrow\fh_{L}$ via the inductive-limit structure of $\fh_{\infty,L}$ analogous to \eqref{eq:jLN}. Precisely, we define isometries $\iota^{(N)}_{L}:\fh_{N,L}\rightarrow\fh_{L}$ for all $N\in\NN_{0}$:
\begin{align*}
\iota^{(N)}_{L}(\hat{q},\hat{p}) & = \vep_{N}^{\frac{d}{2}}(\hat{q},\hat{p})\chi_{\Gamma_{N}\subset\Gamma_{\infty}}.
\end{align*}
Since we have the identity $\iota^{(N')}_{L}\circ R^{N}_{N'} = \iota^{(N)}_{L}$ for any $N<N'$, we obtain by standard reasoning the sought after $\iota_{L}$. From the definition of $\iota^{(N)}_{L}$ its is clear that $\iota_{L}(\fh_{\infty,L})$ consist of the closure of the subspace of Fourier series with finite support. Therefore, $\iota_{L}(\fh_{\infty,L}) = \fh_{L}$ and we have a unitary equivalence of $U_{L}:\fF_{+}(\fh_{\infty,L})\rightarrow\fF_{+}(\fh_{L})$ by functoriality. Similarly, functoriality and the isometricity of $\iota^{(N)}_{L}$ for each $N\in\NN_{0}$, implies the existence of an injective $^*$-homomorphism $\Phi_{L}:\cW_{\infty, L}\rightarrow\cW_{L}$ such that 
\begin{align*}
\pi^{(\infty)}_{L,\infty}(W) & = U_{L}^{*}\pi_{\omega_{L}}(\Phi_{L}(W)) U_{L}, & \forall W&\in\cW_{\infty, L}.
\end{align*}
\hfill$\square$
\end{proof}
To conclude our discussion about the momentum-cutoff scaling map, we also take a look at the limit of dynamics. We observe that this limit is considerably simpler with respect to this scheme compared to that in the wavelet scheme (see Section \ref{sec:limdyn}), i.e. the limit of dynamics exist even in the sense of Hamiltonians as depicted in figure \ref{fig:trianglerg}. \\[0.1cm]
At the level of the one-particle space $\fh_{N,L}\cong\ltwo(\Gamma_{N},(2r_{N})^{-d}\mu_{\Gamma_{N}})$, the action of $H^{(N)}_{L,0}$ is (essentially) given by multiplication with the dispersion relation $\gamma_{\mu_{N}}$ (cp.~remark \ref{rem:freedynfock}). Since the momentum-cutoff scaling map is (up to rescaling) induced by the inclusion of lattices $\Gamma_{N}\subset\Gamma_{N'}$ for $N<N'$, the action of $H^{(N')}_{L,0}$ restricts consistently to the image of $\cH_{N,L}$ inside $\cH_{N',L}$ for $N<N'$ because the aforesaid multiplication preserves the support in momentum space. This observation leads to the following result.
\begin{theorem}
\label{thm:limdynmom}
Given the momentum-cutoff renormalization group $\{\alpha^{N}_{N'}\}_{N<N'}$ and a sequence of lattice ``masses'' $\{\mu_{N}\}_{N\in\NN_{0}}$ satisfying condition \eqref{eq:Latmassscaling} with a physical mass $m>0$, we obtain a sequence of (self-adjoint) renormalized Hamiltonians $\{H^{(N)}_{L,M}\}_{M\in\NN_{0}}$ on $\cH_{N,L}$ for each $N\in\NN_{0}$. Moreover, each sequence converges in the norm resolvent sense,
\begin{align*}
H^{(N)}_{L,M} & \stackrel{\textup{nR}}{\longrightarrow} H^{(N)}_{L,\infty},
\end{align*}
to a self-adjoint operator $H^{(N)}_{L,\infty}$ and the sequence $\{H^{(N)}_{L,\infty}\}_{N\in\NN_{0}}$ defines a self-adjoint inductive limit,
\begin{align*}
\varinjlim_{N\in\NN_{0}}H^{(N)}_{L,\infty} & = H^{(\infty)}_{L,\infty},
\end{align*}
on the GNS Hilbert space $\cH^{(\infty)}_{L,\infty}$. The Hamiltonian $H^{(\infty)}_{L,\infty}$ of the scaling limit $\omega^{(\infty)}_{L,\infty}$ is identified with the free Hamiltonian $H_{L}$ of mass $m$ on $\cH_{L}$ by theorem \ref{thm:momscale}.
\end{theorem}
\begin{proof}
We realize the GNS Hilbert space $\cH^{(\infty)}_{L,\infty}$ as an inductive limit of a sequence of the Hilbert spaces $\{\cH^{(\infty)}_{N,L}\}_{N\in\NN_{0}}$ by applying the GNS construction with respect to the restriction $\omega^{(\infty)}_{L,\infty}$ to each $\cW_{N,L}$. By construction, the inductive system of isometries $V^{N}_{N'}:\cH^{(\infty)}_{N,L}\rightarrow\cH^{(\infty)}_{N',L}$, $N<N'$, is induced by the momentum-cutoff renormalization group. Applying the GNS construction also to the renormalized states \eqref{eq:freegroundscalingmom}, we obtain another sequence of Hilbert spaces, $\{\cH^{(M)}_{N,L}\}_{M\in\NN_{0}}$, for each $N\in\NN_{0}$. By von Neumann's uniqueness theorem the representations of $\cW_{N,L}$ on all $\cH^{(M)}_{N,L}$ and $\cH^{(\infty)}_{N,L}$ are unitarily equivalent to the representation on $\cH_{N,L}$, because all the GNS Hilbert spaces are symmetric Fock space over the finite-dimensional vector space $\ltwo(\Gamma_{N},(2r_{N})^{-d}\mu_{\Gamma_{N}})$ with inner products given in terms of either some $\gamma_{\mu_{N+M}}$ or $\gamma_{m}$ which are bounded above and strictly bounded away from zero on $\Gamma_{N}$\footnote{Alternatively, it also easy to check that the Bogoliubov transformations which realize the equivalence are unitarily implemented.}. Therefore, we implicitly identify all the GNS Hilbert spaces with $\cH_{N,L}$ and define the renormalized Hamiltonians $H^{(N)}_{L,M}$ by restriction of $H^{(N+M)}_{L,0}$ to $\cH^{(M)}_{N,L}$ inside $\cH_{N+M,L}$, i.e.:
\begin{align*}
H^{(N)}_{L,M} & = H^{(N+M)}_{L,0|\cH_{N,L}\subset\cH_{N+M,L}}.
\end{align*}
Each $H^{(N)}_{L,M}$ is self-adjoint as it arises as the second quantization of a self-adjoint multiplication operator (with $\gamma_{\mu_{N+M}}$) on $\fh_{N,L}$. Analogously, we define $H^{(N)}_{L,\infty}$ as the self-adjoint second quantization of the self-adjoint multiplication operator by $\gamma_{m}$ on $\fh_{N,L}$. The uniform convergence $\gamma_{\mu_{N+M}}\rightarrow\gamma_{m}$ on $\Gamma_{N}$ implies the convergence $H^{(N)}_{L,M}\rightarrow H^{(N)}_{L,\infty}$ in the norm resolvent sense, cp.~\cite{KatoPerturbationTheoryFor}. Using the identification of $\cH_{N,L}$ with $\cH^{(\infty)}_{N,L}$, we observe that the sequence $\{H^{(N)}_{L,\infty}\}_{N\in\NN_{0}}$ satisfies the compatibility condition:
\begin{align*}
H^{(N')}_{L,\infty}V^{N}_{N'} & = V^{N}_{N'}H^{(N)}_{L,\infty}, & \forall N&<N'.
\end{align*}
Moreover, the dense subspaces $\cD^{(\infty)}_{N,L}\subset\cH^{(\infty)}_{N,L}$, $N\in\NN_{0}$, of vectors with finite particle number are cores for each $H^{(N)}_{L,\infty}$ and satisfy $V^{N}_{N'}\cD^{(\infty)}_{N,L}\subset\cD^{(\infty)}_{N',L}$. Therefore, the inductive-limit operator $H^{(\infty)}_{L,\infty}$ exists and is essentially self-adjoint on the algebraic inductive limit of the subspaces $\cD^{(\infty)}_{N,L}$ which is dense in $\cH^{(\infty)}_{L,\infty}$. It is easy to check that this operators agrees with the free Hamiltonian $H_{L}$ of mass $m$ on $\cH_{L}$. \hfill$\square$
\end{proof}

\subsection{UV vs. IR scaling limits: Momentum transfer to small scales }
\label{sec:irvsuv}
The momentum space setting of Section \ref{sec:mom} is suitable to further discuss distinctive aspects between ultraviolet and infrared scaling limits raised in Section \ref{sec:intro}. To be rather specific, we compare our setting, that primarily deals with the construction of continuum models via infrared scaling limits of lattice models, and the scaling algebra approach by Buchholz and Verch \cite{BuchholzScalingAlgebrasAnd1, BuchholzScalingAlgebrasAnd2}, which aims at the construction of ultraviolet scaling limits of continuum field theories. To this end we modify the momentum-cutoff renormalization group of Definition \ref{def:momcutoff}  through a shift of the support of one-particle vectors $\hat{q},\hat{p}\in\fh_{N,L}$ towards higher momentum (apart from the zero mode $k=0$) via the canonical inclusion of dual lattices $2\Gamma_{N}\subset\Gamma_{N'}$ for $N<N'$.
The real-space version of the resulting renormalization map is
\begin{align*}
R^{N}_{N+1}(q,p)(x')
 & = (2^{\frac{1}{2}} q, 2^{-\frac{1}{2}} p)(2x').
\end{align*}
The latter can be directly compared with the scaling map (up to an additional factor of $2^{\frac{d}{2}}$ due to periodic boundary conditions on the lattice),
\begin{align*}
R^{N}_{N+1}(f)(x') & = 2^{\frac{d+1}{2}}\Re(f)(2x') + i 2^{\frac{d-1}{2}}\Im(f)(2x'),
\end{align*}
used in \cite[Eq. (2.7)]{BuchholzScalingAlgebrasAnd2} for complex one-particle vectors $f\in\cS(\dR^{d})$ in the continuum subject to the symplectic form \eqref{eq:sympcontinf}.  
One finds that the action of the momentum-transfer renormalization group identifies the renormalized state $\omega^{(N)}_{L,M} = \omega^{(N)}_{\mu_{N+M},M}$ at scale $N$ after $M$ steps with the lattice vacuum $\omega^{(N)}_{\mu_{N+M}}$ of lattice mass $\mu_{N+M}$ at scale $N$. Thus, assigning a physical mass $m>0$ to the lattice mass $\mu_{N}$ uniformly among all scales $N$ according to \eqref{eq:freegroundmass}, this entails (at least for $d>1$):
\begin{align}
\label{eq:momentumtransfermasslimit}
\omega^{(N)}_{L,M}(W_{N}(\xi)) & = \omega^{(N)}_{2^{-M}m,0}(W_{N}(\xi)).
\end{align}
In this respect, the relation between lattice mass $\mu_{N}$ and physical mass $m$, the momentum-transfer renormalization group universally\footnote{This will hold more generally as long as $\lim_{M\rightarrow\infty}2^{-2M}\vep_{N+M}^{-2}(\mu_{N+M}^{2}-2d)=0$ for all $N\in\NN_{0}$.} leads to the family of massless lattice vacua $\{\omega^{(N)}_{0,0}\}_{N\in\NN_{0}}$ as the projective family defining the scaling limit, i.e. \eqref{eq:freegroundmass} in the limit $m\rightarrow0+$. At first sight, this appears to be similar to the observation made in \cite{BuchholzScalingAlgebrasAnd2} for ultraviolet scaling limits of massive continuum free fields (in $d>1$), and it is tempting to assume that the scaling limit of \eqref{eq:momentumtransfermasslimit} equals the vacuum of the massless continuum free field in finite volume $\omega_{L}$. But, this is not as evident as for the wavelet renormalization group because the relation between the scaling-limit algebra $\cW_{\infty, L}$ and the continuum algebra $\cW_{L}$ remains rather obscure. \\[0.1cm]
In summary, we conclude that while there are technical similarities between the scaling algebra approach of \cite{BuchholzScalingAlgebrasAnd1, BuchholzScalingAlgebrasAnd2}, the latter should be distinguished from the approach detailed here. Moreover, although it is possible to use the general method of Section \ref{sec:oaren} to implement ultraviolet scaling limits as explicated above, it is conceptually unlikely to be meaningful in the context of scaling limits of lattice models as in Section \ref{sec:continuumlimit}, where the goal is to obtain an infrared scaling limit.

\subsection{Connections with multi-scale entanglement renormalization (MERA)}
\label{sec:MERA}
In view of Example \ref{ex:twistedpartialtrace}, we note that there is a natural decomposition of the wavelet scaling map \eqref{eq:waveletscale} into a tensor-product embedding of the form $W_{N}(\xi)\mapsto W_{N}(\xi)\otimes\mathds{1}_{\cH_{N+1|N}}$ and the adjoint action by a unitary $U_{N+1}\in\cU(\cH_{N+1,L})$:
\begin{align}
\label{eq:tensorconj}
\alpha^{N}_{N+1}(W_{N}(\xi)) & = \Ad_{U_{N+1}}(W_{N}(\xi)\otimes\mathds{1}_{\cH_{N+1|N}}).
\end{align}
where $\cH_{N+1|N} = \fF_{+}(\ltwo(\Lambda_{N+1}\setminus\Lambda_{N}))$ is the natural tensor-product complement of $\cH_{N,L}$ inside $\cH_{N+1,L}$. This follows from the decomposition of $R^{N}_{N+1}$ into a symplectic automorphism of $\fh_{N+1,L}$ and the embedding $\ltwo(\Lambda_{N})\rightarrow\ltwo(\Lambda_{N+1})$ defined by the inclusion $\Lambda_{N}\subset\Lambda_{N+1}$ as well as (multiplicative) second quantization:
\begin{align*}
\alpha^{N}_{N+1} & = \fF_{+}(R^{N}_{N+1}) = \fF_{+}(S_{N+1})\circ\fF_{+}(\ltwo(\Lambda_{N})\rightarrow\ltwo(\Lambda_{N+1})) = \Ad_{U_{N+1}}\circ(\cW_{N,L}\rightarrow\cW_{N+1,L}).
\end{align*}
The symplectic automorphism $S_{N+1}$ is determined by the scaling equation \eqref{eq:scalingeq},
\begin{align}
\label{eq:symplecticrot}
S_{N+1}(x',y') & = \sum_{n\in\ZZ^{d}}h_{n}\delta_{\epsilon_{N+1}n,x'-y'},
\end{align}
as a kernel relative to the standard basis of $\ltwo(\Lambda_{N+1})$.\\[0.1cm]
From this we can make three basic observations, see also \cite{BrothierConstructionsOfConformal, BrothierAnOperatorAlgebraic}:
\begin{itemize}
 \item The $C^{*}$-inductive limit $\cW_{\infty, L}$ is $^*$-isomorphic to an infinite (minimal) tensor-product of the basic algebra $\cW(L^{2}(\RR))$. This follows from an inductive construction using the unitary conjugacy of the renormalization group $\{\alpha^{N}_{N'}\}_{N'>N}$ with the tensor-product embedding, cp. \eqref{eq:tensorconj}.
 \item A projectively-consistent family of states $\{\omega^{(N)}_{L,\infty}\}_{N\in\NN_{0}}$ or alternatively a state $\omega^{(\infty)}_{L,\infty}$ as in the definition of the scaling limit \eqref{eq:projstaterg} on $\cW_{\infty, L}$ is analogous to a MERA (see e.g.\!
 \cite{VidalAClassOf, EvenblyTNRMERA}): The GNS vectors $\{\Omega_{N}\}_{N\in\NN_{0}}$ given by the projective family or the corresponding restrictions of $\omega^{(\infty)}_{L,\infty}$ satisfy, see Section \ref{sec:indlim}:
\begin{align}
\label{eq:gnsmera}
V^{N}_{N'}\Omega_{N} & = \Omega_{N'}, & N&<N',
\end{align}
where $V^{N}_{N'}$ is the GNS (partial) isometry induced by $\alpha^{N}_{N'}$. The connection of the latter with a MERA follows from the observation that a partial isometry compatible with \eqref{eq:tensorconj} is given by:
\begin{align}
\label{eq:compiso}
\psi_{N} & \mapsto U_{N+1}(\psi_{N}\otimes\Omega^{(0)}_{N+1|N}),
\end{align}
leading to a representation of $\cW_{\infty, L}$. Here, $\Omega^{(0)}_{N+1|N}$ is the Fock vacuum of the space
$\fF_{+}(\ltwo(\Lambda_{N+1}\setminus\Lambda_{N}))$, and the MERA is given by the partial isometry $I^{N}_{N+1}:\fF_{+}(\ltwo(\Lambda_{N})\rightarrow\fF_{+}(\ltwo(\Lambda_{N+1}))$, $I^{N}_{N+1}(\psi_{N})=\psi_{N}\otimes\Omega^{(0)}_{N+1|N} $, and the unitary disentangler $U_{N+1}$. In this sense \eqref{eq:gnsmera} generalizes \eqref{eq:compiso} subsuming $U_{N+1}$ and $I^{N}_{N+1}$ by the partial isometry $V^{N}_{N'}$ coming from the GNS construction.
 \item Each $\alpha^{N}_{N+1}$ extends to a normal, unital $^*$-morphism from $B(\cH_{N,L})$ to $B(\cH_{N+1,L})$, which allows us to consider the inductive limit $\varinjlim_{N\in\NN_{0}}B(\cH_{N,L}) = B_{\infty,L}$. Thus, as long as we are interested in properties pertaining to the scaling limit $\omega^{(\infty)}_{L,\infty}$ and the weak closure of the associated GNS representation of a family of regular initial states $\omega^{(N)}_{L,0}$, we can substitute $B_{\infty,L}$ for $\cW_{\infty,L}$ in the analysis.
\end{itemize}

\section{Outlook}
\label{sec:out}
To conclude the article, we discuss some future directions.

\subsection{Other continuum models and locality from lattices}
\label{sec:other}
First of all, we considered only massive free fields in this paper.
One difference that appears with massless fields is the zero mode:
the inverse of the dispersion relation is not defined at $k=0$, and this must be avoided somehow (possibly by excluding the zero-mode before taking the infinite volume limit or by enforcing anti-periodic (twisted) boundary conditions).
In addition, massless free fields with different helicity might require different symplectic structures \cite{LMPR19}.

In view of the classical construction of interacting scalar fields in 1+1-dimensional spacetimes \cite{GlimmQuantumFieldTheory}, we point out that the general concept of limit dynamics \eqref{eq:limdyn} together with the momentum-cutoff renormalization group (see Section \ref{sec:mom}) can be used to recover the existence of interacting dynamics on the continuum time-zero net of the scalar field with respect to free vacuum. Clearly, this observation should be generalized in two aspects: First, it should be possible to recover the interacting dynamics using the wavelet renormalization group as well, which would be advantageous from the point of view of locality. Second, it would be conceptually preferred (and necessary for $d>1$) to work with a scaling limit of the interacting lattice ground states instead of the free lattice ground states as the use of the latter is only possible because of the local Fock property in two-dimensional spacetime \cite{GlimmJaffe3}.

If we consider more general models than lattice free fields, we hope to obtain (interacting) quantum field theories by the operator-algebraic renormalization procedure explained in Section \ref{sec:oaren}. A general justification of this expectation is due to the result on finite propagation speed of the scaling limit dynamics via Lieb-Robinson bounds for the approximate dynamics on the lattice, cp.~\cite{OsborneContinuumLimitsOf, BrothierAnOperatorAlgebraic}. The main reason why a finite propagation speed can be deduced from these bounds in the scaling limit of the lattice free field is due to the correct scaling of the Lieb-Robinson velocity (with respect to the lattice spacing $\vep$) and a subexponential bound on the field-strength renormalization, i.e. the rescaling of the one-particle vectors $q,p$ with respect to the lattice spacing $\vep$. Therefore, if said scaling is preserved in the presence of interactions on the lattice, a finite propagation speed in the scaling limit will be attained. Explicit Lieb-Robinson bounds for discretized scalar fields possibly with interaction are available in \cite{NachtergaeleLRBoundsHarmonic}, see also \cite{NachtergaeleLiebRobinsonBounds}.\\[0.1cm]
Thus, as also suggested in \cite{BrothierAnOperatorAlgebraic}, this might provide a way to approximate local spacetime algebras at the lattice level by adapting the definition of these algebras in the Hamiltonian setting \eqref{eq:causalnet} to the discretized setting, cp.~\cite[Chapter IV, Definition 2.8]{GlimmBosonQuantumField}.:
\begin{align*}
\cA_{N,L}(\cO) & = \left(\bigcup_{t\in\RR}\eta^{(N)}_{L|t}\left(\cW_{N,L}(\cO(t)\cap\Lambda_{N})\right)\right)'', & N&\in\NN_{0}.
\end{align*}
$\cO\subset\RR\times\mathds{T}^{d}_{L}$ should be a suitable open bounded subset with time slices $\cO(t)$, $t\in\RR$. $\cW_{N,L}(\cO(t)\cap\Lambda_{N})\subset\cW_{N,L}$ is the Weyl subalgebra of the sublattice $\cO(t)\cap\Lambda_{N}$ (extension by $\mathds{1}$ outside). $\eta^{(N)}_{L}:\RR\curvearrowright\cW_{N,L}$ is the approximate dynamics at level $N\in\mathds{N}_{0}$. 

As a more general remark, we notice that for general interacting models there are two sources of non-uniqueness of the scaling limit states $\omega^{(\infty)}_\infty$ in~\eqref{eq:projstaterg}, both of them of physical relevance. The first one comes from the different possible choices of the sequence of initial states $\{\omega^{(N)}_0\}_{N \in \NN_{0}}$, which reflects the freedom in the choice of renormalization conditions for the parameters of the model (e.g., different choices of the sequence $\{\mu_N\}_{N \in \NN_{0}}$ in~\eqref{eq:Latmassscaling} lead to different physical masses). The second one is due to the possible multiplicity of weak* limit points of the sequences of renormalized states $\{\omega^{(N)}_M\}_{M \in \NN_{0}}$. As already pointed out after Prop.~\ref{prop:weakstarconv}, one gets a projective system of renormalized states $\{\omega^{(N)}_\infty\}_{N \in \NN_{0}}$ in this case too, which corresponds to the existence of inequivalent vacua of the continuum theory.

\subsection{Extension to fermions}
\label{sec:fermi}
Fermionic systems can by treated by the renormalization group scheme of Section \ref{sec:oaren} in a similar manner as bosonic systems, cp.~\cite{EvenblyEntanglementRenormalizationAnd, HaegemanRigorousFreeFermion, WitteveenQuantumCircuitApproximations} for a related construction in the context of MERA. Since a detailed discussion of this extension will be presented in a forthcoming article by one of the authors on scaling limit of lattice fermions and the approximation of conformal symmetry \cite{OsborneScalingLimitFermions}, we only state some of the essential structures: As in Section \ref{sec:latscalar} the kinematical data of the quantum lattice models is derived from one-particle spaces by second quantization,
\begin{align*}
\fh_{N,L} & = \ltwo(\Lambda_{N})\otimes\CC^{s}, & \fA_{N} & = \fA_{\CAR}(\fh_{N,L}), & \cH_{N,L} & = \fF_{-}(\fh_{N,L}),
\end{align*}
where $s$ denotes the number of spinor components, $\fA_{\CAR}(\fh_{N,L})$ is the $C^{*}$-algebra of the canonical anti-commutation relations, and $\fF_{-}(\fh_{N,L})$ the anti-symmetric Fock space. The wavelet renormalization group is then adapted to the one-particle level by (cf.~Section \ref{sec:waveletscaling}):
\begin{align*}
R^{N}_{N+1} : \fh_{N,L} & \longrightarrow \fh_{N+1,L}, & R^{N}_{N+1}(\xi) & = \sum_{x\in\Lambda_{N}}\xi(x)\sum_{n\in\ZZ^{d}}h_{n}\delta^{(N+1)}_{x+\vep_{N}n}, & \xi & \in\fh_{N,L}.
\end{align*}
In analogy with the free scalar quantum field, the simplest situations that allow for an implementation of the scaling limit procedure are those with initial families of states, $\omega^{(N)}_{L,0}$, of quasi-free type \cite{EvansQuantumSymmetriesOn} (covering the case of ground states for free lattice fermions):
\begin{align}
\label{eq:twopointF}
\omega^{(N)}_{L,0}(\psi(\xi)\psi^{\dagger}(\eta)) & = \langle\xi,T^{(N)}_{L,0}\eta\rangle_{N,L}, & \xi,\eta&\in\fh_{N,L},
\end{align}
for operators, $T^{(N)}_{L,0}:\fh\rightarrow\fh$, with $0\leq T^{(N)}_{L,0}\leq 1$.

\subsection{Entropy and Type I approximation of local algebras}
\label{sec:entropy}
In a continuum quantum field theory, the local algebras are type III factors.
Although the the entanglement entropy in quantum field theory is a focus of attention in recent years \cite{NishiokaEntanglementEntropy},
its direct generalization is infinite for any state on a type $\mathrm{III}_1$ factor \cite{OhyaPetzQuantumEntropy}.
As an alternative, one can take a sequence of type I factors inside a local algebra and look at the divergence rate of the entanglement entropy \cite{LongovonNeumannEntropy}. The lattice algebras in the present construction will give one such sequence.
In addition, as the Bisognano-Wichmann property (the Lorentz boosts are given by the modular group of the wedge algebra with respect
to the vacuum \cite{BisognanoWichmann76}) follows in the continuum from general structural assumptions (cf.\! \cite{Morinelli18, DybalskiMorinelli20}),
it would be interesting to understand what the modular groups of the lattice algebras are, and how the Bisognano-Wichmann property in the continuum
is restored \cite{LongoSimpleProof}.

In this regard, the construction of diffeomorphism symmetry in two-dimensional conformal field theory (2d CFT)
through representations of Thompson's group $T$ \cite{JonesANoGo} would require a substantial modification, because the diffeomorphism symmetry
in any 2d CFT does not extend to the Thompson's group $T$ while preserving the covariance
(because such an action cannot be locally normal when the group element has discontinous derivative \cite[Theorem 3.6]{DelVecchioSolitonsAndNonsmooth}).


\end{document}